\numberwithin{equation}{section}
\newcommand{\un}{{\mathbb I}}
\newcommand{\ra}{\rightarrow}
\newcommand{\tr}{{\rm tr}}
\newcommand{\spa}{{\rm span}}
\newcommand{\ran}{\rm Ran\, }
\renewcommand{\ker}{{\rm Ker\, }}
\newcommand{\bra}{\langle}
\newcommand{\ket}{\rangle}
\newcommand{\be}{\begin{equation}}
\newcommand{\ee}{\end{equation}}
\newcommand{\bea}{\begin{eqnarray}}
\newcommand{\eea}{\end{eqnarray}}
\newcommand{\eps}{\varepsilon}
\newcommand{\ch}{{\cal H}}
\newcommand{\e}{{\rm e}}
\newcommand{\ffi}{\varphi}
\newcommand{\sign}{\mbox{sign}}
\newcommand{\ode}{{\cal O}}
\newcommand{\grintl}{[\kern-.18em [}
\newcommand{\grintr}{]\kern-.18em ]}
\newcounter{resultcounter}[section]
\newtheorem{thm}[resultcounter]{Theorem}
\newtheorem{lem}[resultcounter]{Lemma}
\newtheorem{prop}[resultcounter]{Proposition}
\newtheorem{cor}[resultcounter]{Corollary}
\newtheorem{definition}[resultcounter]{Definition}
\newtheorem{rem}[resultcounter]{Remark}
\def\bed{\begin{definition}}
\def\eed{\end{definition}}
\def\proof{\noindent{\bf Proof}\  }
 \def\cB{{\cal B}} 
\def\cD{{\cal D}}  
\def\cG{{\cal G}} \def\cH{{\cal H}} 
 \def\cK{{\cal K}} 
\def\cM{{\cal M}}  
  \def\cR{{\cal R}}
\def\cS{{\cal S}} \def\cT{{\cal T}} \def\cU{{\cal U}}
\def\cV{{\cal V}}  
\newcommand{\R}{{\mathbb R}}
\newcommand{\N}{{\mathbb N}}
\newcommand{\Q}{{\mathbb Q}}
\newcommand{\C}{{\mathbb C}}
\newcommand{\Z}{{\mathbb Z}}
\renewcommand{\P}{{\mathbb P}}
\def\qed{\hfill $\Box$\medskip}
\newcommand{\vertiii}[1]{{\left\vert\kern-0.25ex\left\vert\kern-0.25ex\left\vert #1 
    \right\vert\kern-0.25ex\right\vert\kern-0.25ex\right\vert}}
\newcommand{\diag}{{\rm Diag}}
\newcommand{\rank}{{\rm Rank\,}}
\begin{document}

\title{Unitary and Open Scattering Quantum Walks on Graphs}

\author{Alain Joye\footnote{ Univ. Grenoble Alpes, CNRS, Institut Fourier, F-38000 Grenoble, France}}

\date{ }

\maketitle

\abstract{

We consider a class of Unitary Quantum Walks on arbitrary graphs, parameterized by a family of scattering matrices. These Scattering Quantum Walks model the discrete dynamics of a system on the graph's edges, where a scattering process at the vertices is governed by the scattering matrices assigned to each vertex. 
We show the Scattering Quantum Walks encompass several known Quantum Walks.
We further introduce two classes of Scattering Open Quantum Walks on arbitrary graphs, also parameterized by scattering matrices: one defined on the edges, the other on the vertices of the graph. We show these walks give rise to proper quantum channels and describe their main spectral and dynamical properties, relating them to naturally associated classical Markov chains.

\section{Introduction}

Quantum Walks (QWs) defined on graphs or lattices are popular discrete time linear dynamical systems which are used in  different scientific fields under different guises and for various applications, see {\it e.g.} the books and reviews \cite{Ke, V-A, P, J4, ABJ3, GZ, QMS}, and references therein. 
For example, in quantum optics or condensed matter physics, QWs provide a versatile tool to approximate the complex dynamics of certain systems in some physically relevant regimes, \cite{KFC+ , ZKG+, CC, WM, TMT}. QWs are  used in the field of quantum computing and information processing, notably as quantum search algorithms, see  \cite{C, Sa, P , KMOR}. From a probabilistic perspective, QWs are considered as non commutative extensions of classical random walks and Markov chains, see {\it e.g.} \cite{Sz, Kon, G, APSS1}.

The success and diversity of the different types of QWs have sparked the interest of the mathematical community which has rigorously addressed several of their properties, such as the spectral and dynamical properties of certain QWs, \cite{Ko1, BHJ, HKSS2, MS-B, JMa, HS, AG-PS, HSS, RST1, RST2, T, RT, CJWW}, Anderson (de-)localization in various models of random and quasiperiodic QWs \cite{Ko, J2, HJS1, HJS2, J3, JM, ASW, ABJ1, ABJ2, HJ, BM, CFGW, CFO, CFL+}, many-body systems of QWs \cite{AAM+, HJ2, R, AJR},  or the topological properties of QWs \cite{ABJ4, DFT, CGG+, SS-B, ST, D, ABJ5}, to mention a few.

In this paper, we introduce and study the notion of Scattering Quantum Walks (SQWs) on arbitrary graphs, a class of QWs broad enough to encompass many QWs present in the literature, and which moreover admits both unitary and open versions. Unitary SQWs are used to address closed quantum systems, while  open SQWs are relevant in the analysis of open quantum systems and are known as quantum channels.

Let us describe informally the definition of  a unitary SQW on a graph $G$. It is a unitary operator on a Hilbert space associated with $G$, which models the linear dynamics of a quantum system, or quantum walker, over a single time step. For a simple graph $G  = (V, E)$ with vertex set $V$ and undirected edge set $E$, { we consider that each edge gives rise to two directed edges with opposite directions, and denote by $D$ the set of directed edges.} We attach to each directed edge a canonical basis vector of a Hilbert space $l^2(D)$. The SQW on $l^2(D)$ is parameterized by a family of scattering matrices, each assigned to a vertex in $V$. Informally, we say that the quantum walker lives on the directed edges of the graph, as  its quantum state is represented by a normalized vector of  $l^2(D)$.

At each time step, the state of the quantum walker on a given edge with a specific direction undergoes a scattering process at the vertex towards which the edge points, say $x\in V$. This process is governed by the scattering matrix $S(x)$ assigned to $x$. The walker's state is then transformed into the corresponding linear combination of states on adjacent edges, with directions now pointing away from the vertex $x$ where scattering took place, see Figure \ref{fig:scatpro}. 

\begin{figure}[h]

\hspace{2.5cm}\begin{tikzpicture}
    \node[draw, circle, minimum size=0.8cm] (v1) at (-1, 1) {};
    \node[draw, circle, minimum size=0.8cm] (v2) at (0, 0) {$x$}; 
    \node[draw, circle, minimum size=0.8cm] (v3) at (1, 1) {};
    \node[draw, circle, minimum size=0.8cm] (v4) at (1, -1) {};
    \node[draw, circle, minimum size=0.8cm] (v5) at (-1, -1) {};
    
    \draw[->, very thick] (v1) -- (v2); 
    \draw[-] (v2) -- node[below] {\phantom{xxxx}$S(x)$} (v3); 
    \draw[-] (v2) -- (v4);
    \draw[-] (v2) -- (v5);
    
    \draw[-] (v1) -- ++(-0.8, 0.8); 
    \draw[-] (v1) -- ++(-1.131, 0);   
    \draw[-] (v3) -- ++(0.8, 0.8);  
     \draw[-] (v3) -- ++(1.131, 0);    
    \draw[-] (v4) -- ++(0.8, -0.8); 
    \draw[-] (v4) -- ++(1.131, 0);    
     \draw[-] (v4) -- ++(0, -1.131);    
    \draw[-] (v5) -- ++(-0.8, -0.8);


\hspace{2.5cm}{\huge$\leadsto$}

\hspace{2.5cm}

    \node[draw, circle, minimum size=0.8cm] (v1) at (-1, 1) {};
    \node[draw, circle, minimum size=0.8cm] (v2) at (0, 0) {$x$}; 
    \node[draw, circle, minimum size=0.8cm] (v3) at (1, 1) {};
    \node[draw, circle, minimum size=0.8cm] (v4) at (1, -1) {};
    \node[draw, circle, minimum size=0.8cm] (v5) at (-1, -1) {};
    
    \draw[->, very thick] (v2) -- (v1); 
     \draw[->, very thick] (v2) -- node[below] {\phantom{xxxx}$S(x)$} (v3);
    \draw[->, very thick] (v2) -- (v4); 
    \draw[->, very thick] (v2) -- (v5); 
    
     \draw[-] (v1) -- ++(-0.8, 0.8); 
    \draw[-] (v1) -- ++(-1.131, 0);   
    \draw[-] (v3) -- ++(0.8, 0.8);  
     \draw[-] (v3) -- ++(1.131, 0);    
    \draw[-] (v4) -- ++(0.8, -0.8); 
    \draw[-] (v4) -- ++(1.131, 0);    
     \draw[-] (v4) -- ++(0, -1.131);    
    \draw[-] (v5) -- ++(-0.8, -0.8);

\end{tikzpicture}
\caption{\small The scattering process at work in SQWs}
\label{fig:scatpro}

\end{figure}
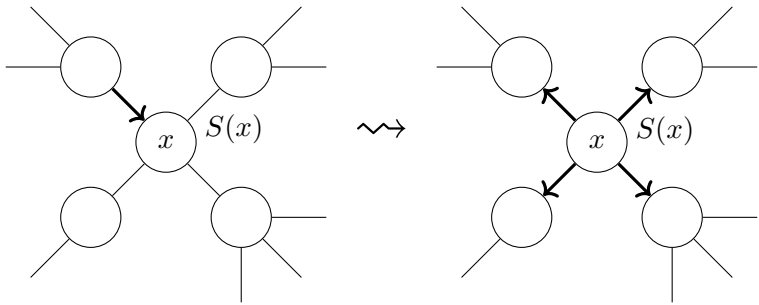

The freedom in choosing the directed graph and the unitary scattering matrices at its vertices explains why SQWs encompass a significant range of unitary QWs considered in the literature. We explicitly show that Coined QWs, the Chalker-Coddington model, and a slight generalization of the Grover Walk belong to this class. Following \cite{HKSS2, HSS} we revisit the spectral properties of the latter by relating them to those of a self-adjoint operator defined on $l^2(V)$, the Hilbert space associated  the graph's vertices, through natural boundary operators. We also study SQWs on the star-graph. QWs constructed in a similar way for specific cases have appeared in the physics litterature, see {\it e.g.} \cite{FH1, FH2}.

Let us describe informally the construction of an open SQW on the edges of $G$ now. An open SQW acts on density matrices on $l^2(D)$ representing the states of a quantum walker. We start with an initial state, and measure the position of the quantum walker. This quantum measurement yields a new state depending on the outcome, with a certain quantum mechanical probability. We then evolve this new state using the unitary SQW, resulting in another state. The final state is obtained by taking the expectation value of this state with respect to the quantum mechanical probability distribution of the position observable, which induces decoherence. The open SQW is  the map from the initial state to the final state obtained this way. 
Finally, passing from $l^2(D)$ associated with the edges of $G$ to $l^2(V)$ associated with its vertices by means of  boundary operators, which are instrumental in the analysis of the Grover Walk, {and after appropriate normalization,} we define a second open SQW on the vertices of $G$, we call induced by the open SQW on its edges.

Our analysis of the dynamical properties of both these open SQWs shows that they are related to those of natural Markov processes defined on the graph, as in \cite{G, APSS1}, which depend on the choice of the scattering matrices. A few examples are worked out to illustrate the richness of these constructions.

\medskip

The paper is organized as follows. Section \ref{purestate} is devoted to the definition of unitary SQWs on the edges of an arbitrary directed graph, their quantum mechanical interpretation, and their general properties. Particular cases such as the Chalker-Coddington model, Coined Quantum Walks, the star-graph case and the Generalized Grover Walk are discussed in Section \ref{sec:particular}. The notions of open SQWs on the edges of a graph and their quantum mechanical interpretation are introduced in Section \ref{SOQWG}, while their spectral and dynamical properties are addressed in Section \ref{sec:sdp}. Finally, Section \ref{sec:IOQW} presents the definition and analysis of induced open SQWs on the vertices of the graph. Both notions of open SQWs are illustrated by examples. An appendix containing technical proofs closes the paper.

\section{Unitary Scattering Quantum Walks on a Graph}\label{purestate}

Let us set the notation. 
A graph $G=(V,E)$ is specified by the set of its vertices  $V$ and the set of its {undirected} edges $E$. The order of the graph $G$ is $|V|$, the cardinal of $V$, which may be infinite. In the latter case,  $V$ is assumed to be countable and $G$ is called infinite. For two vertices $x\in V, y\in V$ forming an edge, we will write $x\sim y $ the fact that {$[xy]\in E$}. The graphs we consider have no loop, {\it i.e.} {$[xx]\not \in E$} and to avoid trivialities, we will always assume that $|V|\geq 2$ and that $G$ is connected. Each edge gives rise to two directed edges of opposite direction. A directed edge from vertex $x$ to vertex $y$ is denoted by $(yx)$, and the set of directed edges is $D$. Finally, the degree of a vertex $x\in V$, { is $d_x=|\{y\in V \ {\rm s.t.} \ y\sim x\}|\in \N^*$},  and we have $\sum_{x\in V}d_x=2|E|=|D|$. If $G$ is infinite, we further assume that $\sup_{x\in V}d_x=\bar d<\infty$. Furthermore, we pick a vertex $r\in V$ we call the root and for any $x\in V$, we denote by $|x|$ the graph distance between $x$ and $r$. For any $n\in \N$, the set $B_n=\{x\in V \ \text{s.t.} \ |x|\leq n\}$ has finite cardinal bounded above by $c(\bar d-1)^{n-1}$ for $n$ large, $\bar d>2$, and $c$ a $\bar{d}-$dependent constant.
\\

We associate to $G$ the Hilbert space $l^2(D)$ as follows. To each directed edge $(yx)$ from vertex $x$ to vertex $y$, we associate a canonical basis vector of $l^2(D)$, denoted by $|yx\rangle$, so that, for a finite graph, $l^2(D)=\mbox{\rm span}\{|yx\rangle\}_{x, y \in V\atop x\sim y}$, with $\dim l^2(D)=\sum_{x\in V}d_x<\infty$. In case $G$ is infinite with countably many vertices, we set
\be
 l^2(D)=\bigg\{\psi =\sum_{x \in V}\sum_{y \sim x}\psi_{xy}|xy\rangle, \, \psi_{xy}\in \C  \ \Big| \  \sum_{x \in V}\sum_{ y\sim x}|\psi_{xy}|^2<\infty\bigg\}.
 \ee
For later purposes, we note the direct sum decompositions 
\begin{align}\label{dirsumspa}
&l^2(D)=\oplus_{x\in V}\cH^{\rm I}_x=\oplus_{x\in V}\cH^{\rm O}_x, \ \ \mbox{where} \nonumber \\ 
&\cH^{\rm I}_x=\mbox{\rm span}\{|xy\rangle\}_{y\sim x}, \ \ \cH^{\rm O}_x=\mbox{\rm span}\{|yx\rangle\}_{y\sim x}.
\end{align}
The $d_x$-dimensional subspaces  $\cH^{\rm I}_x$, resp. $\cH^{\rm O}_x$, are the incoming, resp. outgoing, subspaces attached to the vertex $x\in V$, spanned by incoming, resp. outgoing, edges to $x$.  We will need below the corresponding orthogonal projectors $P^{\rm I}_x$, resp. $P^{\rm O}_x$ onto these subspaces. With the notation $|\ffi\ket\bra\psi |\in \cB(l^2(D))$ to denote, for any pair of vectors $\ffi, \psi\in l^2(D)$, the rank one linear operator such that $|\ffi\ket\bra\psi | \chi=\ffi \bra\psi|\chi\ket$, $\forall \chi\in l^2(D)$, these projectors read and satisfy
\begin{align}\label{projio}
&P^{\rm I}_x=\sum_{y\sim x}|xy\ket\bra xy|, \ \ P^{\rm O}_x=\sum_{y\sim x}|yx\ket\bra yx| \ \ \mbox{s.t.} \ \ P_x^{\#}P_y^{\#}=\delta_{x y}P_x^{\#}, \ \ \#\in\{{\rm I, O}\}, \\  \nonumber
& P_x^{\rm I}P_y^{\rm O}=P_y^{\rm O}P_x^{\rm I}=\begin{cases}
    |xy\rangle\langle xy| & \text{if } x \sim y \\
    0 & \text{otherwise}.
\end{cases}
\end{align}

We proceed with the definition of QWs defined on $l^2(D)$ as unitary operators parameterized by a family of scattering matrices. To each vertex
 $x\in V$ of degree $d_x$, we associate a unitary matrix called a scattering matrix
\begin{align}
&  S(x)\in U(d_x), \ \mbox{with matrix elements} \ S(x)=(S_{zy}(x))_{y\sim x \atop z\sim x} .
\end{align}
The labelling of the matrix elements in the canonical basis of $\C^{d_x}$ is such that we can associate vectors of $\cH^{\rm I}_x$ and $\cH^{\rm O}_x$ to $S_{zy}(x)$ in the following definition.
\begin{definition}
Given s set of scattering matrices $\cS=\{S(x)\}_{x\in V}$, the QW operator on the graph $G$, $U_\cS$, is defined on $l^2(D)$ by its action on the basis vectors 
\be\label{defscatu}
U_\cS |xy\rangle = \sum_{z\sim x} S_{zy}(x) |zx\rangle, \ \ \text{for all} \ \ x, y\in V, x\sim y,
\ee
or, equivalently,
\be\label{defug}
U_\cS = \sum_{x\in V}\sum_{y\sim x\atop z\sim x} S_{zy}(x) |zx\ket \bra xy|.
\ee
\end{definition}
The set of scattering matrices $\cS$ parameterizing the QW operator $U_\cS$ is  emphasized in the notation. We call the QW defined by (\ref{defug}) a Scattering Quantum Walk, SQW for short. 
We note that for all $x\in V$, $U_\cS$ intertwines between the projectors $P_x^{\rm I}$ and $P_x^{\rm O}$
\begin{align}\label{oui}
&U_\cS P_x^{\rm I}=P_x^{\rm O}U_\cS=P_x^{\rm O}U_\cS P_x^{\rm I}= \sum_{y\sim x\atop z\sim x} S_{zy}(x) |zx\ket \bra xy|.
\end{align}
\begin{rem}\label{remconvu} If $V$ is an infinite graph, the sum  in (\ref{defug}) is to be understood as \\
$\lim_{n\ra \infty}\sum_{x\in V \atop |x|\leq n}(\dots)$  in the strong sense. The convergence is ensured by the unitarity of $S(x)$:  for any $x\in V$, any $\psi \in l^2(D)$,
\begin{align}
\big\|\sum_{y\sim x\atop z\sim x} S_{zy}(x) |zx\ket \bra xy|\psi\big\|^2&=\sum_{y\sim x\atop z\sim x}\sum_{y'\sim x\atop z'\sim x} \overline{S_{z'y'}}(x)S_{zy}(x)\bra \psi  |xy'\ket \bra z'x| |zx\ket \bra xy|\psi\ket \nonumber\\
&= \sum_{y\sim x\atop y'\sim x} \Big(\sum_{ z\sim x}\overline{S_{z y'}}(x)S_{zy}(x)\Big)\bra \psi  |xy'\ket\bra xy|\psi\ket \nonumber\\
&= \sum_{y\sim x\atop y'\sim x} \delta_{y y'}|\bra \psi  |xy\ket|^2= \|P_x^{\rm I}\psi \|^2,
\end{align}
where the orthogonal projectors $\{P_x^{\#}\}_{x\in V}$, $\#\in \{\rm I, O\}$ form  resolutions of the identity.
\end{rem}

\begin{lem} \label{unitUS}
For a graph $G=(V,E)$ and a set $\cS=\{S(x)\}_{x\in V}$, the operator $U_\cS$ acting on the Hilbert space $l^2(D)$ defined by (\ref{defug}) is a unitary operator. 
\end{lem}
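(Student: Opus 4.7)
The plan is to exploit the two orthogonal direct sum decompositions $l^2(D)=\bigoplus_{x\in V}\cH^{\rm I}_x=\bigoplus_{x\in V}\cH^{\rm O}_x$ from \eqref{dirsumspa}, together with the intertwining relation \eqref{oui}, to express $U_\cS$ as a direct sum over $x\in V$ of unitary maps $\cH^{\rm I}_x\to\cH^{\rm O}_x$. Since both families of subspaces tile $l^2(D)$ orthogonally, this block-diagonal description immediately gives unitarity.

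Concretely, I would first settle the well-definedness: for any $\psi\in l^2(D)$, the partial sums $U_\cS^{(n)}\psi:=\sum_{|x|\leq n}\sum_{y\sim x,\, z\sim x}S_{zy}(x)|zx\rangle\langle xy|\psi$ form a Cauchy sequence, because the block $\sum_{y,z\sim x}S_{zy}(x)|zx\rangle\langle xy|$ has operator norm $1$ (by the computation in Remark \ref{remconvu}), its range lies in $\cH^{\rm O}_x$, and the $\cH^{\rm O}_x$ are pairwise orthogonal, so $\|(U_\cS^{(m)}-U_\cS^{(n)})\psi\|^2=\sum_{n<|x|\le m}\|P^{\rm I}_x\psi\|^2\to 0$. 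Hence $U_\cS$ is well-defined and bounded with $\|U_\cS\|\leq 1$.

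Next I would prove isometry. For $\psi\in l^2(D)$, write $\psi=\sum_{x\in V}P^{\rm I}_x\psi$. By \eqref{oui}, $U_\cS P^{\rm I}_x\psi\in\cH^{\rm O}_x$, so the vectors $\{U_\cS P^{\rm I}_x\psi\}_{x\in V}$ are mutually orthogonal, and the Pythagorean identity together with the block calculation already carried out in Remark \ref{remconvu} yields
\[
\|U_\cS\psi\|^2=\sum_{x\in V}\|U_\cS P^{\rm I}_x\psi\|^2=\sum_{x\in V}\|P^{\rm I}_x\psi\|^2=\|\psi\|^2.
\]

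Finally, I would show surjectivity by exhibiting a preimage for every basis vector. Given $|zx\rangle\in\cH^{\rm O}_x$, define $\ffi_{zx}:=\sum_{y\sim x}\overline{S_{zy}(x)}|xy\rangle\in\cH^{\rm I}_x$; a direct computation using \eqref{defscatu} and the unitarity $\sum_{y\sim x}\overline{S_{zy}(x)}S_{z'y}(x)=\delta_{zz'}$ gives $U_\cS\ffi_{zx}=|zx\rangle$. Thus the range of $U_\cS$ contains each $\cH^{\rm O}_x$, and hence all of $\bigoplus_{x}\cH^{\rm O}_x=l^2(D)$. Combined with the isometry property, this proves $U_\cS$ is unitary. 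The only subtlety worth guarding against is the infinite-graph case, where one must justify the interchanges of sums above through the strong convergence argument and the orthogonality of the blocks; this is exactly what Remark \ref{remconvu} provides.
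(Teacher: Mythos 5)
Your proof is correct and rests on the same two facts as the paper's: the orthonormality of the rows and of the columns of each $S(x)$, together with the orthogonality of the subspaces $\cH_x^{\rm I}$ and $\cH_x^{\rm O}$; the paper simply computes $U_\cS^*U_\cS=\un$ and $U_\cS U_\cS^*=\un$ directly, which are your isometry and surjectivity statements in operator form. The only point to make explicit for infinite $G$ is that exhibiting preimages of the basis vectors a priori shows the range of $U_\cS$ contains only the algebraic span of the $\cH_x^{\rm O}$, so you should add that the range of an isometry is closed to conclude it is all of $l^2(D)$.
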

\proof:
The property follows from the unitarity of $S(x)$:
\begin{align}
U_\cS^*U_\cS&=  \sum_{x'\in V}\sum_{y'\sim x'\atop z'\sim x'} \overline{S_{z'y'}(x')} |x'y'\ket \bra z'x'|\sum_{x\in V}\sum_{y\sim x\atop z\sim x} S_{zy}(x) |zx\ket \bra xy|\nonumber\\
=&\sum_{x\in V}\sum_{y\sim x\atop y'\sim x' } \bigg( \sum_{z\sim x } \overline{S_{zy'}(x)}S_{zy}(x)\bigg) |xy'\ket \bra xy| \nonumber\\
&= \sum_{x\in V}\sum_{y\sim x \atop y'\sim x} \delta_{y y'} |xy'\ket \bra xy| = \sum_{x\in V} P_x^{\rm I}= \un,
\end{align}
and the reverse identity $U_\cS U_\cS^*=\un$ is proven similarly. 
\qed

The  relations (\ref{oui}) show that $U_\cS$ couples subspaces $\cH_x^{\rm I}$ attached at neighbouring vertices of the graph.

\medskip
{\bf Quantum mechanical interpretation:} 
The quantum system at hand, or quantum walker, has configuration space given by the directed edges $(xy)$ of the graph $G$, giving rise to the canonical basis of the Hilbert space $l^2(D)$. In a  state described by a normalized vector $\psi\in l^2(D)$, the probability of the quantum walker to be on the directed edge $(xy)$ of the graph $G$ is $|\bra xy |\psi\ket |^2=|\psi_{xy}|^2$.  The operator $U_\cS$ defines the one time step evolution of the quantum system. By construction, the state $|xy\ket$ of a quantum walker incoming at the vertex $x\in V$ along the edge $(xy)$ undergoes a local scattering process monitored by the unitary matrix $S(x)$ which sends it to a linear combination of outgoing states along the edges $(zx)$, according to (\ref{defscatu}). This local scattering point of view on the quantum dynamics is at work in several physically motivated specific models that, as we will show, are special cases of our definition (\ref{defug}). 

\medskip 
The questions of interest concern the determination of the spectrum of $U_\cS$ and of the related properties of the discrete time dynamical system  $(U_\cS^n)_{n\in \Z}$ on $l^2(D)$, as a function of the characteristics of $G$, and of the set $\cS=\{S(x)\}_{x\in V}$ that parametrize the quantum evolution. The behaviour in time of the probability distribution $\P_n^{\psi_0}(\cdot )$ on the set of directed edges $\{(xy)\}_{x\sim y}$ induced by the quantum dynamics is also of  interest: let $\psi_0\in l^2(D)$ be an initial state and $\psi_n=U_\cS^n\psi_0\in l^2(D)$ be the corresponding state at time $n\in \Z$. For $(xy)$ a directed edge of $G$,  $\P_n^{\psi_0}(\cdot )$ defined by
\be
\P_n^{\psi_0}(xy)=\big|\bra xy|U_\cS^n\psi_0\ket\big|^2=\big|\bra xy|\psi_n\ket\big|^2
\ee
 yields the probability to find the quantum walker at time $n$ on the directed edge $(xy)$ by a measurement of its position. 
Another important time dependent distribution on the vertices $x \in V$ of the graph, $\Q_n^{\psi_0}(\cdot )$, is the probability to find the quantum walker at time $n\in \Z$ in the subspace $\cH_x^{\rm I}$ or, improperly, on the vertex $x\in V$, by a measurement of its position. It is defined by
\be\label{probavertex}
\Q_n^{\psi_0}(x)=\big\|P_x^{\rm I}U_\cS^n\psi_0\big\|^2=\big\|P_x^{\rm I}\psi_n\big\|^2=\sum_{y\sim x}\P_n^{\psi_0}(xy).
\ee

\subsection{General Properties}
We proceed with a few general considerations before discussing some special cases.

\medskip

{\bf Perturbation theory:}
To compare two SQWs defined on the same graph with different sets of scattering matrices, we can resort to the following result which holds in infinite dimension. 
Such estimates have proven to be instrumental in the analyses of currents  in the Chalker-Coddington model, \cite{ABJ4, ABJ5} , and other network models.
\begin{lem} 
For a graph $G=(V,E)$ and two sets of scattering matrices $\cS=\{S(x)\}_{x\in V}$, resp. $\cS'=\{S'(x)\}_{x\in V}$, let $U_\cS$, resp. $U_{\cS'}$, be defined on $l^2(D)$ according to (\ref{defug}). Then,
\be
\| U_\cS -U_{\cS'} \| \leq  \sup_{x\in V}\|S(x)-S'(x)\|_{\rm HS},
\ee
where $\|\cdot \|_{\rm HS}$ denotes the Hilbert-Schmidt norm on finite matrices.
\end{lem}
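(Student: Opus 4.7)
The plan is to exploit the local structure of $U_\cS$ revealed by the intertwining relation (\ref{oui}) and the orthogonal decompositions (\ref{dirsumspa}). First I would write the difference as a vertex-indexed sum
\begin{equation*}
U_\cS - U_{\cS'} = \sum_{x\in V} A_x, \qquad A_x := \sum_{y\sim x,\, z\sim x}(S_{zy}(x)-S'_{zy}(x))\, |zx\rangle\langle xy|,
\end{equation*}
and observe the two key facts: $A_x = P_x^{\rm O} A_x P_x^{\rm I}$, so $A_x$ sends $\cH_x^{\rm I}$ into $\cH_x^{\rm O}$ and vanishes on the orthogonal complement of $\cH_x^{\rm I}$. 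In the orthonormal bases $\{|xy\rangle\}_{y\sim x}$ of $\cH_x^{\rm I}$ and $\{|zx\rangle\}_{z\sim x}$ of $\cH_x^{\rm O}$, the matrix of $A_x$ coincides with $S(x)-S'(x)$, so $\|A_x\| \leq \|S(x)-S'(x)\|_{\rm HS}$.

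Next I would exploit the orthogonality built into (\ref{dirsumspa}): since the subspaces $\cH_x^{\rm O}$ are pairwise orthogonal and $\mathrm{Ran}\, A_x \subset \cH_x^{\rm O}$, for any $\psi\in l^2(D)$ we have
\begin{equation*}
\bigl\|(U_\cS - U_{\cS'})\psi\bigr\|^2 = \Bigl\|\sum_{x\in V} A_x\psi\Bigr\|^2 = \sum_{x\in V}\|A_x\psi\|^2 = \sum_{x\in V}\|A_x P_x^{\rm I}\psi\|^2,
\end{equation*}
where in the last step I used $A_x = A_x P_x^{\rm I}$. The bound on $\|A_x\|$ then gives
\begin{equation*}
\sum_{x\in V}\|A_x P_x^{\rm I}\psi\|^2 \leq \sum_{x\in V}\|S(x)-S'(x)\|_{\rm HS}^2\,\|P_x^{\rm I}\psi\|^2 \leq \Bigl(\sup_{x\in V}\|S(x)-S'(x)\|_{\rm HS}\Bigr)^{2} \sum_{x\in V}\|P_x^{\rm I}\psi\|^2,
\end{equation*}
and since $\{P_x^{\rm I}\}_{x\in V}$ is a resolution of the identity by (\ref{projio}), the final sum equals $\|\psi\|^2$. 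Taking square roots and the supremum over unit vectors $\psi$ yields the stated inequality.

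For an infinite graph I would, as in Remark \ref{remconvu}, first establish the identity on the dense subspace of finitely supported vectors where only finitely many $A_x$ contribute, and then extend by continuity using the uniform bound just derived; the assumption $\sup_x d_x = \bar d < \infty$ guarantees that $\|S(x)-S'(x)\|_{\rm HS} \leq 2\sqrt{\bar d}$, so the supremum on the right-hand side is automatically finite. There is no real obstacle here: the only subtlety is keeping track of the fact that the same orthogonality argument used in Remark \ref{remconvu} to define $U_\cS$ strongly on $l^2(D)$ also shows that $\sum_x A_x$ converges strongly and is bounded by the claimed supremum.
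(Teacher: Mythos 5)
Your proof is correct and follows essentially the same route as the paper: decompose the difference by vertex, use the mutual orthogonality of the outgoing subspaces $\cH_x^{\rm O}$ to split the norm squared into a sum over $x$, bound each local block by the Hilbert--Schmidt norm of $S(x)-S'(x)$, and conclude via the resolution of the identity $\sum_x P_x^{\rm I}=\un$. The only cosmetic difference is that the paper carries out the Cauchy--Schwarz estimate explicitly where you invoke the standard inequality between the operator norm and the Hilbert--Schmidt norm of a finite matrix.
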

\proof:
For any $\psi=\sum_{x \in V}\sum_{y \sim x}\psi_{xy}|xy\ket \in l^2(D)$, 
\begin{align}
 \|(U_\cS -U_{\cS'})\psi\|^2&=\sum_{x\in V}\Big\|P_x^{\rm O}\sum_{y\sim x \atop z\sim x} (S(x)-S'(x))_{zy}|zx\ket \bra xy| \psi\Big\|^2\nonumber \\
 &= \sum_{x\in V} \sum_{z\sim x}\Big|\sum_{y\sim x } (S(x)-S'(x))_{zy}\psi_{xy}\Big|^2\nonumber \\
 &\leq \sum_{x\in V} \sum_{ z\sim x} \Big(\sum_{y\sim x }  \big|(S(x)-S'(x))_{zy}\big|^2\Big) \Big( \sum_{y\sim x } | \psi_{xy}|^2\Big)\nonumber\\ 
 &\leq \sup_{x\in V}\|S(x)-S'(x)\|_{\rm HS}^2 \|\psi\|^2,
\end{align}
{where we used Cauchy-Schwarz to get the first inequality and $\|S(x)-S'(x)\|_{\rm HS}^2\leq \sup_{x\in V}\|S(x)-S'(x)\|_{\rm HS}^2 $ in the last step.}
\qed

{\bf The graph $G$ as subgraph of $K$:}
Any finite graph $G=(V,E)$ of order $|V|$ can be viewed as a subgraph of the complete graph of order $|V|$, denoted by $K$. Defining $\widehat G=(V,\widehat E)$ as the graph with same set of vertices $V$ and set of edges $\widehat E$ distinct from $E$ such that  $K=(V, E\cup \widehat E)$, {{\it i.e.} $\widehat{E}$ is the complement of $E$ in the set of edges of $K$.} We will denote that situation as $K=G\cup \widehat G$. 
Accordingly, denoting by $l^2(\widehat D)$ and $l^2(D_K)$ the Hilbert spaces associated with the directed edges of $\widehat G$ and $K$, we 
show that $l^2(D_K)=l^2(D)\oplus l^2(\widehat D)$ and that $U_\cS$ can be viewed as the restriction to $l^2(D)$ of a unitary SQW, $U_{\cS_K}$, defined on $l^2(D_K)$. 

\medskip

Indeed, we note that  for any $x, y\in V$, $(xy)$ is a directed edge of $K$ and is associated with the basis vector $|xy\ket$ of $l^2(D_K)$.  Since $(xy)$  either belongs  to $E$ or to $\widehat E$, then either $|xy\ket \in l^2(D)$ or $| xy\ket \in l^2(\widehat D)$. Sorting the basis vectors $\{| xy\ket\}_{x,y\in V}$ accordingly, we have $l^2(D)=\text{span}\{ |xy\ket \}_ {(xy)\in E }\subset l^2(D_K)$, $l^2(\widehat D) =\text{span}\{ |xy\ket\}_ {(xy)\in \widehat E }\subset l^2(D_K)$ with $l^2(D_K)=l^2(D)\oplus l^2(\widehat D)$.

Then, for each $x\in V$, let $\widehat d_x\in \N$ be such that $d_x+\widehat d_x=|V|-1$ and consider $\C^{|V|-1}=\C^{d_x}\oplus \C^{\widehat d_x}$. 
In addition to  $\cS=\{S(x)\}_{x\in V}$, with $S(x)\in U(d_x)$ acting on $\C^{d_x}$,  we consider $\widehat \cS=\{\widehat S(x)\}_{x\in V}$, where $\widehat S(x)\in U(\widehat d_x)$ acts on $\C^{\widehat d_x}$. 
This allows us to define $U_{\widehat \cS}$ on $l^2(\widehat D)$. Now, for $K=G\cup \widehat G$, we use the direct sums of scattering matrices $\cS_{K}=\{S(x)\oplus \widehat S(x)\}_{x\in V}$ acting on $\C^{|V|-1}$ for each $x\in V$, to construct 
$U_{\cS_K}$ on $l^2(D_K)$ so that for any $x,y\in V$, 
\be
U_{\cS_K}|xy\ket=\begin{cases}
\sum_{z\sim y \atop (zx)\in E}S_{zy}(x)|zx\ket & \text{ if } \ (xy)\in G\\
\sum_{z\sim y \atop (zx)\in \widehat E}\widehat S_{zy}(x)|zx\ket & \text{ if } \ (xy)\in \widehat G.
\end{cases}
\ee
We have thus shown:
\begin{lem}
For $G$ finite and with the notation above, 
\be
U_{\cS_K}=U_\cS\oplus U_{\widehat \cS} \ \ \text{acting on } \ l^2(D_K)=l^2(D)\oplus l^2(\widehat D).
\ee
\end{lem}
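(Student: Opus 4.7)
The plan is to verify the claim by checking three things in sequence: the Hilbert space decomposition, the invariance of the two summands under $U_{\cS_K}$, and the agreement of the restrictions with $U_\cS$ and $U_{\widehat\cS}$.

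First, I would observe that the Hilbert space decomposition $l^2(D_K)=l^2(D)\oplus l^2(\widehat D)$ has already been established in the preceding paragraph from the partition $D_K=D\sqcup \widehat D$ induced by $E\cap\widehat E=\emptyset$ and $E\cup\widehat E=E_K$; this decomposition is orthogonal because the basis vectors $|xy\ket$ indexed by disjoint sets of directed edges are mutually orthogonal in $l^2(D_K)$.

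Second, I would check invariance directly from the piecewise definition of $U_{\cS_K}$. If $(xy)\in E$, then by the first case of the displayed formula, $U_{\cS_K}|xy\ket$ is a linear combination of vectors $|zx\ket$ with $(zx)\in E$, and hence lies in $l^2(D)$. Symmetrically, if $(xy)\in \widehat E$ then $U_{\cS_K}|xy\ket\in l^2(\widehat D)$. Thus $l^2(D)$ and $l^2(\widehat D)$ are both invariant under $U_{\cS_K}$ (and, by unitarity of $U_{\cS_K}$ together with $l^2(D_K)=l^2(D)\oplus l^2(\widehat D)$, they are reducing subspaces).

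Third, I would identify the restrictions. On a basis vector $|xy\ket\in l^2(D)$, the first branch of the definition yields
\be
U_{\cS_K}|xy\ket=\sum_{z\sim x,\ (zx)\in E}S_{zy}(x)|zx\ket,
\ee
which is exactly $U_\cS|xy\ket$ by (\ref{defscatu}), where the sum over $z\sim x$ in the definition of $U_\cS$ is over neighbours in $G$. The same argument applied to $|xy\ket\in l^2(\widehat D)$ gives $U_{\cS_K}|xy\ket=U_{\widehat\cS}|xy\ket$. Since these two restrictions agree with the block-diagonal operator $U_\cS\oplus U_{\widehat\cS}$ on a basis of each summand, the equality $U_{\cS_K}=U_\cS\oplus U_{\widehat\cS}$ follows.

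There is really no obstacle here: the lemma amounts to unpacking the piecewise definition of $U_{\cS_K}$ and noticing that the block-diagonal form $S(x)\oplus\widehat S(x)$ sends the $\C^{d_x}$ subblock (indexed by $G$-neighbours of $x$) into itself and likewise for the $\C^{\widehat d_x}$ subblock. The only point worth stating carefully is that the case split in the definition of $U_{\cS_K}$ precisely encodes this block structure on each local scattering, which is why the decoupling between $G$ and $\widehat G$ propagates to the global operator.
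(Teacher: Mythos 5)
Your proposal is correct and follows essentially the same route as the paper, which treats the lemma as an immediate consequence of the construction (the displayed piecewise formula for $U_{\cS_K}|xy\ket$ is followed by ``We have thus shown''), namely that the case split shows each summand is invariant and the restrictions coincide with $U_\cS$ and $U_{\widehat\cS}$ on basis vectors. Your write-up just makes explicit the three verification steps the paper leaves implicit.
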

\begin{rem}
The graph $\widehat G=(V, \widehat E)$ is not necessarily connected, in which case it provides $U_{\widehat \cS}$ with a direct sum structure, regardless of  the choice of $\widehat \cS=\{\widehat S(x)\}_{x\in V}$.
\end{rem}

{\bf Asymptotics of $\Q_n^{\psi_0}(x)$:}
The large $n$  behaviour of  $\Q_n^{\psi_0}(x)$, the probability to find the quantum walker on the vertex $x\in V$ at time $n$, is oscillatory in case $G$ is finite. We get a finite limit considering the Ces\`aro mean, which  suppresses the oscillations. In the infinite dimensional case, the Ces\`aro mean limit involves  the nature of $\sigma(U_\cS)$, according to the RAGE theorem, see {\it e.g.} \cite{S}, Theorem 5.5.6, and \cite{RT}, Theorems  B1 and B2. 

\medskip

More precisely, let $E(\cdot)$ defined on $]-\pi, \pi]$ denote the self-adjoint spectral projectors of  $U_\cS$ so that the spectral theorem reads in the resolution of the identity form $U_\cS=\int_{]-\pi, \pi]}e^{i\theta} dE(\theta)$. In particular, $E(\{\theta\})$ is the spectral projector on $\ker (U_\cS -e^{i\theta}\un)$. Denoting by $\sigma_p(U_\cS)$ the set of eigenvalues of $U_\cS$, we have

\begin{lem}\label{rage}
Let  $U_\cS$ have spectral projectors $E(\cdot)$ and $\Q_n^{\psi_0}(x)$ be defined by (\ref{probavertex}). Then
\be
\lim_{N\ra \infty}\frac{1}{N}\sum_{j=0}^{N-1}\Q_n^{\psi_0}(x)
=\sum_{e^{i\theta}\in \sigma_p(U_\cS)} \|P_x^{\rm I}E(\{\theta\})\psi_0 \|^2.
\ee
\end{lem}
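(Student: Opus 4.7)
The plan is to decompose $\psi_0$ along the spectral subspaces of $U_\cS$ and treat each piece separately, relying on the RAGE theorem to suppress the contribution of the continuous spectral component and on elementary almost-periodic averaging to isolate the diagonal eigenvalue contribution. (I read $\Q_n^{\psi_0}(x)$ under the sum as $\Q_j^{\psi_0}(x)$, so that the summand is $\|P_x^{\rm I}U_\cS^j\psi_0\|^2$.)

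Concretely, using the spectral resolution $E(\cdot)$, I would write $\psi_0=\psi_{pp}+\psi_c$, where $\psi_{pp}=\sum_k\phi_k$ with $\phi_k:=E(\{\theta_k\})\psi_0$ the projection of $\psi_0$ onto the eigenspace associated with the distinct eigenvalue $e^{i\theta_k}\in\sigma_p(U_\cS)$, and $\psi_c$ lies in the continuous spectral subspace of $U_\cS$. Since $U_\cS^j\phi_k=e^{ij\theta_k}\phi_k$, the summand expands as
\[
\|P_x^{\rm I}U_\cS^j\psi_0\|^2 = \|P_x^{\rm I}U_\cS^j\psi_{pp}\|^2 + 2\,\Re\bra P_x^{\rm I}U_\cS^j\psi_{pp}|P_x^{\rm I}U_\cS^j\psi_c\ket + \|P_x^{\rm I}U_\cS^j\psi_c\|^2,
\]
and I would analyse each of the three terms in the Ces\`aro average separately.

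The first term reads $\sum_{k,l} e^{ij(\theta_k-\theta_l)}\bra P_x^{\rm I}\phi_l | P_x^{\rm I}\phi_k\ket$; under Ces\`aro averaging, the diagonal $k=l$ entries contribute precisely the claimed $\sum_k\|P_x^{\rm I}\phi_k\|^2$, whereas off-diagonal entries vanish since $N^{-1}\sum_{j=0}^{N-1}e^{ij(\theta_k-\theta_l)}\to 0$ whenever $\theta_k\ne\theta_l$ (note that distinctness of the eigenvalues precludes $\theta_k-\theta_l\in 2\pi\Z$). For the third term I would invoke the RAGE theorem cited in the lead-in: $P_x^{\rm I}$ has finite rank $d_x$, hence is compact, and $\psi_c$ has no pure-point component relative to $U_\cS$, so $N^{-1}\sum_{j=0}^{N-1}\|P_x^{\rm I}U_\cS^j\psi_c\|^2\to 0$. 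The cross term is then handled by Cauchy--Schwarz applied to the Ces\`aro sum: its modulus is bounded by the geometric mean of the Ces\`aro averages of the first and third terms, the former uniformly bounded by $\|\psi_{pp}\|^2$ and the latter tending to zero.

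The main technical obstacle is the possibility of infinitely many eigenvalues, which makes the bi-index sum in the first term formally infinite and a priori forbids a termwise interchange of Ces\`aro limit and sum. I would address this via truncation: setting $\Phi_K=\sum_{k\le K}\phi_k$ and using $\|P_x^{\rm I}\|\le 1$, $\|U_\cS^j\|=1$, one has the uniform-in-$j$ bound
\[
\big|\,\|P_x^{\rm I}U_\cS^j\psi_{pp}\|^2 - \|P_x^{\rm I}U_\cS^j\Phi_K\|^2\,\big| \le 2\,\|\psi_{pp}-\Phi_K\|\,\|\psi_0\|.
\]
For each fixed $K$ only finitely many eigenvalue pairs appear in the bi-index sum, so the termwise Ces\`aro argument above applies; since $\|\psi_{pp}-\Phi_K\|\to 0$ as $K\to\infty$ and $\sum_k\|P_x^{\rm I}\phi_k\|^2\le\|\psi_{pp}\|^2<\infty$, one may then let $K\to\infty$ to obtain the advertised identity.
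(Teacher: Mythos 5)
Your proof is correct, including the two points that actually require care: the truncation argument needed when $\sigma_p(U_\cS)$ is infinite (the uniform-in-$j$ bound $2\|\psi_{pp}-\Phi_K\|\,\|\psi_0\|$ does legitimately let you interchange the Ces\`aro limit with the sum over eigenvalues), and the use of a discrete-time, unitary version of the RAGE theorem with the compact, indeed finite-rank, observable $P_x^{\rm I}$. Your reading of the summand as $\Q_j^{\psi_0}(x)$ is also the intended one. The paper's own proof is much shorter but less self-contained: it quotes Theorem B2 of \cite{RT}, namely the strong convergence of $N^{-1}\sum_{n=0}^{N-1}U^n|\psi_0\ket\bra\psi_0|U^{-n}$ to $\sum_{e^{i\theta}\in\sigma_p(U_\cS)}E(\{\theta\})|\psi_0\ket\bra\psi_0|E(\{\theta\})$, writes $\Q_n^{\psi_0}(x)=\tr\big(P_x^{\rm I}U^n|\psi_0\ket\bra\psi_0|U^{-n}\big)$, and concludes by cyclicity of the trace. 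Your argument is in effect a direct proof of the scalar consequence of that cited Wiener-type theorem: the pure point/continuous splitting, the Ces\`aro suppression of the oscillatory cross terms $e^{ij(\theta_k-\theta_l)}$, RAGE for the continuous component, and Cauchy--Schwarz for the mixed term are precisely the ingredients hidden inside the reference. So the two routes rest on the same mechanisms and differ mainly in delegation: the paper's version buys brevity and an operator-level statement reusable for other observables, while yours buys self-containedness and explicit error control in the truncation.
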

\begin{rem} 
i) In case $G$ is infinite and $\psi_0$ belongs to the continuous spectral subspace of $U_\cS$, if any, the limit vanishes.\\
ii) For $G$ finite,  the finite $N$ Ces\`aro mean and  the RHS differ by an $O(N^{-1})$  term.  
\end{rem}
\proof : We write $U$ for $U_\cS$ below, to simplifiy the notation. The proof is a direct consequence of Theorem B2 in \cite{RT},
\be
\text{s-}\lim_{N\ra \infty}\frac{1}{N}\sum_{n=0}^{N-1}U^n |\psi_0\ket\bra \psi_0| U^{-n}=\sum_{e^{i\theta}\in \sigma_p(U_\cS)}E(\{\theta\}) |\psi_0\ket\bra \psi_0| E(\{\theta\}),
\ee 
and of the cyclicity of the trace which yields
\be
\Q_n^{\psi_0}(x)=\tr(P_x^{\rm I}U^n  |\psi_0\ket\bra \psi_0| U^{-n})
\ee
and, {recall that $P_x^{\rm I}$ is finite rank,}
\be
\tr(P_x^{\rm I}E(\{\theta\}) |\psi_0\ket\bra \psi_0| E(\{\theta\}))=\|P_x^{\rm I}E(\{\theta\})\psi_0\|^2.
\ee
\qed
 
 An application of this result to the star-graph is presented in Section \ref{sec:star}.

\section{Particular Cases}\label{sec:particular}

\subsection{The Chalker-Coddington Model}
The Chalker-Coddington model \cite{CC}, providing a simplified description of the Quantum Hall effect, can be cast in the framework of SQW defined on $G=\Z^2$, as we briefly show. See \cite{ABJ1, ABJ2} for a mathematical approach of this model, and references therein for background. 

 Let 
\be
\{{\textsf S}_{j,2k}\}_{j,k\in\Z}, \ \text{with } \ {\textsf S}_{j,2k}\in U(2), \ \forall j,k,
\ee
be a collection of scattering matrices, called even or odd according to the parity of $j$. The unitary operator which defines the
 Chalker-Coddington model 
\be
U_{CC}:l^2(\Z^2)\to l^2(\Z^2)
\ee
reads as follows: Let $\{|j,k\ket\}_{j,k\in\Z}$ be the canonical basis vectors of $l^2(\Z^2)$;  $U_{CC}$  is defined according to figure \ref{fig:scatteringnetwork} by:
\begin{figure}[hbt]
\centerline {
\includegraphics[width=8.cm]{./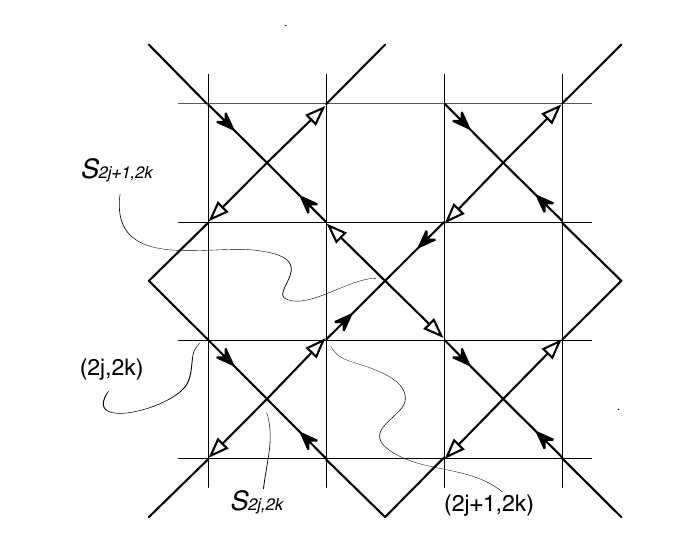}
}
\caption{\small  A Chalker--Coddington model with its incoming (black arrows) and outgoing (white arrows) links.}
\label{fig:scatteringnetwork}
\end{figure}

\begin{align}
&\begin{pmatrix} U_{CC} |2j,2k\rangle \cr U_{CC} |2j+1,2k-1\rangle \end{pmatrix} = {\textsf S}_{2j,2k} \begin{pmatrix}|2j+1,2k\rangle\cr  |2j,2k-1\rangle  \end{pmatrix},\notag \\
&\begin{pmatrix} U_{CC}  |2j+2,2k+1\rangle  \cr U_{CC} |2j+1,2k\rangle \end{pmatrix} =  {\textsf S}_{2j+1,2k} \begin{pmatrix} |2j+1,2k+1\rangle \cr  |2j+2,2k\rangle  \end{pmatrix}. \label{def:UCC}
\end{align}
The convention above is as follows (without indices): 
\be
\begin{pmatrix} U |a\ket \cr U|b\ket \end{pmatrix}={\textsf S}\begin{pmatrix}  |c\ket \cr |d\ket \end{pmatrix} \Leftrightarrow \begin{cases} U |a\ket = {\textsf S}_{11} |c\ket +{\textsf S}_{21} |d\ket \cr 
U |b\ket = {\textsf S}_{12} |c\ket +{\textsf S}_{22} |d\ket  \end{cases},
\ee
where ${\textsf S}_{ij}$ are the matrix elements of ${\textsf S}$, in line with (\ref{defscatu}).

The underlying graph is $G=\Z^2$, with vertices at the intersections of the diagonal lines, where the scattering matrices sit. On the grid of Figure \ref{fig:scatteringnetwork},  the vertices carrying even matrices $\textsf{S}_{2j,2k}$, respectively odd matrices $\textsf{S}_{2j+1,2k}$,  have coordinates 
\begin{align}
x^e(j,k)&=(2j, 2k)+(1/2, -1/2), \ \ \text{respectively}\nonumber\\ 
x^o(j,k)&=(2j+1, 2k)+(1/2, 1/2), \ \ \forall (j,k)\in \Z^2.
\end{align}
The edges are the diagonal segments labelled by their center $(j,k)$ on that grid and they have one direction only given by their arrow. 

Note that we are missing half the directed edges with respect to the construction of $l^2(D)$, where each edge comes with two orientations. So we associate with each $(j,k)\in \Z^2$ the canonical basis vectors $|j,k\ket_-$ of $l^2(\Z^2)$, corresponding to edges labeled 
by $(j,k)$ with orientation  {\it opposite } to the arrow. These vectors span another copy of $l^2(\Z^2)$ we denote by $l^2_-(\Z^2)$. For example, $|2j,2k\ket$  and $|2j+1,2k-1\ket$ are incoming to $x^e(j,k)$ while $|2j,2k\ket_-$  is incoming to $x^o(j-1,k)$ and $|2j+1,2k-1\ket_-$ is incoming to $x^o(j,k-1)$, and similarly for the other directed edges. 
Moreover,  we have $l^2(D)=l^2(\Z^2)\oplus l^2_-(\Z^2)$.

Since the degree of all vertices is 4, we need $4\times 4$ scattering matrices at each vertex to realize the Chalker Coddington model  as a SQW on $l^2(D)$.
We label and order the canonical basis of $\C^4$ at each vertex as $\{e_{NW}, e_{SE},  e_{NE}, e_{SW}\}$, where the cardinal symbols  indicate the incoming and outgoing directions at the vertices, see (\ref{defug}).   Then for every vertex $x^\#(j,k)$, $(j,k)\in \Z^2$, $\#\in \{e,o\}$, we define {with ${\mathbb O}$ the zero $2\times 2$ matrix}
\be
S(x^\#(j,k))=\begin{pmatrix}  {\mathbb O} & {\textsf S}_ {2j+1, 2k}\cr
{\textsf S}_{2j, 2k} &  {\mathbb O} 
\end{pmatrix} \in U(4)
\ee
with respect to the chosen ordered basis,  and set $\cS_{CC}=\{S(x^\#(j,k))\}_{(j,k)\in \Z^2 \atop \#\in \{e,o\}}$. 
\begin{prop} With the notation and choices made above, the SQW  
defined on $l^2(D)$ according to (\ref{defug}) and parameterized by $\cS_{CC}$, denoted by $U_{\cS_{CC}}$,  satisfies
\be
U_{\cS_{CC}}= U_{CC}\oplus \tilde U_{CC}  \ \ \text{\em on}\ \  l^2(D)=l^2(\Z^2)\oplus l^2_-(\Z^2)
\ee
where $U_{CC}=U_{\cS_{CC}}|_{l^2(\Z^2)}$ is the Chalker-Coddington operator (\ref{def:UCC}). \\
The restriction  $\tilde U_{CC}=U_{\cS_{CC}}|_{l^2_-(\Z^2)}$ is a copy of this model, where all orientations are reversed and for all 
$(j,k)\in \Z^2$, the scattering matrices $S_{2j, 2k}$ and $S_{2j+1,2k}$ are exchanged.
\end{prop}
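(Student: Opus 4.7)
The plan is to exploit the block anti-diagonal form of each $S(x^\#(j,k))$ to show that $U_{\cS_{CC}}$ preserves the orthogonal decomposition $l^2(D) = l^2(\Z^2)\oplus l^2_-(\Z^2)$, and then to identify the two restrictions with $U_{CC}$ and its mirror image. The first step is to set up an orientation dictionary at each vertex type: using the vertex coordinates $x^e(j,k)=(2j+1/2,2k-1/2)$ and $x^o(j,k)=(2j+3/2,2k+1/2)$ together with the edge labelling by centres $(j,k)$, one checks by inspection that at $x^e(j,k)$ the incoming basis vectors of $l^2(\Z^2)$, namely $|2j,2k\rangle$ and $|2j+1,2k-1\rangle$, sit respectively in the NW and SE slots, while the outgoing ones $|2j+1,2k\rangle$ and $|2j,2k-1\rangle$ sit in NE and SW. At $x^o(j,k)$ the situation is reversed, with incoming $l^2(\Z^2)$ occupying NE and SW and outgoing $l^2(\Z^2)$ occupying NW and SE. The $l^2_-(\Z^2)$ basis vectors, corresponding to reversed arrows, occupy the complementary slots at each vertex.

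The block form of $S(x^\#(j,k))$ in the ordered basis $\{e_{NW},e_{SE},e_{NE},e_{SW}\}$ implies that the columns indexed by $\{e_{NW},e_{SE}\}$ are sent into $\mathrm{span}\{e_{NE},e_{SW}\}$ through the block ${\textsf S}_{2j,2k}$, and symmetrically $\{e_{NE},e_{SW}\}$ into $\mathrm{span}\{e_{NW},e_{SE}\}$ through ${\textsf S}_{2j+1,2k}$. Crossing this with the dictionary above, at $x^e(j,k)$ the incoming $l^2(\Z^2)$ sector (columns NW, SE) is mapped to the outgoing $l^2(\Z^2)$ sector (rows NE, SW) by the even matrix ${\textsf S}_{2j,2k}$, while the incoming $l^2_-(\Z^2)$ sector is mapped to the outgoing $l^2_-(\Z^2)$ sector by the odd matrix ${\textsf S}_{2j+1,2k}$. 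At $x^o(j,k)$ the same block mechanism now pairs the incoming $l^2(\Z^2)$ (columns NE, SW) with the outgoing $l^2(\Z^2)$ (rows NW, SE) through ${\textsf S}_{2j+1,2k}$, and the two $l^2_-(\Z^2)$ sectors through ${\textsf S}_{2j,2k}$. No cross terms between the two sectors appear, so $U_{\cS_{CC}}$ splits as a direct sum on $l^2(D)=l^2(\Z^2)\oplus l^2_-(\Z^2)$.

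Finally, the matrix elements obtained on $l^2(\Z^2)$ by applying (\ref{defug}) with the blocks just identified reproduce exactly the two vector equations of (\ref{def:UCC}), yielding $U_{\cS_{CC}}|_{l^2(\Z^2)}=U_{CC}$. The same computation on $l^2_-(\Z^2)$ gives an operator with the same combinatorial pattern but with all arrows reversed and with the roles of the even and odd matrices ${\textsf S}_{2j,2k}$ and ${\textsf S}_{2j+1,2k}$ interchanged at every vertex, which is precisely the description of $\tilde U_{CC}$. The only real obstacle is this orientation bookkeeping at the two vertex types, particularly the fact that the incoming-$l^2(\Z^2)$ diagonal pair rotates between even and odd vertices; once the dictionary is pinned down, the anti-diagonal structure of the scattering matrices does essentially all the work.
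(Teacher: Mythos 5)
Your proposal is correct and follows essentially the same route as the paper, which simply asserts that the invariance of $l^2(\Z^2)$ and $l^2_-(\Z^2)$ and the identification of the two restrictions "is a matter of computation"; you have carried out that computation explicitly, and your orientation dictionary at the two vertex types and the use of the anti-diagonal block structure of $S(x^\#(j,k))$ are all consistent with the paper's conventions.
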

\begin{rem}
Casting the Chalker-Coddington model into a SQW amounts to add another independent copy of the model in the process of completing the set of directed edges.
\end{rem}
\proof:
It is a matter of computation to check that the action of $U_{\cS_{CC}}$ on basis vectors of the form $|j,k\ket$ coincides with (\ref{def:UCC}) while its action on vectors  $|j,k\ket_-$ yields
\begin{align}\label{deftildeUCC}
&\begin{pmatrix} U_{\Z^2}^{\cS_{CC}} |2j+1,2k\rangle_- \cr U_{\Z^2}^{\cS_{CC}}|2j,2k-1\rangle_- \end{pmatrix} = {\textsf S}_{2j+1,2k} \begin{pmatrix}|2j,2k\rangle_-\cr  |2j+1,2k-1\rangle_-  \end{pmatrix},\notag \\
&\begin{pmatrix} U_{\Z^2}^{\cS_{CC}} |2j+1,2k+1\rangle_-  \cr U_{\Z^2}^{\cS_{CC}} |2j+2,2k\rangle_- \end{pmatrix} =  {\textsf S}_{2j,2k} \begin{pmatrix} |2j+2,2k+1\rangle_- \cr  |2j+1,2k\rangle_-  \end{pmatrix}.
\end{align}
Therefore both $l^2(\Z^2)$ and $l^2_-(\Z^2)$ are invariant under $U_{\cS_{CC}} $ and, by inspection, $\tilde U_{CC}$ coincides with a Chalker-Coddington model as described.  
\qed

The Chalker-Coddington model is one of many quantum network models. While this designation is not universal, it often refers to QWs defined on graphs in a similar fashion as (\ref{defscatu}) with the understanding that the edges can be travelled by the quantum walker in one direction only. This imposes the degrees $d_x$ to be even for all $x\in V$, so that  the  scattering matrix $S(x)\in U(d_x)$ assigned to  $x\in V$ maps the $d_x/2$ incoming edges at $x$ to the $d_x/2$ outgoing edges from $x$, see \cite{D} for example. The Scattering Zippers, introduced and studied in \cite{MS-B, BM} provide another example related to the Chalker-Coddington model defined on a strip. 

Without going into the details, such quantum network models can be viewed as restrictions of SQW defined on a doubled Hilbert space, as we saw for the Chalker-Model model.

\subsection{Coined Quantum Walks}

Coined QWs defined on graphs as introduced by \cite{AAKV}, provide a useful concept in quantum computing and a versatile modeling tool in quantum dynamics. We show that they also belong to the class of SQWs on regular graphs.

Let us consider the case where all scattering matrices are identity $d_x\times d_x$ matrices, $S(x)=\un_{d_x}$, which reduces (\ref{defug}) to 
\be\label{fdeff}
F=\sum_{x\in V}\sum_{y\sim x}  |yx\ket \bra xy| = F^*=F^{-1}.
\ee
The notation $F$ is justified by the flip property $F |xy\ket=|yx\ket$, for all $x\sim y$. Therefore $F$ is reduced by all orthogonal two-dimension subspaces $\cH_{xy}=\mbox{span }\{|xy\ket, |yx\ket\}$, where $x\sim y$, with matrix representation in the ordered basis $\{|xy\ket, |yx\ket\}$,
\be
F|_{\cH_{xy}}=\begin{pmatrix}0 & 1 \cr 1 & 0\end{pmatrix} \ \Rightarrow \ F=\bigoplus_{(xy)\in E}F|_{\cH_{xy}} \ \mbox{and} \ \sigma (F)=\{-1,1\}.
\ee
This trivial case allows for the decomposition of the general case according to 
\be\label{decomp}
U_\cS=F(F U_\cS) \ \ \mbox{where} \ \ F|_{\cH_x^{\rm O}}:\cH_x^{\rm O}\ra \cH_x^{\rm I}, \ \ \mbox{and} \ \ F U_\cS|_{\cH_x^{\rm I}} : \cH_x^{\rm I} \ra \cH_x^{\rm I}, \ \ \forall x\in V,
\ee 
such that
\be \label{C}
F U_\cS=\bigoplus_{x\in V} S(x)  \ \ \mbox{where} \ \ S(x)\simeq \sum_{y\sim x\atop z\sim x} S_{zy}(x) |xz\ket \bra xy|:\cH_x^{\rm I} \ra \cH_x^{\rm I}.
\ee
In the decomposition (\ref{decomp}), $F$ is responsible for the motion of the quantum walker between different subspaces $\cH_x^{\rm I}$, while $FU_\cS$ changes its state locally within each $\cH_x^{\rm I}$. 

\medskip

The general structure \eqref{decomp} of $U_\cS$  as the composition of a local unitary operator and an operator that implements the motion is that of a Coined QW. However, in many models of quantum dynamics by Coined QWs on infinite regular graphs derived from the Schr\"odinger equation, the operator responsible for the motion of the walker is unitarily equivalent to a direct sum of shifts, analogous to the effect of the Laplacian, and therefore has absolutely continuous spectrum. The effect of the potential is local and analogous to the action of $F U_\cS$. These specificities are relevant when analyzing the spectral and dynamical (de-)localization properties of {Random QWs},  see {\it e.g.} \cite{Ko, J3, HJ, JM, ABJ2, ABJ3, ABJ4, ABJ5}. Coined QWs of this sort defined on regular infinite graphs can be viewed as SQW:
\medskip

Assume $G$ is an infinite $d$ regular graph, and let $\theta:\{1,2,\dots, d\}$ be a permutation. Each vertex $x\in V$ forms exactly $d$ edges with vertices in $V$  that are labeled  $x_1, x_2, \dots, x_d$. Set $F_\theta$, the  operator defined by its action on the basis vectors of $l^2(D)$
\be\label{defft}
F_\theta|_{\cH_x^{\rm I}}: \cH_x^{\rm I}\ra \cH_x^{\rm O} \ \ \text{s.t.} \ \ F_\theta |xx_j\ket = |x_{\theta(j)} x\ket, \ \ \forall \ j\in \{1,2,\dots, d\}, x\in V.
\ee
By definition, $F_\theta$ is unitary, as a change of basis, and  its adjoint $F_\theta^*$ acts as
\be\label{defftstar}
F_\theta^*|_{\cH_x^{\rm O}}: \cH_x^{\rm O}\ra \cH_x^{\rm I} \ \ \text{s.t.} \ \ F_\theta^* |x_j x\ket = |x x_{\theta^{-1}(j)}\ket, \ \ \forall \ j\in \{1,2,\dots, d\}, x\in V.
\ee
We have a  decomposition similar to (\ref{decomp}) thanks to $F_\theta$
\be\label{decompcqw}
U_\cS=F_\theta(F_\theta^* U_\cS) \ \ \mbox{where}  \ \ F_\theta^* U_\cS|_{\cH_x^{\rm I}} : \cH_x^{\rm I} \ra \cH_x^{\rm I}, \ \ \forall x\in V,
\ee
\medskip
and, with the shorthand $S_{jk}(x)=S_{x_j x_k}(x)$, 
\be\label{coinsqw}
F_\theta^* U_\cS |x x_j\ket=\sum_{k}S_{\theta(k) j}(x) |x x_{k}\ket.
\ee
Then we observe that  for each $x\in V$, the matrix $S_\theta(x)=(S_{\theta(k) j}(x))_{k,j}\in U(d)$ is a scattering matrix as well, obtained by permuting the rows of $S(x)$, so that
\be
F_\theta^* U_\cS=\bigoplus_{x\in V}S_\theta(x), \ \ \text{where} \ S_\theta(x): \cH_x^{\rm I}\ra  \cH_x^{\rm I}
\ee 
is a local (coin) operator  with arbitrary scattering matrices $S_\theta(x)$. The spectral properties of $F_\theta$ depend on the specificities of the graph and of the permutation $\theta$. 
\medskip

We spell out the case  $G=\Z^d$, where the degree of each vertex $x\in \Z^d$ is $2d$. Set  $I=\{\pm 1, \pm 2, \dots, \pm d\}$ and let $\{e_1, e_2, \dots, e_d\}$ be the canonical basis of $\R^d$. We denote a generic basis vector of $\cH_x^{\rm I}\subset l^2(D)$ as 
\be\label{canzd}
|x x_\tau\ket, \ \ \text{where} \ \ \tau\in I, \ \ \text{and} \ \ x_\tau=x-\sign (\tau) e_{|\tau|}\in \Z^d.
\ee 
The next lemma shows that the SQW $U_\cS$ on $l^2(D)$ is unitarily equivalent to a Coined QW:

\begin{lem}
Let $G=\Z^d$ and $l^2(D)$ with canonical basis given by $\{|x x_\tau\ket\}_{x\in \Z^d, \tau \in I}$, as in  (\ref{canzd}).
We have the unitary equivalence  
\be\label{eqspaces}
l^2(D)\simeq l^2(V)\otimes \C^{2d}=\text{\em span } \{|x\ket \otimes |\tau\ket , \ x\in V, \tau \in I\},
\ee
with the identification
\be
 |x x_\tau\ket \simeq |x\ket \otimes |\tau\ket
\ee
where $\{|\tau\ket\}_{\tau\in I}$ labels an orthonormal basis of  $\C^{2d}$.\\
Let  $\theta$ defined by
\be\label{chothet}
\theta: I\ra I \ \ \text{s.t.} \ \ \theta (\tau)=-\tau, \ \ \forall \tau \in I,
\ee
then we have the unitary equivalences
\be
F_\theta \simeq T , \ \ F_\theta^* U_\cS \simeq  C,
\ee
where for all $x\in V$, $\tau\in I$,
\begin{align}\label{sqwcoin}
&C (|x\ket \otimes |\tau\ket) =|x\ket \otimes \sum_{\tau'\in I}S_{\theta}(x)_{\tau' \tau}  |\tau'\ket,\nonumber\\
&T  (|x\ket \otimes |\tau\ket )= |x+{\rm sign} (\tau) e_{|\tau|} \ket \otimes |\tau\ket,
\end{align}
so that $U_\cS$ is unitarily equivalent to the Coined QW defined by the composition $TC$.
\end{lem}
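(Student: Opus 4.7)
The statement splits into three independent verifications: the unitary equivalence \eqref{eqspaces}, the identification $F_\theta\simeq T$, and the identification $F_\theta^*U_\cS\simeq C$. Once these are in place, the factorization $U_\cS = F_\theta(F_\theta^*U_\cS)$ from \eqref{decompcqw} immediately yields $U_\cS\simeq TC$, which is the conclusion.

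For the equivalence \eqref{eqspaces}, I would first observe that every directed edge of $\Z^d$ has a unique terminal vertex $x$, and the $2d$ neighbors of $x$ are indexed bijectively by $\tau\in I$ via $x_\tau = x-\sign(\tau)e_{|\tau|}$. Hence $\{|xx_\tau\rangle\}_{x\in\Z^d,\tau\in I}$ is an orthonormal basis of $l^2(D)$, and the linear extension of $|xx_\tau\rangle\mapsto |x\rangle\otimes|\tau\rangle$ is a unitary $W:l^2(D)\to l^2(V)\otimes\C^{2d}$.

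Next I would compute $F_\theta$ on this basis: \eqref{defft} with $\theta(\tau)=-\tau$ gives $F_\theta|xx_\tau\rangle = |x_{-\tau}\,x\rangle$. Setting $y:=x_{-\tau} = x+\sign(\tau)e_{|\tau|}$, I rewrite $|yx\rangle$ as an incoming edge at its terminal $y$: the unique $\sigma\in I$ with $y_\sigma=x$ satisfies $\sign(\sigma)e_{|\sigma|}=\sign(\tau)e_{|\tau|}$, so $\sigma=\tau$. Thus $F_\theta|xx_\tau\rangle = |y\,y_\tau\rangle$, which under $W$ becomes $|y\rangle\otimes|\tau\rangle = T(|x\rangle\otimes|\tau\rangle)$, proving $WF_\theta W^*=T$. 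For the coin part I would invoke \eqref{coinsqw} with $j=\tau$, $k=\tau'$:
\[
F_\theta^*U_\cS|xx_\tau\rangle \;=\; \sum_{\tau'\in I} S_{\theta(\tau'),\tau}(x)\,|xx_{\tau'}\rangle \;=\; \sum_{\tau'\in I} S_\theta(x)_{\tau',\tau}\,|xx_{\tau'}\rangle,
\]
and apply $W$ to obtain $|x\rangle\otimes\sum_{\tau'}S_\theta(x)_{\tau',\tau}|\tau'\rangle = C(|x\rangle\otimes|\tau\rangle)$, i.e.\ $WF_\theta^*U_\cS W^*=C$.

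The only substantive point is the sign bookkeeping in the shift step: the choice $\theta(\tau)=-\tau$ is exactly what makes an edge leaving $x$ in direction $-\tau$ reappear at its far endpoint as an incoming edge still labelled by the same $\tau$, so the tensor factor $|\tau\rangle$ is transported unchanged by the standard shift $T$. Everything else reduces to formal manipulation of the basis identification.
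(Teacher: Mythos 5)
Your proposal is correct and follows essentially the same route as the paper's proof: establish the basis identification $|xx_\tau\rangle\simeq|x\rangle\otimes|\tau\rangle$, compute $F_\theta|xx_\tau\rangle=|x_{-\tau}x\rangle=|y\,y_\tau\rangle$ with $y=x+\sign(\tau)e_{|\tau|}$ to get $T$, and read off $C$ from \eqref{coinsqw}. Your explicit verification that the returning edge carries the same label $\tau$ (the "sign bookkeeping") is slightly more detailed than the paper's, but the argument is the same.
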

\proof:
With our convention that $(x x_\tau)$ points in the direction of $x$, this edge is independent of $x\in \Z^d$. The corresponding basis vectors of $\cH_x^{\rm I}$ read 
\be
|x x_\tau\ket = | \sign (\tau) e_{|\tau|} \ket, \ \ \tau \in I,
\ee
and point in the $2d$ directions of the lattice. Denoting by $\{|\tau\ket\}_{\tau\in I}$ an orthonormal basis of  $\C^{2d}$, we have 
\be
\cH_x^{\rm I}=\text{span } \{|x x_\tau\ket , \ \tau \in I\}\simeq \C^{2d}=\text{span } \{  |\tau\ket \in I\}.
\ee
Making use of $l^2(D)=\oplus_{x\in V}\cH_x^{\rm I}$, see (\ref{dirsumspa}), we deduce the announced unitary equivalence (\ref{eqspaces}).
Now, considering  (\ref{defft}), we get with (\ref{chothet}) and (\ref{canzd})
\be\label{imshift}
F_\theta |x x_{\tau}\ket=|x_{-\tau}x\ket = |y y_\tau\ket,
\ee
with $y=x+\sign (\tau) e_{|\tau|}$. In the space $ l^2(V)\otimes \C^{2d}$, this amounts to the action of $T$, the translation by $\sign (\tau) e_{|\tau|}$ in $V$, leaving the component in $\C^{2d}$ invariant. 
Finally, (\ref{coinsqw}) yields  for each $x\in V$, $\tau \in I$
\be
(F_\theta^* U_\cS)|x x_\tau\ket= \sum_{\tau'\in I}S_{\theta}(x)_{\tau' \tau}  |x x_{\tau'}\ket \in \cH_x^{\rm I},
\ee
which gives the equivalence with $C$.
\qed

\begin{rem} i) The composition $T C$ is the prototypical Coined QW on $\Z^d$.\\
ii) Other choices {than \eqref{chothet} for} the permutation $\theta$, besides the identity permutation, may lead $F_\theta$ to be reduced by infinitely many finite dimensional subspaces and to be pure point spectrum. 
\end{rem}

\medskip

Similar considerations, which we don't make explicit here, show that the Coined QWs on  the $d-$regular tree, studied in \cite{HJ, JMa, T} for example, are also special cases of SQWs.

\subsection{Star-graph}\label{sec:star}

Consider $G$ to be the star-graph $SG$ with $N$ branches characterized by vertices $x_0, x_1, \dots, x_N$, with edges between $x_0$ and $x_j$, $1\leq j\leq N$, only, see Figure \ref{stargraph}. 

\begin{figure}[h]
\centering
\begin{tikzpicture}
  \node[circle, draw] (center) at (0,0) {$x_0$};
  
  \foreach \i/\label in {1/x_1, 2/x_2, 3/x_3, 4/x_4, 5/x_N} {
    \pgfmathsetmacro{\angle}{360/5 * (\i - 1)}
    \ifnum\i=4
      \node (dots) at (\angle:1.5cm) {$\dots$};
    \else
      \node[circle, draw] (vertex\i) at (\angle:2cm) {$\label$};
      \draw[->, thick] (center) to[bend left=+15] (vertex\i); 
      \draw[->, thick] (vertex\i) to[bend left=+15] (center);
    \fi
  }
\end{tikzpicture}
\caption{\small Star-graph $SG$ with $N$ branches.}
\label{stargraph}
\end{figure}
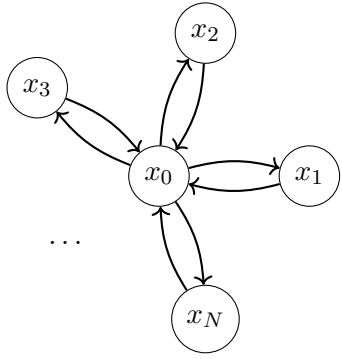

To simplify the notation, we denote the vertices $x_j$
by their label $j$ only. Hence we have $d_0=N$, and $d_j=1$, $1\leq j\leq N$, and the $2N$ canonical basis vectors of $l^2(D)$ are $\{|0j\ket, |j0\ket\}_{1\leq j\leq N}$. To the center vertex $0$ we associate the scattering matrix $S(0)=(S_{kj}(0))_{1\leq k, j \leq N}\in U(N)$, and to the vertices $j$, $1\leq j\leq N$, we associate $S(j)=e^{i\theta_j}\in U(1)$. We denote this set of matrices by $\cS_{SG}$
\medskip

The matrix form of $U_{\cS_{SG}}\in U(2N)$ in the  basis $\{|01\ket, \dots, |0N\ket,  |10\ket, \dots, |N0\ket\}$ (using the same symbol) reads
\begin{align}\label{stargraphN}
&U_{\cS_{SG}}=
\begin{pmatrix}
0 & 0 & \dots & 0 & e^{i\theta_1} & 0 & \dots & 0 \cr
0 & 0 & \dots & 0 & 0 & e^{i\theta_2}  & \dots & 0 \cr
\vdots & &\ddots &\vdots & \vdots& & \ddots&\vdots \cr 
0 & 0 & \dots & 0 & 0 &  0 & \dots & e^{i\theta_N} \cr
S_{11}(0) & S_{12}(0) & \dots & S_{1N}(0) & 0 & 0 & \dots & 0\cr
S_{21}(0) & S_{22}(0) & \dots & S_{2N}(0) & 0 & 0 & \dots & 0\cr
\vdots & & \ddots& \vdots&\vdots & & \ddots&\vdots \cr 
S_{N1}(0) & S_{N2}(0) & \dots & S_{NN}(0) & 0 & 0 & \dots & 0
\end{pmatrix}
=\begin{pmatrix}
0 & D(\theta) \cr S(0) & 0
\end{pmatrix},
\end{align} 
where $D(\theta)\in U(N)$ and $0 \in M_N(\C)$ are defined by the block structure  of $U_{SG_N}$. 
As 
\be
U_{\cS_{SG}}^2=\begin{pmatrix}
 D(\theta)S(0) & 0  \cr  0 & S(0)D(\theta) 
\end{pmatrix},
\ee
where  $S(0)D(\theta)$ and $D(\theta)S(0)$ are unitarily equivalent, 
\be 
\sigma(D(\theta)S(0))=\sigma(S(0)D(\theta))=\{e^{i\alpha_j}\}_{1\leq j\leq N},
\ee 
with eigenvalues repeated according to multiplicity, we deduce that
\be\label{specUSG}
\sigma(U_{\cS_{SG}})=\{\pm e^{i\alpha_j/2}\}_{1\leq j\leq N}. 
\ee

\begin{rem}
Observe that we can allow $d_0=N=\infty$ in this case, with $l^2(D)=l^2(\N^*)\oplus l^2(\N^*)$, and 
$S(0)$ a unitary operator on  $l^2(\N^*)$, and $D(\theta)$ a diagonal unitary operator on $ l^2(\N^*)$. It is also possible to consider $l^2(D)=l^2(\Z)\oplus l^2(\Z)$, labelling the branches of the star graph from $-\infty$ to $\infty$, with $S(0)$ and $D(\theta)$ being unitary operators on $l^2(\Z)$.
\end{rem}

Therefore, the restriction $U_{\cS_{SG}}^2|_{\cH^{\rm I}_0}\simeq D(\theta)S(0)$ yields a general unitary operator acting on the (in-)finite dimensional Hilbert space $\cH^{\rm I}_0$, that can be cast under the form of a QW, by designing $S(0)$ appropriately. Operators of this kind are the starting point for the analysis of certain deterministic unitary operators $S(0)$ perturbed by random phases $D(\theta)={\rm diag} (e^{i\theta_j})$, appearing in models displaying Anderson localization \cite{Ko, J3,  HJS2, JM, ABJ2, ABJ3, ABJ4, ABJ5}.

\medskip

We proceed with the case of the finite star-graph to illustrate Lemma \ref{rage}. We thus need to determine the spectrum and eigenprojectors of (\ref{stargraphN}). Since the main steps hold  in the infinite dimensional case, consider the following slightly more abstract framework:

Let $\cH$ be a separable Hilbert space and let $S\in \cB(\cH)$, $D\in \cB(\cH)$ be unitary operators. We consider $U$ defined on the direct sum of Hilbert spaces $\cK=\cH\oplus \cH$ by the block representation
\be\label{blocU}
U= \begin{pmatrix}
0 & D \\
S & 0
\end{pmatrix}.
\ee 
\begin{lem}\label{resolblocU} For any $|z|\neq 1$ we have
\be
(U-z\un_\cK)^{-1}=\begin{pmatrix}
z(DS-z^2\un_\cH)^{-1} & D(SD-z^2\un_\cH)^{-1}  \\
S(DS-z^2\un_\cH)^{-1}  & z(SD-z^2\un_\cH)^{-1}
\end{pmatrix},
\ee
and $\sigma(U)=\{\lambda \in \mathbb S\, | \, \lambda^2\in \sigma(DS)\}$.\\
If $e^{i\alpha}$ is an eigenvalue of $DS$ with associated spectral projector $P(\{\alpha\})$, then $\pm e^{i\alpha/2}$ is an eigenvalue of $U$ with associated spectral projector
\be\label{ealpha2}
E^\pm(\{\alpha/2\})=\frac12 \begin{pmatrix}
P(\{\alpha\}) & \pm D Q(\{\alpha\})e^{-i\alpha/2}\\
\pm D^{-1} P(\{\alpha\})e^{i\alpha/2} & Q(\{\alpha\})
\end{pmatrix},
\ee
where $Q(\{\alpha\})=D^{-1}P(\{\alpha\}) D$ is the spectral projector of $SD$ associated with $e^{i\alpha}\in \sigma(SD)$. The multiplicity of $\pm e^{i\alpha/2}\in \sigma(U)$ equals that of $e^{i\alpha}\in \sigma(DS)$.
\end{lem}
\begin{rem}
i) We have
\be
E^\pm(\{\alpha/2\})=\begin{pmatrix}
\un_\cH & 0\\
0 & D^{-1}
\end{pmatrix}
\frac12 \begin{pmatrix}
P(\{\alpha\}) & \pm P(\{\alpha\})e^{-i\alpha/2}\\
\pm P(\{\alpha\})e^{i\alpha/2} & P(\{\alpha\})
\end{pmatrix}
\begin{pmatrix}
\un_\cH & 0\\
0 & D
\end{pmatrix},
\ee
where the operator in the middle is the spectral projector of $ \begin{pmatrix}
0 & \un_\cH \\
DS & 0
\end{pmatrix}$, which is unitarily equivalent to $U$.\\
ii) The expression for the resolvent of $U$ holds for bounded operators $S$ and $D$, for $z^2\not\in \sigma(DS)\cup \sigma(SD)$. \\
iii) In case $\cH$ is finite dimensional and $DS$ admits the spectral decomposition
\be
DS=\sum_{j=1}^{n} e^{i\alpha_j}P_j, \ \ \text{with} \ \ P_j=P(\{\alpha_j\}),
\ee
$U$ admits the following spectral decomposition 
\be\label{specdecstar}
U=\sum_{j=1}^{n} e^{i\alpha_j/2} \Pi_j^+ - e^{i\alpha_j/2} \Pi_j^-, \ \ \text{where } \ \ \Pi_j^\tau=E^\tau(\{\alpha_j/2\}), \ \ \tau \in \{+, -\}.
\ee
\end{rem}
\proof:
The expression of the resolvent can be derived using the formulae for $p\in \N$
\be
U^{2p}= \begin{pmatrix}
(DS)^p & 0 \\
0 & (SD)^p
\end{pmatrix}, \ \  U^{2p+1}= \begin{pmatrix}
0 & D(SD)^p \\
S(DS)^p & 0
\end{pmatrix}, 
\ee
and $(U-z\un_{\cK})^{-1}=-\frac1z\sum_{j=0}^\infty U^j/z^j$, for $|z|>1$. {A direct verification shows the formula holds for $|z|<1$ as well.} Since $\sigma(DS)=\sigma(SD)$, the statement about $\sigma(U)$ follows. The spectral projector is obtained by the Riesz formula in case $\pm e^{i\alpha/2}$ is isolated in the spectrum. {In any case, a direct verification shows that $E^\pm(\{\alpha/2\})$ is an orthogonal projector such that $UE^\pm(\{\alpha/2\})=E^\pm(\{\alpha/2\})U=\pm e^{i\alpha/2}E^\pm(\{\alpha/2\})$. Moreover, $\tr (E^\pm(\{\alpha/2\})=\tr(P\{\alpha\}\})$, showing  $\pm e^{i\alpha/2}$ has the same multiplicity as $e^{i\alpha}$. } 
\qed

Let us come back to the star-graph with $N$ branches, and the ordered basis chosen before (\ref{stargraphN}). The structure of that basis allows us to 
express $l^2(D)=\cH^{\rm I}_{x_0}\oplus \cH^{\rm O}_{x_0}$, where both subspaces can be identified to $\C^N$. Accordingly, we write $\psi=\begin{pmatrix}\psi^{\rm I} \\ \psi^{\rm O}\end{pmatrix}$, where the indices $\rm I, O$ refer to $\cH^{\rm I}_{x_0}$, $\cH^{\rm O}_{x_0}$. To keep things simple, we assume that $ \psi_0 = \psi_0^{\rm I}$ and that the spectrum of $D(\theta)S(0)$ is simple. This allows us to compute  the asymptotic (Ces\`aro mean) probabilities to find the quantum walker on the center $x_0$  and on the branch $x_k$ of the star-graph.
\begin{prop}
Consider the SQW $U_{\cS_{SG}}$ defined by (\ref{stargraphN}) and assume the spectrum $\sigma(D(\theta)S(0))$ is simple, with corresponding orthonormal basis of eigenvectors $\{\phi_j\}_{1\leq j\leq N}$. Consider a normalized initial vector $ \psi_0 = \psi_0^{\rm I}\in \cH_{x_0}^{\rm I}$. Then, 
\begin{align}\label{ragetype}
\lim_{N\ra \infty}\frac{1}{N}\sum_{j=0}^{N-1}\Q_n^{\psi_0^{\rm I}}(x_0)
&=\frac12\nonumber\\
\lim_{N\ra \infty}\frac{1}{N}\sum_{j=0}^{N-1}\Q_n^{\psi_0^{\rm I}}(x_k)
&=\sum_{j=1}^N\frac12|\bra k0| \phi_j\ket|^2|\bra\phi_j|\psi_0^{\rm I}\ket|^2.
\end{align}
\end{prop}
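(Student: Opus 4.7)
The plan is to combine Lemma~\ref{rage} with the explicit spectral resolution of $U_{\cS_{SG}}$ provided by Lemma~\ref{resolblocU}. Applying Lemma~\ref{rage} immediately reduces both Ces\`aro limits to the evaluation of
\be
\sum_{e^{i\mu}\in\sigma_p(U_{\cS_{SG}})}\|P_{x}^{\rm I}E(\{\mu\})\psi_0\|^2,
\ee
for $x=x_0$ and $x=x_k$ respectively. The block form (\ref{stargraphN}) of $U_{\cS_{SG}}$ acting on $\cH^{\rm I}_{x_0}\oplus\cH^{\rm O}_{x_0}\simeq\C^N\oplus\C^N$, with $D=D(\theta)$ and $S=S(0)$, puts us exactly in the setting of Lemma~\ref{resolblocU}. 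Under the simple spectrum hypothesis on $D(\theta)S(0)$, the relevant spectral projectors are $E^{\pm}(\{\alpha_j/2\})$, $1\leq j\leq N$, with $P(\{\alpha_j\})=|\phi_j\ket\bra\phi_j|$ and $Q(\{\alpha_j\})=D(\theta)^{-1}|\phi_j\ket\bra\phi_j|D(\theta)$.

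Writing $\psi_0=\psi_0^{\rm I}\in\cH^{\rm I}_{x_0}$ as $\begin{pmatrix}\psi_0^{\rm I}\\ 0\end{pmatrix}$ in the block decomposition, the explicit form of $E^{\pm}(\{\alpha_j/2\})$ in Lemma~\ref{resolblocU} gives
\be
E^{\pm}(\{\alpha_j/2\})\psi_0=\frac{\bra\phi_j|\psi_0^{\rm I}\ket}{2}\begin{pmatrix}\phi_j\\ \pm e^{i\alpha_j/2}D(\theta)^{-1}\phi_j\end{pmatrix}.
\ee

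For $x=x_0$, the projector $P^{\rm I}_{x_0}$ retains only the top block, so $\|P^{\rm I}_{x_0}E^{\pm}(\{\alpha_j/2\})\psi_0\|^2=\frac14|\bra\phi_j|\psi_0^{\rm I}\ket|^2$. Summing over the two signs and over $j\in\{1,\dots,N\}$, and invoking Parseval's identity $\sum_{j=1}^N|\bra\phi_j|\psi_0^{\rm I}\ket|^2=\|\psi_0^{\rm I}\|^2=1$ for the orthonormal eigenbasis, yields the value $\tfrac12$. For $x=x_k$ with $1\leq k\leq N$, the rank-one projector $P^{\rm I}_{x_k}=|k0\ket\bra k0|$ selects the $k$-th coordinate of the bottom block; since $D(\theta)^{-1}$ is diagonal with unimodular entries $e^{-i\theta_k}$ in the chosen basis, the $k$-th component of $D(\theta)^{-1}\phi_j$ has the same modulus as that of $\phi_j$, giving $\|P^{\rm I}_{x_k}E^{\pm}(\{\alpha_j/2\})\psi_0\|^2=\frac14|\bra k0|\phi_j\ket|^2|\bra\phi_j|\psi_0^{\rm I}\ket|^2$ under the canonical identification of $\cH^{\rm I}_{x_0}$ and $\cH^{\rm O}_{x_0}$ with $\C^N$. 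Summing over the two signs and over $j$ reproduces the announced formula.

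The argument is essentially bookkeeping, with no analytic subtlety: the phases $e^{\pm i\alpha_j/2}$ and $e^{\pm i\theta_k}$ drop out under squared moduli, and the only structural input is the explicit block decomposition of the spectral projectors already established in Lemma~\ref{resolblocU}. The only point to watch is the identification between the top and bottom blocks, and hence the meaning of $\bra k0|\phi_j\ket$, when reading off the bottom-left off-diagonal block of $E^{\pm}(\{\alpha_j/2\})$.
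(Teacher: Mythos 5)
Your proposal is correct and follows essentially the same route as the paper: apply Lemma \ref{rage}, insert the explicit spectral projectors $E^{\pm}(\{\alpha_j/2\})$ from Lemma \ref{resolblocU} with $P_j=|\phi_j\ket\bra\phi_j|$, and read off the top and bottom blocks, with the unimodular diagonal $D(\theta)^{-1}$ dropping out under the squared modulus. The computations match the paper's proof step for step.
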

\proof: We use the lighter notation (\ref{blocU}) and that of Lemma \ref{resolblocU} in this proof.
 To make use of  Lemma \ref{rage}, we compute 
$\|P^{\rm I}_{x_0}\Pi^\tau_j \psi_0\|^2$, for $ \tau\in\{+,-\}$, $1\leq j \leq N$ {using \eqref{ealpha2} and \eqref{specdecstar}.}
As a matrix in the chosen basis, we have
$
P^{\rm I}_{x_0}= \begin{pmatrix}
\un & 0 \\
0 & 0
\end{pmatrix}
$, so that 
\be\label{pix0}
P^{\rm I}_{x_0}\Pi_j^\tau \psi_0 = \frac12 \begin{pmatrix}P_j\psi_0^{\rm I} \\ 0 \end{pmatrix} \ \ \Rightarrow \ \ \|P^{\rm I}_{x_0}\Pi_j^\tau \psi_0^{\rm I}\|^2=\frac14\|P_j\psi_0^{\rm I}\|^2.
\ee
Consider now $P^{\rm }_{x_k}= |k0\ket\bra k0|$,  $1\leq k\leq N$, such that for any $\phi\in l^2(D)$, $\|P^{\rm }_{x_k}\phi\|^2=|\bra k0| \phi\ket|^2$. We compute
\be
P^{\rm I}_{x_k}\Pi^\tau_j\psi_0^{\rm I}=\begin{pmatrix}0 \\ |k0\ket\bra k0| (\Pi^\tau_j\psi_0^{\rm I})^{\rm O} \end{pmatrix}, \ \ \text{where} \ \ (\Pi^\tau_j\psi_0^{\rm I})^{\rm O} =\frac12 \tau D^{-1}P_je^{i\alpha_j/2}\psi_0^{\rm I}.
\ee
Eventually, making use of the fact that $D$ is diagonal and unitary, and of the assumption  $P_j=|\phi_j\ket\bra \phi_j|$, we get
\be\label{pixk}
\|P^{\rm I}_{x_k}\Pi^\tau_j\psi_0^{\rm O}\|^2=\frac14|\bra k0| \phi_j\ket|^2|\bra\phi_j|\psi_0^{\rm I}\ket|^2.
\ee
{It remains to sum the expressions \eqref{pix0} and \eqref{pixk} over $j\in\{1,\dots, N\}\}$ and $\tau\in\{+,-\}$ to get the limits in \eqref{ragetype} according to Lemma \ref{rage}, using the fact that $\{\Phi_j\}_{1\leq j\leq N}$ form an orthonormal basis in \eqref{pix0}.}
\qed

\subsection{Generalized Grover Walk}

A popular unitary QW used in quantum computing is the so called Grover QW, which corresponds to the following set of scattering matrices, see {\it e.g.}
\cite{P, HKSS2}
\be \label{groscat}
S(x) = \frac{2}{d_x}A-\un \in U(d_x), \ \ \text{for all} \ \ x\in V,
\ee
where $\un$ denotes the identity matrix and, for all $x\in V$, $A$ is the all-ones matrix, $A_{zy}=1$, $y\sim x, z\sim x$.  
In this section, we introduce and analyze a slight generalization of the Grover walk and we prove a spectral mapping theorem between this QW and a self-adjoint operator related to the adjacency matrix of the graph $G$. This kind of result has been shown in various setups in a series of papers, see \cite{HKSS2, HS, HSS}, for the Grover walk, by means of boundary operators. Our approach of the generalized Grover walk based on  the Feshbach-Schur method proposes an alternative route to that used in these references, that we believe is interesting in its own right. We provide a first result for finite graphs, that we then generalize to the infinite dimensional case in an abstract setting. 

Note that the  boundary operator method developed in \cite{HKSS2, HS, HSS} to reduce the problem originally stated on $l^2(D)$ to an operator on $l^2(V)$, will be used again in Section \ref{sec:IOQW} to define induced open SQWs.

\medskip

Let $G$ be finite and let $\alpha\in (-\pi, \pi]$. For each $x\in V$, let $\omega(x)\in \C^{d_x}$ such that $\|\omega(x)\|=1$. The generalized Grover walk of parameter $\alpha$, or $\alpha-$Grover walk, is defined by the set $\cS=\{S(x), x\in V\}$ where:
\be\label{Salpha}
S(x)=|\omega(x)\ket\bra\omega(x)|+e^{i\alpha}(\un_{d_x} - |\omega(x)\ket\bra\omega(x)|), \ \ \forall x\in V.
\ee 
For $\alpha= 0$, $S(x)= \un_{d_x}$, and $\alpha=\pi$ corresponds to the Grover walk. Note that in case $d_x=1$, $S(x)=1$.

We thus assume $\alpha\neq 0$ in the following and note that (\ref{Salpha}) yields the spectral decomposition of $S(x)$, with $\sigma(S(x))=\{1, e^{i\alpha}\}$, where 1 has multiplicity 1 and $e^{i\alpha}$ has multiplicity $1-d_x$. The corresponding SQW is denoted by $U_{\alpha}$ in this section.
As in \eqref{C}, we identify $\C^{d_x}$ and $\cH_x^{\rm I}$, and abuse slightly notations to write
\be
\omega(x)=\sum_{y\sim x}\omega_y(x)|xy\ket.
\ee
Accordingly, with $F$ given in \eqref{fdeff},
\be\label{Calpha}
F U_{\alpha}= (\Pi + \e^{i\alpha}(\un - \Pi)),
\ee
see \eqref{C}, where 
\be\label{piproj}
\Pi=\bigoplus_{x\in V} |\omega(x)\ket\bra\omega(x)|=\Pi^2=\Pi^*
\ee
is the spectral projector corresponding to the eigenvalue 1 of $F U_\alpha$, and $\un-\Pi$ is that corresponding to the eigenvalue $e^{i\alpha}$ of that unitary operator, with
\be\label{dimprojpi}
\dim \Pi = |V|, \ \text{and} \ \dim (\un -\Pi)=|D|-|V|.
\ee

We introduce the orthogonal subspaces $\cH_1=\Pi l^2(D)$, $\cH_2=(\un-\Pi)l^2(D)$ and express the operator $F$ in matrix form according to $l^2(D)=\cH_1\oplus \cH_2$ as
\be
F=\begin{pmatrix} F_{11} & F_{12} \\ F_{21} & F_{22}\end{pmatrix}, \ \text{where} \ F_{ij}: \cH_j \ra \cH_i.
\ee
In particular, $F_{jj}=F_{jj}^*$ on $\cH_j$ and since $F$ is unitary, $F_{jj}$ is a contraction on $\cH_j$:  $\|F_{jj}\|\leq 1$, $j\in \{1,2\}$.

\medskip

The following map will be important for the result to come: \\
 Let $\alpha\in (-\pi, \pi]\setminus\{0\}$ and  $\ffi_\alpha : {\mathbb S}^1\ra \R$, with ${\mathbb S}^1$ the unit circle, defined by 
\be\label{fial}
\ffi_\alpha(\lambda)=\frac{\lambda^2-\e^{i\alpha}}{\lambda(1-\e^{i\alpha})}.
\ee 
It is readily seen, writing $\lambda=\e^{i\theta}$, $\theta \in (-\pi, \pi]$,  that  we have
\be\label{lamthe}
\ffi_\alpha(\e^{i\theta})=\mu=\frac{\sin(\alpha/2-\theta)}{\sin(\alpha/2)}\in [-|\sin(\alpha/2)^{-1}|, |\sin(\alpha/2)^{-1}|].
\ee
Moreover,  for $\alpha\neq \pi$, $\ffi_\alpha^{-1}(\{\mu\})$ always consists of two distinct values $\lambda_+ \neq \lambda_-$ such that $\lambda_+ \lambda_-=-\e^{i\alpha}$. For $\alpha=\pi$, the same is true, except for $\ffi_\pi^{-1}(\{\pm1\})=\pm1$.

\medskip

Here is our first spectral mapping result between $U_{\alpha}$ and $\Pi F\Pi|_{\Pi l^2(D)}=F_{11}$ in the finite dimensional case:

\begin{thm} \label{findim} Let $\alpha\in (-\pi, \pi]\setminus\{0\}$, $\ffi_\alpha$ be defined by \eqref{fial}, and $\dim l^2(D)<\infty.$
Then,
\begin{align}
&\lambda \in \sigma(U_{\alpha})\setminus \{\pm\e^{i\alpha}\} \Rightarrow \mu \in  \sigma(F_{11}), \ \text{where} \ \mu=\ffi_\alpha(\lambda), \\
& \mu \in  \sigma(F_{11})\setminus\{\pm 1\} \Rightarrow \lambda_-, \lambda_+ \in \sigma(U_{\alpha}) , \ \text{where} \ \ffi_\alpha^{-1}(\{\mu\})=\{\lambda_-, \lambda_+\}.
\end{align}
Moreover,  for $\mu\in (-1,1)$, $\ffi_\alpha^{-1}(\{\mu\})=\{\lambda_-, \lambda_+\}$,
\begin{align}\label{30}
\dim \ker (U_{\alpha}-\lambda_\pm\un)=\dim \ker (F_{11}-\mu \un_1)=\dim \ker (F_{22}+\mu \un_2),
\end{align}
and, for $\alpha\neq \pi$, 
\be\label{eqker}
\ker (U_{\alpha}\pm \un) = \ker (F_{11}\pm \un_1),  \
\ker (U_{\alpha}\pm e^{i\alpha}\un)=\ker (F_{22}\pm\un_2).
\ee
For $\alpha=\pi$,  
\be\label{eqkerpi}
\ker (U_\pi\pm \un) = \ker (F_{11}\pm \un_1)+\ker (F_{22}\mp\un_2).
\ee
\end{thm}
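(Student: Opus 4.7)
The natural tool is a Feshbach--Schur reduction with respect to the block decomposition $l^2(D)=\cH_1\oplus\cH_2$ with $\cH_1=\Pi l^2(D)$, $\cH_2=(\un-\Pi)l^2(D)$. Since $FU_\alpha=\Pi+e^{i\alpha}(\un-\Pi)$ and $F=F^*=F^{-1}$, we have $U_\alpha=F(\Pi+e^{i\alpha}(\un-\Pi))$; writing $\psi=\psi_1+\psi_2$ with $\psi_j\in\cH_j$, the eigenvalue equation $U_\alpha\psi=\lambda\psi$ is equivalent, after applying $F$, to the coupled system
\begin{align*}
(\un_1-\lambda F_{11})\psi_1 &= \lambda F_{12}\psi_2,\\
(e^{i\alpha}\un_2-\lambda F_{22})\psi_2 &= \lambda F_{21}\psi_1.
\end{align*}
Self-adjointness of $F_{22}$ with $\sigma(F_{22})\subset[-1,1]$ makes $e^{i\alpha}\un_2-\lambda F_{22}$ invertible precisely when $\lambda\notin\{\pm e^{i\alpha}\}$, which matches the exclusion in the statement.

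The algebra is driven by $F^2=\un$, which yields $F_{11}^2+F_{12}F_{12}^*=\un_1$, $F_{22}^2+F_{12}^*F_{12}=\un_2$, and the intertwining $F_{11}F_{12}=-F_{12}F_{22}$. Iterating the intertwining gives $F_{11}^n F_{12}=(-1)^n F_{12}F_{22}^n$ and hence, by a geometric-series (or functional-calculus) argument, the resolvent identity
$$
F_{12}(e^{i\alpha}\un_2-\lambda F_{22})^{-1}F_{21}=(e^{i\alpha}\un_1+\lambda F_{11})^{-1}(\un_1-F_{11}^2).
$$
Substituting into the first block equation, multiplying through by $(e^{i\alpha}\un_1+\lambda F_{11})$ and using $F_{11}^2\psi_1$-cancellations collapses everything to $\lambda(1-e^{i\alpha})F_{11}\psi_1=(\lambda^2-e^{i\alpha})\psi_1$, i.e., $F_{11}\psi_1=\ffi_\alpha(\lambda)\psi_1$. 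This proves the forward direction. For the converse, given $\mu\in\sigma(F_{11})\setminus\{\pm1\}$ with $F_{11}\psi_1=\mu\psi_1$ and $\lambda\in\{\lambda_-,\lambda_+\}=\ffi_\alpha^{-1}(\{\mu\})$, the vector $\psi:=\psi_1+\lambda(e^{i\alpha}\un_2-\lambda F_{22})^{-1}F_{21}\psi_1$ solves $U_\alpha\psi=\lambda\psi$ by the same identities, reversed.

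For the multiplicity statement \eqref{30}, the intertwining $F_{22}F_{21}=-F_{21}F_{11}$ combined with $F_{12}F_{21}=\un_1-F_{11}^2$ gives $\|F_{21}\psi_1\|^2=(1-\mu^2)\|\psi_1\|^2$, so $F_{21}$ restricts to a bijection $\ker(F_{11}-\mu\un_1)\to\ker(F_{22}+\mu\un_2)$ whenever $|\mu|<1$; the linear map $\psi_1\mapsto\psi_1+\psi_2$ is then injective onto $\ker(U_\alpha-\lambda_\pm\un)$ and surjective by the Feshbach--Schur reconstruction, yielding the three-way equality of dimensions. For the boundary values $\mu=\pm1$ the same norm identity forces $F_{21}\psi_1=0$, so $F\psi_1=F_{11}\psi_1=\pm\psi_1$ and $U_\alpha\psi_1=F(\Pi+e^{i\alpha}(\un-\Pi))\psi_1=F\psi_1=\pm\psi_1$; the symmetric Feshbach--Schur with the other block (solving (1) for $\psi_1$ instead and using the intertwining in the form $(\un_1-\lambda F_{11})^{-1}F_{12}=F_{12}(\un_2+\lambda F_{22})^{-1}$) handles the eigenvalues $\pm e^{i\alpha}$ and delivers \eqref{eqker} for $\alpha\neq\pi$.

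The case $\alpha=\pi$ must be handled separately because then $\pm e^{i\alpha}=\mp 1$, the two families merge, and the Feshbach--Schur inversion becomes degenerate. A short involutive argument works directly: $U_\pi\psi=\pm\psi$ reads $F(\psi_1-\psi_2)=\pm(\psi_1+\psi_2)$, and applying $F$ once more (using $F^2=\un$) gives $F(\psi_1+\psi_2)=\pm(\psi_1-\psi_2)$; half-summing and half-differencing these identities deliver $F\psi_1=\pm\psi_1$ and $F\psi_2=\mp\psi_2$, which projected on blocks read $\psi_1\in\ker(F_{11}\mp\un_1)$ and $\psi_2\in\ker(F_{22}\pm\un_2)$, the reverse inclusion being immediate. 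The principal obstacle is keeping the intertwining algebra clean through the Feshbach--Schur manipulations and identifying precisely which boundary values of $\lambda$ and $\mu$ require the auxiliary arguments; once the resolvent identity for $F_{12}(e^{i\alpha}\un_2-\lambda F_{22})^{-1}F_{21}$ is in hand, both the spectral mapping and the multiplicity bookkeeping follow essentially mechanically.
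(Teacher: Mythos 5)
Your proposal is correct, and its skeleton — the block decomposition $l^2(D)=\cH_1\oplus\cH_2$, the coupled system obtained by applying $F$ to the eigenvalue equation, a Schur-complement/resolvent identity for the off-diagonal term, and separate treatment of the boundary values $\lambda\in\{\pm1,\pm\e^{i\alpha}\}$ and of $\alpha=\pi$ — is the same as the paper's (Appendix A). Two execution differences are worth recording. First, for the forward implication the paper never inverts anything: it keeps both equivalent systems ($U\psi=\lambda\psi$ projected on blocks, and the same after composing with $F$) and eliminates $F_{12}\psi_2$, resp. $F_{21}\psi_1$, between them to read off $F_{11}\psi_1=\mu\psi_1$ \emph{and} $F_{22}\psi_2=-\mu\psi_2$ simultaneously; the Schur complement of $F-\lambda\e^{-i\alpha}$ together with $(F-z)^{-1}=(F+z)/(1-z^2)$ is reserved for the converse. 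Your route instead pushes the resolvent identity $F_{12}(\e^{i\alpha}\un_2-\lambda F_{22})^{-1}F_{21}=(\e^{i\alpha}\un_1+\lambda F_{11})^{-1}(\un_1-F_{11}^2)$ through both directions; this is fine (the intertwining $F_{12}\,g(F_{22})=g(-F_{11})F_{12}$ is legitimate by functional calculus since $\e^{i\alpha}/\lambda\notin[-1,1]$ exactly when $\lambda\neq\pm\e^{i\alpha}$), and it automatically yields $\psi_1\neq0$ for $\lambda\notin\{\pm\e^{i\alpha}\}$. Second, and more substantially, your proof of \eqref{30} is genuinely different: you exhibit $F_{21}$ as an explicit bijection $\ker(F_{11}-\mu\un_1)\to\ker(F_{22}+\mu\un_2)$ via $\|F_{21}\psi_1\|^2=(1-\mu^2)\|\psi_1\|^2$ and $F_{21}F_{12}=(1-\mu^2)\un$ on the relevant eigenspace, and then the Feshbach reconstruction $\psi_1\mapsto\psi_1+\psi_2(\psi_1)$ as a bijection onto $\ker(U_\alpha-\lambda_\pm\un)$. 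The paper only proves the local inequalities $\dim\ker(U_\alpha-\lambda_\pm\un)\geq\dim\ker(F_{11}-\mu\un_1)$ and $\geq\dim\ker(F_{22}+\mu\un_2)$ and closes the gap with a global dimension count $\sum_\lambda\dim\ker(U_\alpha-\lambda\un)=|D|$ over the whole spectrum — a step that really uses $\dim l^2(D)<\infty$. Your local bijection avoids that global count entirely and would survive in the infinite-dimensional setting of Theorem \ref{smuf}, which is a genuine (if modest) gain.
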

\begin{rem}
i) 
The result only depends on the  factorization \eqref{Calpha}.\\
ii) {From the first statement, if $\lambda=\e^{i\theta}\in \sigma(U_{\alpha})\setminus \{\pm\e^{i\alpha}\} $,} with $\theta \in (-\pi, \pi]$ as in (\ref{lamthe}), we have $|\sin(\alpha/2-\theta)|\leq |\sin(\alpha/2)|$, 
since $F_{11}$ is a contraction and $|\mu|\leq 1$.\\
iii) We have $\ker(F_{11}\mp\un_1)=\Pi\ker(F \mp\un)$, $\ker(F_{22}\mp\un_1)=(\un-\Pi)\ker(F\mp\un)$.\\
iv) It follows from (\ref{dimprojpi}) that 
\begin{align}
\dim \ker (F_{22}- \un)+\dim& \ker (F_{22}+ \un)\nonumber\\
&=\dim \ker (F_{11}- \un)+\dim \ker (F_{11}+ \un)+|D|-2|V|.
\end{align}
\end{rem}
The proof of Theorem \ref{findim} is given in Appendix A.

\medskip

As an illustration of this result, consider for $\theta\in (-\pi, \pi]$ and $\alpha \in (-\pi, \pi]$
\be
U=\begin{pmatrix} -\cos(\theta) & e^{i\alpha}\sin(\theta) & 0 \cr \sin(\theta) & e^{i\alpha}\cos(\theta) & 0 \cr 0 & 0 &e^{i\alpha}
\end{pmatrix} \ \text{and} \ F=\begin{pmatrix} -\cos(\theta) & \sin(\theta) & 0 \cr \sin(\theta) & \cos(\theta) & 0 \cr 0 & 0 &1
\end{pmatrix}
\ee
which satisfy $F=F^*=F^{-1}$ and 
$
U=F(\Pi+e^{i\alpha}(\un-\Pi)),
$
with
\be
\Pi=\begin{pmatrix} 1 &0 & 0 \cr 0 &0 & 0 \cr 0 & 0 &0
\end{pmatrix}=\Pi^2=\Pi^*.
\ee
We compute for $\theta\neq 0$ and $\alpha\neq \pi$,
\be
\sigma(F_{11})=\{-\cos(\theta)\}, \ \sigma(F_{22})=\{\cos(\theta), 1\}, \ \sigma(U)=\{e^{i\alpha}, \lambda_+, \lambda_-\}
\ee
where $\{ \lambda_+, \lambda_-\}=\phi_\alpha^{-1}(\{-\cos(\theta)\})$. 
For $\theta\not \in \{ 0, \pi\}$ and $\alpha= \pi$, we have
\be
\sigma(F_{11})=\{-\cos(\theta)\}, \ \sigma(F_{22})=\{\cos(\theta), 1\}, \ \sigma(U)=\{-1, -e^{i\theta}, -e^{-i\theta}\},
\ee
while for  $\theta= \pi$ and $\alpha= \pi$, $+1$ is a double eigenvalue of $U$:
\be
\sigma(F_{11})=\{1\}, \ \sigma(F_{22})=\{-1, 1\}, \ \sigma(U)=\{-1, 1\}.
\ee

When $\theta =0$, $+1$ is a double eigenvalue of $F_{22}$, $e^{i\alpha}$ is a double eigenvalue of $U$ and for $\alpha\neq \pi$,
\be
\sigma(F_{11})=\{-1\}, \ \sigma(F_{22})=\{1\}, \ \sigma(U)=\{e^{i\alpha}, -1\},
\ee
while for $\theta=0$ and $\alpha=\pi$ we have $U=-\un$ and
\be
\sigma(F_{11})=\{-1\}, \ \sigma(F_{22})=\{1\}, \ \sigma(U)=\{ -1\}.
\ee

\medskip

We now turn to a description of the spectral mapping in a more abstract infinite dimensional framework by means of the Feshbach-Schur map. Keeping the same notation and conventions as above, the starting point is again
\be\label{defuabs}
U=F(\Pi+\e^{i\alpha}(\un-\Pi)), \ \text{with} \ F=F^*=F^{-1}, \ \e^{i\alpha}\neq 1,
\ee
defined on $\cH=\Pi \cH\oplus (\un -\Pi)\cH=\cH_1\oplus \cH_2$, where $\Pi=\Pi^*=\Pi^2\neq0$ and $\dim \cH =\infty$ is allowed. In block form we have for any $z\in \C$
\be\label{defblocU}
U-z=\begin{pmatrix} F_{11} -z& \e^{i\alpha}F_{12} \\ F_{21} & \e^{i\alpha}F_{22}-z\end{pmatrix}.
\ee
We consider the two Schur complements or Feshbach maps defined as linear maps on $\cH_1$, respectively $\cH_2$, by
\begin{align}\label{deftildes}
\tilde S_{1}(z)&=(F_{11} -z)- \e^{i\alpha}F_{12}(\e^{i\alpha}F_{22}-z)^{-1}F_{21}, \ \text{for } \ z\in \rho(\e^{i\alpha}F_{22})\nonumber \\
\tilde S_{2}(z)&=(\e^{i\alpha}F_{22} -z)- \e^{i\alpha}F_{21}(F_{11}-z)^{-1}F_{12},  \ \text{for } \ z\in \rho(F_{11}),
\end{align}
{where $\rho(A)=\C\setminus\sigma(A)$ denotes the resolvent set of an operator $A\in {\cal B}({\cal H})$.}
The isospectrality of the Feshbach-Schur maps states that
\begin{align}
&\text{for } z\in \rho(\e^{i\alpha}F_{22}), \ z\in \rho(U)  \Leftrightarrow 0\in \rho(\tilde S_1(z)), \\
&\text{for } z\in \rho(F_{11}), \ z\in \rho(U) \Leftrightarrow 0\in \rho(\tilde S_2(z)),
\end{align}
see {\it e.g.} \cite{BFS} Section IV, or the proof of Proposition 3.15.11 in \cite{S}. 

In our setup, the Feshbach-Schur method yields the following
\begin{thm}\label{smuf} Let the unitary operator $U$ be defined by (\ref{defuabs}), and the maps $F_{jj}$ on $\cH_j$, $j=1,2$ be defined by the block decomposition \eqref{defblocU} for $z=0$. With $\ffi_\alpha$ defined by (\ref{fial}), we have \\
for $\lambda\in{\mathbb S}^1\setminus\{\pm\e^{i\alpha} \}$,
\be
\lambda\in \sigma(U)\ \Leftrightarrow \ \ffi_\alpha(\lambda)\in \sigma(F_{11}),
\ee
for $\lambda\in{\mathbb S}^1\setminus\{\pm1\}$,
\be
\lambda\in \sigma(U) \ \Leftrightarrow \ \ffi_\alpha(\lambda)\in -\sigma(F_{22}).  
\ee
\end{thm}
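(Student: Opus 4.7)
The plan is to apply the isospectrality of the Feshbach-Schur maps $\tilde S_1$ and $\tilde S_2$ recalled right after \eqref{deftildes}, and to factor each of them, for $\lambda\in{\mathbb S}^1$, as a nonzero scalar times a boundedly invertible operator times $(F_{11}-\mu)$, respectively $(F_{22}+\mu)$, with $\mu=\ffi_\alpha(\lambda)$. Once such factorizations are established, the invertibility of $\tilde S_j(\lambda)$ reduces to a spectral question for $F_{jj}$, and the theorem follows.

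First I would collect the algebraic identities encoded by $F=F^*=F^{-1}$ in the $\cH_1\oplus\cH_2$ block form: $F_{11}^*=F_{11}$, $F_{22}^*=F_{22}$, $F_{21}=F_{12}^*$, together with $F_{11}^2+F_{12}F_{21}=\un_1$, $F_{22}^2+F_{21}F_{12}=\un_2$, $F_{11}F_{12}+F_{12}F_{22}=0$ and its adjoint $F_{22}F_{21}+F_{21}F_{11}=0$. Since $F_{12}F_{21}=F_{12}F_{12}^*\ge 0$, the first two identities yield $\|F_{jj}\|\le 1$, so $F_{11},F_{22}$ are self-adjoint contractions with $\sigma(F_{jj})\subset[-1,1]$. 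The geometry of $[-1,1]$ relative to ${\mathbb S}^1$ then ensures that for $\lambda\in{\mathbb S}^1\setminus\{\pm e^{i\alpha}\}$ both $e^{i\alpha}F_{22}-\lambda$ and $e^{i\alpha}F_{11}+\lambda$ are boundedly invertible, and for $\lambda\in{\mathbb S}^1\setminus\{\pm1\}$ the same holds for $F_{11}-\lambda$ and $F_{22}+\lambda$; these are precisely the ranges on which $\tilde S_1$ and $\tilde S_2$ are well defined.

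The key algebraic step is to use the anticommutation relation to rewrite $(e^{i\alpha}F_{22}-\lambda)F_{21}=-F_{21}(e^{i\alpha}F_{11}+\lambda)$, so that on the relevant domain $(e^{i\alpha}F_{22}-\lambda)^{-1}F_{21}=-F_{21}(e^{i\alpha}F_{11}+\lambda)^{-1}$. Multiplying $\tilde S_1(\lambda)$ on the right by $(e^{i\alpha}F_{11}+\lambda)$ and using $F_{12}F_{21}=\un_1-F_{11}^2$ produces a polynomial expression in $F_{11}$ that collapses, via the defining identity $\lambda^2-e^{i\alpha}=\lambda(1-e^{i\alpha})\ffi_\alpha(\lambda)$, to $\tilde S_1(\lambda)(e^{i\alpha}F_{11}+\lambda)=\lambda(1-e^{i\alpha})(F_{11}-\mu)$. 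The scalar $\lambda(1-e^{i\alpha})$ is nonzero since $\alpha\ne 0$, and $(e^{i\alpha}F_{11}+\lambda)$ is bounded invertible on the stated domain, so $\tilde S_1(\lambda)$ is boundedly invertible if and only if $F_{11}-\mu$ is. Combined with the isospectrality $\lambda\in\rho(U)\Leftrightarrow 0\in\rho(\tilde S_1(\lambda))$, this yields the first equivalence $\lambda\in\sigma(U)\Leftrightarrow\mu\in\sigma(F_{11})$.

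An entirely parallel calculation, based this time on $(F_{11}-\lambda)F_{12}=-F_{12}(F_{22}+\lambda)$ and $F_{21}F_{12}=\un_2-F_{22}^2$, gives $\tilde S_2(\lambda)(F_{22}+\lambda)=-\lambda(1-e^{i\alpha})(F_{22}+\mu)$, from which the second equivalence follows for $\lambda\in{\mathbb S}^1\setminus\{\pm1\}$. I do not anticipate a genuine obstacle: the only delicate point is keeping track of which $\lambda\in{\mathbb S}^1$ hit the spectra of $e^{i\alpha}F_{22}$ or of $F_{11}$, and the contraction property of $F_{11},F_{22}$, itself a direct consequence of $F^2=\un$, pins down precisely the excluded sets $\{\pm e^{i\alpha}\}$ and $\{\pm1\}$ appearing in the statement.
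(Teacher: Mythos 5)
Your proof is correct. It runs on the same engine as the paper's argument --- the Feshbach--Schur isospectrality recalled after \eqref{deftildes} --- but the reduction of $\tilde S_1,\tilde S_2$ to statements about $F_{11},F_{22}$ is done by a genuinely different mechanism. The paper identifies the Schur complements of the diagonal blocks of $F-z$ with $(1-z^2)(F_{jj}+z)^{-1}$ via the explicit resolvent formula $(F-z)^{-1}=(F+z)/(1-z^2)$, substitutes the resulting expression for $F_{12}(F_{22}-z)^{-1}F_{21}$ into $\tilde S_1(z)$ to obtain a resolvent of $F_{11}$ plus a scalar, and then needs a final spectral-mapping argument for the self-adjoint contraction $F_{11}$ to decide when $0\in\rho(\tilde S_1(z))$. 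You instead exploit the anticommutation relations $F_{22}F_{21}=-F_{21}F_{11}$ and $F_{11}F_{12}=-F_{12}F_{22}$ coming from $F^2=\un$ to intertwine resolvents, which yields the exact factorization $\tilde S_1(\lambda)(\e^{i\alpha}F_{11}+\lambda)=\lambda(1-\e^{i\alpha})(F_{11}-\mu)$ and its analogue for $\tilde S_2$; I verified the algebra, and your identities are consistent with the paper's \eqref{simschur} after multiplication by $F_{11}+\lambda\e^{-i\alpha}$. Your factorization makes the equivalence $0\in\rho(\tilde S_1(\lambda))\Leftrightarrow\mu\in\rho(F_{11})$ immediate, dispenses with the final spectral-mapping step, and concentrates all the bookkeeping of excluded points into the single observation that $\sigma(F_{jj})\subset[-1,1]$ meets ${\mathbb S}^1$ only at $\pm1$. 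What the paper's route buys in exchange is the identity $\sigma(F_{11})\cup\{\pm1\}=\sigma(-F_{22})\cup\{\pm1\}$ of the subsequent remark, which drops out of comparing the two Schur complements of $F-z$ and which your argument does not directly produce.
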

\begin{rem}
i) For the values $\{\pm1, \pm\e^{i\alpha}\}$, we have, as in the finite dimensional case,
\be
\ker (U\pm \un) = \ker (F_{11}\pm \un_1),  \
\ker (U\pm e^{i\alpha}\un)=\ker (F_{22}\pm\un_2), 
\ee
for $\alpha\neq \pi$ while,  for $\alpha=\pi$,  
\be
\ker (U\pm \un) = \ker (F_{11}\pm \un_1)+\ker (F_{22}\mp\un_2).
\ee

ii) The spectra of $F_{11}$ and $F_{22}$ are related by $ \sigma(F_{11})\cup\{\pm1\}= \sigma(-F_{22})\cup\{\pm1\}$, as the proof below shows. The values $\pm1$ may or may not belong to one or both spectra.
\end{rem}
\begin{proof}:
We first make use of the properties of $F$ which imply
\be\label{invfz0}
(F-z)^{-1}=\frac{F+z}{1-z^2}, \ z\in \C\setminus\{\pm1\},
\ee
to reduce the expressions of the Feshbach-Schur maps $\tilde S_1(z)$, respectively $\tilde S_2(z)$ to a simple expression in terms of the resolvent of $F_{11}$, respectively $F_{22}$.

The Schur complements of the diagonal blocs of the operator $F-z$ read
\begin{align}\label{scF}
S_1(z)&=(F_{11}-z)-F_{12}(F_{22}-z)^{-1}F_{21}\ \text{for }\ z\in \rho(F_{22})\\
S_2(z)&=(F_{22}-z)-F_{21}(F_{11}-z)^{-1}F_{12}\ \text{for }\ z\in \rho(F_{11}),
\end{align}
defined as operators of $\cH_1$, respectively $\cH_2$.
Consequently, considering the block expression of the resolvent $(F-z)^{-1}$ 
\begin{align}
(F-z)^{-1}&=\begin{pmatrix} S_{1}^{-1}(z) & * \\ * & *\end{pmatrix},\ \text{for} \ z\in \rho(F_{22})\setminus\{\pm1\} \nonumber \\
(F-z)^{-1}&=\begin{pmatrix}* & * \\ * &  S_{2}^{-1}(z)\end{pmatrix},\ \text{for} \ z\in \rho(F_{11})\setminus\{\pm1\},
\end{align}
where the isospectrality property ensures the Schur complements $S_j(z)$, $j=1,2$, have bounded inverses for the values of $z$ considered. Identity (\ref{invfz0}) yields
\be
S_1(z)^{-1}=\frac{(F_{11}+z)}{1-z^2}, \ \text{for }\ z\in \rho(F_{22})\setminus\{\pm1\},
\ee
which implies that $z\in \rho(-F_{11})\setminus\{\pm1\}$. By similar considerations on the second Schur complement, we get 
\be\label{oppospec}
\rho(F_{11})\setminus\{\pm1\}=\rho(-F_{22})\setminus\{\pm1\} \, \Leftrightarrow \, \sigma(F_{11})\cup\{\pm1\}= \sigma(-F_{22})\cup\{\pm1\}.
\ee
Therefore we deduce that 
\begin{align}
S_1(z)=(1-z^2)(F_{11}+z)^{-1}, \ \text{respectively} \ S_2(z)=(1-z^2)(F_{22}+z)^{-1},
\end{align}
for  $z\in \rho(-F_{11})\setminus\{\pm1\}$, respectively  $z\in \rho(F_{11})\setminus\{\pm1\}$.
Hence, together with (\ref{scF}), we can write
\begin{align}
-F_{12}(F_{22}-z)^{-1}F_{21}&=(1-z^2)(F_{11}+z)^{-1}-(F_{11}-z), \  z\in \rho(-F_{11})\setminus\{\pm1\}\nonumber\\
-F_{21}(F_{11}-z)^{-1}F_{12}&=(1-z^2)(F_{22}+z)^{-1}-(F_{22}-z), \ z\in \rho(F_{11})\setminus\{\pm1\},
\end{align}
which we substitute in (\ref{deftildes}) to get with (\ref{oppospec})
\begin{align}\label{simschur}
\tilde S_{1}(z)&=(1-z^2\e^{-2i\alpha})(F_{11} +z\e^{-i\alpha})^{-1}+z(\e^{-i\alpha}-1), \ \e^{-i\alpha}z\in \rho(-F_{11})\setminus\{\pm1\}\nonumber \\
\tilde S_{2}(z)&=\e^{i\alpha}\big\{(1-z^2)(F_{22} +z)^{-1}+z(1-\e^{-i\alpha})\big\},  \ z\in \rho(F_{11})\setminus\{\pm1\}.
\end{align}
At this point we invoke the spectral mapping theorem for the self-adjoint operators $F_{11}$ and $F_{22}$ to determine the values of $z$ such that $0\in\rho(\tilde S_{j}(z))$, $j=1,2$.

We have that $\tilde S_{2}(z)$ is boundedly invertible for $ z\in \rho(F_{11})\setminus\{\pm1\}$ if and only if
 \be
-\frac{z(1-\e^{-i\alpha})}{1-z^2}\in \rho((F_{22} +z)^{-1}),
 \ee
 where 
 \be
 \sigma((F_{22} +z)^{-1})=\big\{(\nu+z)^{-1}, \, \nu\in \sigma (F_{22})\subset [-1,1]\big\}.
 \ee
 Hence,  for $ z\in \rho(F_{11})\setminus\{\pm1\}$, $0\in \rho(\tilde S_{2}(z))$ iff
 \be
 -\frac{z(1-\e^{-i\alpha})}{1-z^2}\neq \frac{1}{\nu+z}, \ \forall\nu\in \sigma (F_{22}).
 \ee 
 In other words, $\lambda\in {\mathbb S}^1\setminus \{\pm1\}$ belongs to $\sigma (U)$ iff $\ffi_\alpha(\lambda)=-\nu\in \sigma(F_{22})$, recall (\ref{fial}).

 An analogous computation based on the first equation (\ref{simschur}) yields $\lambda \in {\mathbb S}^1~\setminus\{\pm\e^{i\alpha}\}$ belongs to $\sigma(U)$ iff $\ffi_\alpha(\lambda)=\mu \in \sigma(F_{11})$. 
\end{proof}\qed

Coming back to $\cH=l^2(D)$,  we now relate the compression $F_{11}$ defined on $\cH_1=\Pi \, l^2(D)$ to an operator $T$ defined on $l^2(V)$, following \cite{HKSS2, HSS}. Recall that $\dim \Pi=|V|=\dim l^2(V)$.
\medskip

Let $\{ |x\ket \}_{x\in V}$ denote a fixed orthonormal basis of $l^2(V)$, where $|x\ket$ is a vector attached to the vertex $x\in V$, and recall the convention
$\omega(x)=\sum_{y\sim x}\omega_y(x)|xy\ket \in \cH_x^{\rm I}\subset l^2(D)$, where $\|\omega(x)\|=1$, for all $x\in V$.
We introduce
\be\label{defrrs}
R=\sum_{x\in V}|x\ket\bra \omega(x)| :  l^2(D) \ra  l^2(V), \ \text{and} \ 
R^*=\sum_{x\in V}| \omega(x)\ket\bra x| :  l^2(V) \ra  l^2(D),
\ee
which are readily shown to satisfy
\begin{align}
R^*R&=\sum_{x\in V}| \omega(x)\ket\bra \omega(x)| = \Pi :  l^2(D) \ra  l^2(D) \label{idpi}\\
RR^*&=\sum_{x\in V}|x\ket\bra x| =\un:  l^2(V) \ra  l^2(V) \label{idv} \\
&R\Pi=R\,  , \ R^*=\Pi R^*,  \ \text{and} \\
&\|R\|=\|R^*\|=1,
\end{align}
where $\|\cdot \|$ is the operator norm. The operator $R$ is  the {\it boundary operator} of \cite{HSS}.
We define
\be\label{defT}
T=RFR^*=RU_{\alpha}R^*: l^2(V)\ra l^2(V).
\ee
Thanks to the properties of $R$, $T=T^*$ is a contraction and we have
\be\label{tdfdt}
T=RF_{11}R^*,  \ R^*TR=F_{11}.
\ee
The matrix elements of $T$ read for $x, y\in V$,
\be
T_{xy}=\bra x |T y\ket=\overline{\omega_y(x)}\omega_x(y),
\ee
so that in case $\omega_y(x)=\frac{1}{\sqrt{d_x}}$, $\forall y\sim x$, $T$ is the renormalized adjacency matrix of the graph $G$.
\medskip

The spectral properties of $T$ and $F$ are related by the 
\begin{prop}
With $\un_1$ denoting the identity on $\cH_1=\Pi \, l^2(D)$, $\un_V$ the identity on $l^2(V)$, we have for all $z\in \rho(T)\cap \rho(F_{11})$,
\begin{align}
&R^*(T-z\un_V)^{-1}R=(F_{11}-z\un_{1})^{-1}\label{rtrf}\\
&(T-z\un_V)^{-1}=R(F_{11}-z\un_{1})^{-1}R^*.\label{trfr}
\end{align}
Consequently $\sigma(T)=\sigma(F_{11})$ and for $\lambda\in [-1,1]$ a common eigenvalue of $T$ and $F_{11}$, the respective spectral projectors $P^\lambda_T$ and $P^\lambda_1$, satisfy
\be\label{pdproj}
P^\lambda_1=R^*P^\lambda_T R, \ P^\lambda_T= RP^\lambda_1 R^*,
\ee
and 
\be
\dim P^\lambda_1=\dim P^\lambda_T.
\ee
\end{prop}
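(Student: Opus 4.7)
The central observation is that the map $R$, restricted to $\cH_1 = \Pi\, l^2(D)$, is a unitary operator onto $l^2(V)$. Indeed, relation \eqref{idpi} reads $R^*R = \Pi$, which acts as the identity on $\cH_1$, so $R|_{\cH_1}$ is an isometry; combined with $RR^* = \un_V$ and the fact that $R^* = \Pi R^*$ has range in $\cH_1$, this gives surjectivity. In view of \eqref{tdfdt}, namely $T = RF_{11}R^*$ and $F_{11} = R^*TR$, this makes $T$ and $F_{11}$ unitarily equivalent via $R|_{\cH_1}$, and the entire proposition is essentially a transcription of that unitary equivalence.

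From this, the resolvent identities follow by direct check. For $z \in \rho(F_{11})$, since $(F_{11}-z\un_1)^{-1}R^*$ has range in $\cH_1$ where $\Pi = \un_1$, one computes
\[
(T-z\un_V)\,R(F_{11}-z\un_1)^{-1}R^* = R F_{11}\Pi (F_{11}-z\un_1)^{-1}R^* - zR(F_{11}-z\un_1)^{-1}R^* = RR^* = \un_V,
\]
and similarly for the reversed order, establishing \eqref{trfr} as well as the inclusion $\rho(F_{11})\subset\rho(T)$; the reverse inclusion is symmetric, yielding $\sigma(T) = \sigma(F_{11})$. Sandwiching \eqref{trfr} between $R^*$ and $R$ and using $R^*R = \Pi$ together with the fact that $(F_{11}-z\un_1)^{-1}$ has range in $\cH_1$ then converts \eqref{trfr} into \eqref{rtrf}.

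For a common eigenvalue $\lambda$, I would argue directly without contour integrals (to cover the non-isolated case as well): $R P^\lambda_1 R^*$ is an orthogonal projection, since $(RP^\lambda_1 R^*)^2 = RP^\lambda_1\Pi P^\lambda_1 R^* = RP^\lambda_1 R^*$ thanks to $\Pi P^\lambda_1 = P^\lambda_1$, and it commutes with $T$ while reducing it to multiplication by $\lambda$ on its range. To see that it captures the \emph{entire} $\lambda$-eigenspace of $T$, I would observe that any $T$-eigenvector $\phi$ for $\lambda$ satisfies $F_{11}R^*\phi = \lambda R^*\phi$ after applying $R^*$ and exploiting $R^*R = \Pi$; hence $P^\lambda_1 R^*\phi = R^*\phi$ and $RP^\lambda_1 R^*\phi = RR^*\phi = \phi$. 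This gives $P^\lambda_T = RP^\lambda_1 R^*$, the dual identity $P^\lambda_1 = R^*P^\lambda_T R$ follows by the same bookkeeping, and $\dim P^\lambda_T = \dim P^\lambda_1$ because $R|_{\cH_1}$ is a unitary bijection between $\ran P^\lambda_1 \subset \cH_1$ and $\ran P^\lambda_T \subset l^2(V)$. The only real subtlety is keeping the projection $\Pi$ on $l^2(D)$ and the identity $\un_1$ on $\cH_1$ notationally distinct while repeatedly exploiting that they coincide on $\cH_1$.
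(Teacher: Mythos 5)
Your proof is correct, and it takes a genuinely more direct route than the paper's. The paper establishes (\ref{trfr}) and (\ref{rtrf}) by first proving $T^k=RF_{11}^kR^*$ and $F_{11}^k=R^*T^kR$ by induction, summing the Neumann series for $|z|>1$, and then extending to all of $\rho(T)\cap\rho(F_{11})$; for the spectral projectors it invokes the Riesz formula for isolated eigenvalues and the strong limit of $-i\epsilon(A-\lambda-i\epsilon)^{-1}$ for embedded ones, and gets equality of dimensions from cyclicity of the trace together with $RR^*=\un_V$. You instead put the key structural fact up front: since $R^*R=\Pi$ acts as the identity on $\cH_1$ and $RR^*=\un_V$, the restriction $R|_{\cH_1}$ is a unitary from $\cH_1$ onto $l^2(V)$ (it sends the orthonormal family $\{\omega(x)\}$ to the basis $\{|x\ket\}$), so $T$ and $F_{11}$ are unitarily equivalent. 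From there the resolvent identities follow by exhibiting $R(F_{11}-z\un_1)^{-1}R^*$ as a two-sided inverse of $T-z\un_V$ — which in fact gives the slightly sharper statement $\rho(F_{11})\subset\rho(T)$ directly, with no continuation argument needed — and the projector identities follow from the elementary eigenspace argument you give, which covers embedded eigenvalues without any limiting procedure. Your verification that $RP_1^\lambda R^*$ is an orthogonal projection commuting with $T$, together with the converse step $T\phi=\lambda\phi\Rightarrow F_{11}R^*\phi=\lambda R^*\phi\Rightarrow RP_1^\lambda R^*\phi=\phi$, correctly identifies its range with $\ker(T-\lambda\un_V)$, and unitarity of $R|_{\cH_1}$ gives the dimension count. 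Both proofs rest on the same algebraic identities (\ref{idpi}), (\ref{idv}) and (\ref{tdfdt}); yours makes the underlying unitary equivalence explicit, which is cleaner, while the paper's Neumann-series computation has the mild advantage of reusing the same template it applies elsewhere for powers of CPTP maps.
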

\begin{rem} i) The results above are also true for $G$ infinite, {\it i.e.} $|V|=\infty$, $|D|=\infty$. The series defining $R$, $R^*$, $R^*R$ and $R R^*$ are meant in the strong sense in this case. \\
ii) Together with Theorems \ref{findim} and  \ref{smuf}, we get a spectral mapping result between $U$ and $T$. In particular,
\be
F_{11}\psi_1=\mu\psi_1 \ \Leftrightarrow \ T\ffi=\mu\ffi, \ \text{with} \ \ffi=R\psi_1 \ \Leftrightarrow \ \psi_1=R^*\ffi.
\ee
\end{rem}
\begin{proof}:
We first note that thanks to (\ref{idpi}) and (\ref{tdfdt})
\be
T^2=RF_{11}R^*RF_{11}R^*=RF_{11}\Pi F_{11}R^*=RF_{11}^2R^*,
\ee
so that by induction $T^k=RF_{11}^kR^*$, $\forall k\in \N$.
Then, for any $z\in \C$ such that $|z|>1$, we have the convergent Neumann series
\be
(T-z\un_V)^{-1}=-\frac{1}{z}\sum_{k=0}^\infty \frac{T^k}{z^k}=-\frac{1}{z}\sum_{k=0}^\infty R\frac{F_{11}^k}{z^k}R^*,
\ee
where $F_{11}^0=\Pi|_{\ch_1}=\un_{1}$.
Therefore (\ref{trfr}) holds for all $z\in \rho(T)\cap \rho(F_{11})$. Now (\ref{idv}) and (\ref{tdfdt}) imply
\be
F_{11}^2=R^*TRR^*TR=R^*T\un_V TR=R^*T^2R,
\ee
so that by induction $F_{11}^k=R^*T^kR$, $\forall k\in \N$. As above, we consider the Neumann series for $(F_{11}-z\un_1)^{-1}$ to deduce that (\ref{rtrf})  holds for all $z\in \rho(T)\cap \rho(F_{11})$.

Now, (\ref{trfr}) and  (\ref{rtrf}) show that the resolvents are singular on the same set, so that $\sigma(T)=\sigma(F_{11})$. The relations (\ref{pdproj}) between spectral projectors associated with $\lambda$ follow from their expressions via the Riesz formula for isolated eigenvalues, or via the strong limit of $-i\epsilon(A-\lambda -i\epsilon)^{-1}$ as $\epsilon\ra 0$, for $A=T$ and $A=F_{11}$, for embedded eigenvalues, see {\it  e.g.} Theorem 6.10 in \cite{HiSi}. The equality of their dimensions is a consequence of the cyclicity of the trace and (\ref{idv}).
\end{proof}\qed

\section{Scattering Open Quantum Walks on a Graph}\label{SOQWG}
\subsection{Definition and Properties}

In this section, we define an open QW on a graph $G$, denoted $\Phi_{\cS}$, parameterized by the set of scattering matrices $\cS=\{S(x)\}_{x\in V}$, in the same spirit as the unitary QW $U_\cS$. An open QW is a completely positive trace preserving (CPTP) linear map $\Phi_{\cS}$ on $\cT(l^2(D))$, the set of trace class operators acting on $l^2(D)$, also known as a quantum channels. 

\medskip

Let us briefly recall these notions in our framework, referring the reader to the books \cite{A, AJP, AL, Kr, Sc, Wa}, for more information.  Let $\cH$ be a separable Hilbert space over $\C$ and $\cB(\cH)$ the $C^*$ algebra of all bounded operators on $\cH$ equipped with the operator norm. A positive map $\Phi : \cB(\cH)\ra \cB(\cH)$ is characterized by $\Phi (A)\geq 0$ for all  $A\geq 0$, $A\in \cB(\cH)$, while it is called $n-$positive if 
$\Phi\otimes \un_n: \cB(\cH)\otimes M_n(\C)\ra \cB(\cH)\otimes M_n(\C)$ is positive, where $M_n(\C)$ is the set of square complex matrices on $\C^n$ and $\un_n$ denotes the identity map on $M_n(\C)$. A map $\Phi$ is completely positive (CP), if it is $n-$positive for all $n\in \N$ and it is called unital if $\Phi(\un)=\un$. 

Let  $\cT(\cH)\subset \cB(\cH)$ be the set of trace class operators on $\cH$, which is a Banach space when endowed with the trace norm $\|\cdot \|_1$. For a map $\Phi: \cT(\cH)\ra \cT(\cH)$, the notions of (complete) positivity are defined as above and such a map is called trace preserving (TP) if $\tr \,\Phi(A)=\tr A$ for all $A\in \cT(\cH)$. The set of CPTP maps is convex and the composition of CPTP maps is CPTP. Kraus's Representation Theorem states that any CPTP map $\Phi$ on $\cT(\cH)$ can be represented by means of a non unique set of {\it Kraus operators} $\{K_j\}_{j\in J}$, $K_j\in \cB(\cH)$, with index set $J$ at most countable, as follows:
\be\label{krausrep}
\Phi(A)=\sum_{j\in J} K_j A K_j^*, \ \ \text{with } \ \sum_{j\in J} K_j^* K_j=\un,
\ee
where the convergence of the second sum is understood in the strong sense, implying the first sum converges in the trace norm sense. Conversely, any map $\Phi$ defined by (\ref{krausrep}) yields a CPTP map on $\cT(\cH)$. Moreover such maps have operator norms equal to one:
\be\label{bddphi}
\sup_{A\in \cT(\cH)\atop A\neq 0}\frac{\|\Phi(A)\|_1}{\|A\|_1}=1.
\ee

If $\dim \cH <\infty$, the index set $J$ can be chosen so that $|J|\leq (\dim \cH)^2$. In this case, CPTP maps can be unital since $\un \in \cT(\cH)$;  consider the map defined by $\Phi(A)=UAU^*$,   where $U$ is a unitary operator on $\cH$, for example.

Furthermore, since $\cB(\cH)$ is the dual of $\cT(\cH)$ with duality bracket
 \be
 \cB(\cH)\times  \cT(\cH) \ni (B,A) \mapsto  \tr (BA )\in \C,
 \ee 
 any map $\Phi: \cT(\cH)\ra \cT(\cH)$ admits an adjoint $\Phi^*:\cB(\cH)\ra \cB(\cH)$ defined by 
\be\label{traceduality}
\tr (B \Phi(A))=\tr (\Phi^*(B) A), \ \ \forall \ (B,A) \in \cB(\cH)\times \cT(\cH).
\ee
By Kraus's Theorem, one also gets that the adjoint of a CPTP map given by (\ref{krausrep}) has the form 
\be\label{defadjkraus}
\Phi^*(B)=\sum_{j\in J} K_j^* B K_j, \ \ \text{with } \ \sum_{j\in J} K_j^* K_j=\un,
\ee
where the first sum converges in the strong sense, a consequence of the second one converging in the strong sense. The map $\Phi^*$ is CP and unital and, moreover, any CP unital map on $\cB(\cH)$ has the form (\ref{defadjkraus}). Also $\Phi^*$ has operator norm equal to one
\be
\sup_{B\in \cB(\cH)\atop B\neq 0}\frac{\|\Phi^*(B)\|}{\|B\|}= 1.
\ee

We finally introduce the set of states or density matrices, a convex subset of $\cT(\cH)$, by 
\be
\cD\cM(\cH)=\{\rho\in \cT(\cH), \ { \rm s.t.} \ \rho=\rho^*\geq 0 \ {\rm and } \ \tr (\rho)=1\}, 
\ee
which is invariant under CPTP maps.
\medskip

Given a graph $G$ and  a set of scattering matrices $\cS=\{S(x)\}_{x\in V}$, with the notions introduced above, we first define a set of  Kraus operators labelled by $x\in V$:
\be\label{kraus}
K(x)=\sum_{y\sim x \atop z\sim x}S_{zy}(x)|zx\ket\bra xy| \ \ \mbox{s.t.} \ \  K(x)=K(x)P_x^{\rm I} =P_x^{\rm O}K(x), \ \ \forall x\in V.
\ee
Note that $U_\cS=\sum_{x\in V}K(x)$, see (\ref{defug}),
so that,
\be\label{unitchannel}
U_\cS A U_\cS{}^*=\sum_{x, x' \in V}K(x)AK^*(x'), \ \ \forall A\in \cB(l^2(D)).
\ee
 in the strong sense, by Remark \ref{remconvu}.
\begin{lem}\label{propK}
\be
K^*(x)K(x)=P_x^{\rm I}, \ \ K(x)K^*(x)=P_x^{\rm O},
\ee
and 
\be\label{convkraus}
\sum_{x\in V}K^*(x)K(x)=\sum_{x\in V}K(x)K^*(x)=\un,
\ee 
with strong convergence in case $|V|=\infty$.
\end{lem}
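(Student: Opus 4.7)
The plan is to compute the two products $K^*(x)K(x)$ and $K(x)K^*(x)$ directly from the definition (\ref{kraus}) and then sum over $x\in V$, invoking the resolution of the identity furnished by the projectors $P_x^{\rm I}$ and $P_x^{\rm O}$ from (\ref{projio}).

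First, I would write out $K^*(x)=\sum_{y'\sim x,\, z'\sim x}\overline{S_{z'y'}(x)}\,|xy'\rangle\langle z'x|$ and multiply on the right by $K(x)$. Using the orthonormality relation $\langle z'x|zx\rangle=\delta_{z'z}$, this collapses one sum and yields a double sum over $y,y'\sim x$ with coefficient $\sum_{z\sim x}\overline{S_{zy'}(x)}S_{zy}(x)$. Unitarity of $S(x)\in U(d_x)$ gives $\sum_{z\sim x}\overline{S_{zy'}(x)}S_{zy}(x)=\delta_{yy'}$, so only diagonal terms survive and we get $\sum_{y\sim x}|xy\rangle\langle xy|=P_x^{\rm I}$, as required. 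The identity $K(x)K^*(x)=P_x^{\rm O}$ is obtained symmetrically: now $\langle xy|xy'\rangle=\delta_{yy'}$ collapses one sum, and the relation $\sum_{y\sim x}S_{zy}(x)\overline{S_{z'y}(x)}=\delta_{zz'}$ (from $S(x)S(x)^*=\un_{d_x}$) leaves $\sum_{z\sim x}|zx\rangle\langle zx|=P_x^{\rm O}$. The exact calculation is essentially the one already performed in Remark \ref{remconvu}.

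Finally, summing over $x\in V$, the identities $\sum_{x\in V}P_x^{\rm I}=\un$ and $\sum_{x\in V}P_x^{\rm O}=\un$, which are the resolutions of the identity associated with the direct sum decompositions (\ref{dirsumspa}), give (\ref{convkraus}). In the infinite graph case the two series converge strongly: for any $\psi\in l^2(D)$ with $\psi=\sum_{x\in V}\sum_{y\sim x}\psi_{xy}|xy\rangle$, the partial sums $\sum_{|x|\leq n}P_x^{\rm I}\psi$ are the truncations of $\psi$ to the subspaces $\oplus_{|x|\leq n}\cH_x^{\rm I}$ and thus converge to $\psi$ in norm, and analogously for $P_x^{\rm O}$.

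I do not anticipate any real obstacle: the whole lemma is a bookkeeping consequence of the unitarity of each $S(x)$ combined with the orthogonality structure between the basis vectors sitting on distinct directed edges. The only mild subtlety is making explicit that the strong convergence of $\sum_x K^*(x)K(x)$ and $\sum_x K(x)K^*(x)$ is inherited directly from the strong convergence of $\sum_x P_x^{\rm I}$ and $\sum_x P_x^{\rm O}$, which in turn follows from the direct sum decompositions (\ref{dirsumspa}).
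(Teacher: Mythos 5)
Your proposal is correct and follows exactly the paper's route: the paper proves the pointwise identities "along the lines of Lemma \ref{unitUS}", i.e.\ by the same unitarity computation $\sum_{z\sim x}\overline{S_{zy'}(x)}S_{zy}(x)=\delta_{yy'}$ that you spell out, and obtains (\ref{convkraus}) from the fact that $\{P_x^{\#}\}_{x\in V}$ form resolutions of the identity. You have merely written out in full what the paper leaves as a reference to earlier computations.
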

\proof:
The first identities are proven along the lines of  Lemma \ref{unitUS} while the second identities hold since  
$\{P_x^{\rm \#}\}_{x\in V}$, $\#\in \{\rm I, O\}$ form resolutions of the identity.
\qed

We can now proceed with definitions.
\begin{definition}
Given the set of Kraus operators $\{K(x)\}_{x\in V}$, the open QW on the graph $G$ is defined by the map $\Phi_{\cS}$ on $\cT(l^2(D))$
\be\label{defphi}
 \Phi_{\cS}(A)=\sum_{x\in V}K(x)AK^*(x), \ \ \forall A\in \cT(l^2(D)).
\ee
We further introduce two maps on $\cT(l^2(D))$ 
\be
D^{\#}(A)=\sum_{x\in V}P_x^{\#}AP_x^\#, \ \ \#\in \{{\rm I, O}\}.
\ee
\end{definition}
\begin{prop}\label{firstphi}
The maps $\Phi_{\cS}$ and $D^{\#}$, $\#\in \{{\rm I, O}\}$ are CPTP, and they all can be extended to unital maps on $\cB(l^2(D))$. \\
Moreover, 
\be\label{dphy}
\Phi_{\cS}=\Phi_{\cS}\circ D^{\rm I}=D^{\rm O}\circ \Phi_{\cS} = D^{\rm O}\circ \Phi_{\cS} \circ D^{\rm I},
\ee
and the link with $U_\cS$ reads
\be\label{uphi}
\Phi_{\cS}(\cdot )=D^{\rm O}(U_\cS \cdot U_\cS{}^*)=U_\cS D^{\rm I}(\cdot ) U_\cS{}^*.
\ee
\end{prop}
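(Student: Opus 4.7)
The plan is to verify the three assertions in turn, each reducing to a direct computation that uses Lemma \ref{propK} and the intertwining relation (\ref{oui}).

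First I would address the CPTP and unital claims via Kraus's theorem. Lemma \ref{propK} supplies exactly the normalization $\sum_{x \in V} K(x)^* K(x) = \un$ required by (\ref{krausrep}), so (\ref{defphi}) is a bona fide Kraus representation and $\Phi_\cS$ is CPTP. Since the projectors $\{P_x^\#\}_{x \in V}$ are self-adjoint and form a resolution of the identity, $D^\#$ is CPTP for the same reason (and coincides with its own adjoint under trace duality). The unital extension to $\cB(l^2(D))$ is obtained by leaving the defining sums unchanged but now evaluated on arbitrary bounded operators; unitality of $\Phi_\cS$ follows from the complementary identity $\sum_{x} K(x) K(x)^* = \un$ in Lemma \ref{propK}, while unitality of $D^\#$ is immediate from $\sum_{x} P_x^\# = \un$.

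Next I would prove (\ref{dphy}). The key observation, immediate from (\ref{kraus}) and (\ref{projio}), is that
\[
K(x) P_y^{\rm I} = K(x) P_x^{\rm I} P_y^{\rm I} = \delta_{x,y} K(x), \qquad P_x^{\rm O} K(y) = P_x^{\rm O} P_y^{\rm O} K(y) = \delta_{x,y} K(y).
\]
Substituting the first identity into $\Phi_\cS \circ D^{\rm I}(A) = \sum_{x,y} K(y) P_x^{\rm I} A P_x^{\rm I} K(y)^*$ collapses the double sum to $\sum_x K(x) A K(x)^* = \Phi_\cS(A)$. The identity $D^{\rm O} \circ \Phi_\cS = \Phi_\cS$ is handled identically using the second relation, and the triple composition $D^{\rm O} \circ \Phi_\cS \circ D^{\rm I} = \Phi_\cS$ follows by combining the two.

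Finally I would derive (\ref{uphi}) from the intertwining relation (\ref{oui}), which reads $P_x^{\rm O} U_\cS = U_\cS P_x^{\rm I} = K(x)$. Inserting this into the definitions yields
\[
D^{\rm O}(U_\cS A U_\cS^*) = \sum_{x \in V} P_x^{\rm O} U_\cS A U_\cS^* P_x^{\rm O} = \sum_{x \in V} K(x) A K(x)^* = \Phi_\cS(A),
\]
and analogously $U_\cS D^{\rm I}(A) U_\cS^* = \sum_{x} (U_\cS P_x^{\rm I}) A (P_x^{\rm I} U_\cS^*) = \Phi_\cS(A)$. The only delicate point throughout is the strong convergence of the infinite sums when $|V| = \infty$, which is already controlled by Remark \ref{remconvu} and Lemma \ref{propK}; I do not anticipate a genuine obstacle, as all the identities above are preserved under strong limits once applied to a fixed trace class (or bounded) operator.
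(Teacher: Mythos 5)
Your proposal is correct and follows essentially the same route as the paper: CPTP and unitality via the Kraus criteria (\ref{krausrep}), (\ref{defadjkraus}) together with Lemma \ref{propK}, the factorization (\ref{dphy}) from the orthogonality relations $K(x)P_y^{\rm I}=\delta_{x,y}K(x)$ and $P_x^{\rm O}K(y)=\delta_{x,y}K(y)$ implied by (\ref{kraus}) and (\ref{projio}), and (\ref{uphi}) from the intertwining identity $P_x^{\rm O}U_\cS=U_\cS P_x^{\rm I}=K(x)$. No gaps.
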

\proof:  Using criterion (\ref{krausrep}) and Lemma \ref{propK}, 
we get that $\Phi_{\cS}$ and $D^{\#}$ are CPTP maps. By criterion (\ref{defadjkraus}) with $K^*(x)$ in place of $K(x)$ and  Lemma \ref{propK} again, we see that these three maps can 
be extended to $\cB(l^2(D))$ and are unital.

The second statement follows directly from properties (\ref{kraus}) and (\ref{projio}), whereas the last statement stems from formula
(\ref{oui}) for $U_\cS A U_\cS{}^*$. Applying $D^{\rm O}$ to this expression and replacing $A$ by $D^{\rm I}(A)$ yields (\ref{uphi}) thanks to (\ref{kraus}) and (\ref{projio}) again.
\qed
\begin{rem}
i) The operators $D^{\rm \#}$ are projectors.\\
ii) In finite dimensions, unital quantum channels are known to be affine combinations of unitary quantum channels \cite{MW}; {allowing  negative coefficients, as shown in the examples in \cite{LS}. Moreover, unital quantum channels are} entropy non-decreasing:
\be
S(\rho)\leq S(\Phi_{\cS}(\rho)), \ \text{for all } \ \rho \in  {\cal DM}(l^2(D)).
\ee
Here $S: {\cal DM}(l^2(D))\ra [0,\ln(\dim l^2(D))]$ is the Von Neumann entropy $S(\rho)=-\tr(\rho\ln \rho)$.
\end{rem}

Note that thanks to (\ref{projio}),  $ D^{\rm O}$ and $D^{\rm I}$ commute and the CPTP map on $\cT(l^2(D))$, and its extension to $\cB(l^2(D))$, defined by
\be\label{defdiag}
\text{Diag}(\cdot)=D^{\rm O}\circ D^{\rm I}(\cdot)=D^{\rm I}\circ D^{\rm O}(\cdot)=\sum_{x, y \in V \atop x\sim y} |xy\ket\bra xy| \cdot |xy\ket\bra xy|,
\ee
is a projector as well. 
To address the dynamics, we consider the restriction
\be\label{defdiagphi}
\Phi^{\rm Diag}={\rm Diag}\circ \Phi_{\cS} |_{{\rm Diag} (\cT(l^2(D))) },
\ee
where we dropped mention to $\cS$ in order to simplify the notation.

\begin{cor}\label{cordynphi}
The map  $\Phi^{\rm Diag}$ determines the dynamics in the sense that for any $n\geq 2$
\be\label{dyndiag}
\Phi_{\cS}^n= \Phi_{\cS}\circ  {(\Phi^{\rm Diag})}^{n-2}\circ  {\rm Diag}\circ \Phi_{\cS} 
\ee
\end{cor}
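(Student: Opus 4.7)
The plan is to establish \eqref{dyndiag} by induction on $n\geq 2$, with the whole argument resting on one key algebraic identity derived from \eqref{dphy}, namely
\be\label{keyid-proposal}
\Phi_{\cS}^2=\Phi_{\cS}\circ{\rm Diag}\circ\Phi_{\cS}.
\ee
First I would prove \eqref{keyid-proposal} by a direct three-step manipulation: using $\Phi_{\cS}=\Phi_{\cS}\circ D^{\rm I}$ from \eqref{dphy} to rewrite $\Phi_{\cS}\circ\Phi_{\cS}=\Phi_{\cS}\circ D^{\rm I}\circ\Phi_{\cS}$, then inserting $D^{\rm O}$ for free via $\Phi_{\cS}=D^{\rm O}\circ\Phi_{\cS}$ to get $\Phi_{\cS}\circ D^{\rm I}\circ D^{\rm O}\circ\Phi_{\cS}$, and finally invoking \eqref{defdiag} to collapse $D^{\rm I}\circ D^{\rm O}={\rm Diag}$. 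This identity is precisely the base case $n=2$ of \eqref{dyndiag}, since ${\Phi^{\rm Diag}}^{0}$ is the identity on ${\rm Diag}(\cT(l^2(D)))$.

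Next I would record the elementary observation that for any $B\in{\rm Diag}(\cT(l^2(D)))$ one has
\be
{\rm Diag}(\Phi_{\cS}(B))=\Phi^{\rm Diag}(B),
\ee
which is simply the definition \eqref{defdiagphi}. Combined with the fact that ${\rm Diag}$ is a projector onto ${\rm Diag}(\cT(l^2(D)))$ and that $\Phi^{\rm Diag}$ preserves this subspace, this gives the operator identity
\be
{\rm Diag}\circ\Phi_{\cS}\circ{\Phi^{\rm Diag}}^{k}\circ{\rm Diag}={\Phi^{\rm Diag}}^{k+1}\circ{\rm Diag}\qquad\text{for all }k\geq 0.
\ee

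With these two ingredients in hand, the inductive step is a straightforward chain of substitutions. Assuming \eqref{dyndiag} holds at level $n$, write
\begin{align*}
\Phi_{\cS}^{n+1}=\Phi_{\cS}\circ\Phi_{\cS}^{n}
&=\Phi_{\cS}^{2}\circ{\Phi^{\rm Diag}}^{n-2}\circ{\rm Diag}\circ\Phi_{\cS}\\
&=\Phi_{\cS}\circ{\rm Diag}\circ\Phi_{\cS}\circ{\Phi^{\rm Diag}}^{n-2}\circ{\rm Diag}\circ\Phi_{\cS}\\
&=\Phi_{\cS}\circ{\Phi^{\rm Diag}}^{n-1}\circ{\rm Diag}\circ\Phi_{\cS},
\end{align*}
where the second line uses \eqref{keyid-proposal} and the third uses the boxed diagonal-preservation identity with $k=n-2$. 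This closes the induction and gives \eqref{dyndiag} for all $n\geq 2$.

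There is no real obstacle here; the only thing to be careful about is the domain/codomain bookkeeping, since $\Phi^{\rm Diag}$ is defined only on ${\rm Diag}(\cT(l^2(D)))$. That is why \eqref{dyndiag} brackets the ${\Phi^{\rm Diag}}^{n-2}$ factor between ${\rm Diag}\circ\Phi_{\cS}$ on the right (which lands in the diagonal subspace) and $\Phi_{\cS}$ on the left (which extends to all of $\cT(l^2(D))$), so every composition is well defined at each step of the induction.
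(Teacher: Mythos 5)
Your proof is correct and follows essentially the same route as the paper: both rest on the identities $\Phi_{\cS}=\Phi_{\cS}\circ D^{\rm I}=D^{\rm O}\circ\Phi_{\cS}$ from Proposition \ref{firstphi}, the relation $D^{\rm I}\circ D^{\rm O}={\rm Diag}$, and the fact that ${\rm Diag}$ is a projector. The only difference is presentational — the paper writes the telescoped product $\Phi_{\cS}\circ (D^{\rm I}\circ D^{\rm O}\circ \Phi_{\cS})^{n-2}\circ D^{\rm I}\circ D^{\rm O}\circ \Phi_{\cS}$ directly, whereas you make the same collapse explicit by induction, which is a perfectly valid (and arguably more careful) way to organize it.
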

\proof:
Proposition \ref{firstphi}  yields
\be
\Phi_{\cS}^n=\Phi_{\cS}\circ (D^{\rm I}\circ D^{\rm O}\circ \Phi_{\cS})^{n-2}D^{\rm I}\circ D^{\rm O}\circ \Phi_{\cS},
\ee
and  \eqref{defdiag} with the fact  that ${\rm Diag}$ is a projector proves the statement.
\qed
\begin{rem}
i) When acting on $\cT(l^2(D))$, $\Phi_{\cS}^n$, $n\geq 2$, leaves the subspace ${\rm Diag}\cT(l^2(D))$ only at the last time step. The essential part of the dynamics is thus driven by $\Phi^{\rm Diag}$ and takes place inside ${\rm Diag}\cT(l^2(D))$. \\
ii) The extension of $\Phi^{\rm Diag}$ to $\cB(l^2(D))$ reads
\be\label{matelphid}
\Phi^{\rm Diag}(\cdot) =\sum_{x\in V}\sum_{y\sim x \atop z\sim x} |S_{zy}(x)|^2|zx\ket\bra xy| \cdot |xy\ket\bra zx|
\ee
with Kraus operators $K_{zy}(x)=S_{zy}(x)|zx\ket\bra xy|$ mapping $\cH_x^{\rm I}$ to $\cH_x^{\rm O}$ and satisfying
\be
\sum_{x\in V}\sum_{y\sim x \atop z\sim x} K_{zy}^*(x)K_{zy}(x)=\sum_{x\in V}P_x^{\rm I}=\un, \ \ \  \sum_{x\in V}\sum_{y\sim x \atop z\sim x} K_{zy}(x)K^*_{zy}(x)=\sum_{x\in V}P_x^{\rm O}=\un,
\ee
with convergence in the strong sense.
\end{rem}


When restricted to $\cD\cM(l^2(D))$, the set of density matrices, the CPTP map $\Phi_{\cS}$  describes the one time step quantum evolution of a quantum walker in a mixed state, now characterized by a density matrix. Again, the dynamical system $(\Phi_{\cS}^n)_{n\in \N}$ on $\cT(l^2(D))$ or $\cD\cM(l^2(D))$, and the related spectral properties of the CPTP map $\Phi_{\cS}\in \cB(\cT(l^2(D)))$ are of interest. We will address the spectrum of $\Phi_{\cS}$, making use of the specifics of our construction. 

\medskip

{\bf Quantum Mechanical Interpretation:} Similarly to the case of pure states considered in Section \ref{purestate}, quantum mechanical considerations induce natural time dependent distribution probabilities in the framework of density matrices. Given an initial state $\rho_0\in \cD\cM(l^2(D))$, the state at time $n$, is $\rho_n=\Phi_{\cS}^n(\rho_0)\in \cD\cM(l^2(D))$.  The quantum mechanical probability to find the corresponding quantum walker at time $n$ on the directed edge $(xy)$ of $G$, by a measurement of its position, is 
\be
\P_n^{\rho_0}(xy)=\tr(|xy\ket\bra xy| \Phi_{\cS}^n(\rho_0))=\tr(|xy\ket\bra xy| \rho_n)=\bra xy | \rho_n xy\ket.
\ee
The probability to find the quantum walker at time $n\in \N$ in the subspace $\cH_x^{\rm I}$ or on the vertex $x\in V$, by a measurement of its position, is given by 
\be\label{prohix}
\Q_n^{\rho_0}(x)=\tr(P_x^{\rm I} \Phi_{\cS}^n(\rho_0))=\tr(P_x^{\rm I}\rho_n)=\sum_{y\sim x}\P_n^{\rho_0}(xy).
\ee 

\medskip

Proposition \ref{firstphi} yields the following operational interpretation of the quantum channel defined by $\Phi_{\cS}$. Given an initial quantum state $\rho_0\in \cD\cM(l^2(D))$, with $\rho_0>0$, one measures the position of the quantum walker on the vertices $V$. The probability to get the outcome $x\in V$ by this measurement is $\Q_0^{\rho_0}(x)=\tr ( P_x^{\rm I} \rho_0)>0$ and the wave packet reduction postulate says that after getting the outcome $x$ in the measurement process, the state immediately becomes $P_x^{\rm I}\rho_0 P_x^{\rm I}/\tr (P_x^{\rm I}\rho_0)$. The expectation value $\bar \rho$ of the state obtained after a position measurement is then 
\be
\bar \rho = \sum_{x\in V} \frac{P_x^{\rm I}\rho_0 P_x^{\rm I}}{\tr (P_x^{\rm I}\rho_0)} \Q_0^{\rho_0}(x)= \sum_{x\in V}P_x^{\rm I}\rho_0P_x^{\rm I}=D^{\rm I}(\rho_0),
\ee
inducing decoherence. 
Hence, the action of the quantum channel $\Phi_{\cS}$ on the state $\rho_0$ results in first taking its expectation value with respect to a position measurement, 
so that the state becomes $D^{\rm I}(\rho_0)$, followed by evolving the result by means of the unitary quantum channel $U_\cS\cdot U_\cS{}^*$. 

\subsection{Quantum Trajectory}

We strengthen this interpretation by considering the {\it Quantum Trajectory} associated with the measurement of the position on $V$. {See \cite{H} for quantum measurement theory, \cite{KM1, KM2} for the first analyses of quantum trajectories and }\cite{BJPP} for a recent account of this topic.  
The iterative protocol is as follows (dropping unessential indices from the notation and writing $\cH$ for $l^2(D)$).
\begin{itemize}
\item At time $0$, the state is $\rho_0\in \cD\cM(\cH)$. We measure the position on $V$ and  get an outcome $x_1\in V$, with probability $\tr (P_{x_1} \rho_0 )$ and evolve the reduced state by $U \cdot U^*$ to get at time $1$
\be
\rho_1= U P_{x_1}\rho_0 P_{x_1} U^*/\tr (P_{x_1}\rho_0).
\ee
Denoting $\P_1(x_1)=\tr (P_{x_1} \rho_0 )$ the probability to get the outcome $x_1\in V$, we get
\be
\rho_1 \P_1(x_1)=U P_{x_1}\rho_0 P_{x_1} U^*.
\ee
\item At time $1$, the state is $\rho_1\in \cD\cM(\cH)$. We measure  the position on $V$,  get an outcome $x_2\in V$ with probability 
\be
\tr (P_{x_2} \rho_1 )= \tr (P_{x_2} U P_{x_1}\rho_0 P_{x_1} U^*P_{x_2})/\tr (P_{x_1}\rho_0)
\ee 
and evolve the reduced state  by $U \cdot U^*$ to get at time $2$
\be
\rho_2= U P_{x_2}\rho_1 P_{x_2}U^*/\tr (P_{x_2} \rho_1).
\ee
Here, $\tr (P_{x_2} \rho_1 )=\P(x_2|x_1)$ is the probability to get the outcome  $x_2$, given the first outcome is $x_1$, so that the probability to get the sequence of outcomes $(x_1, x_2)\in V^2$ reads
\be
\P_2(x_1, x_2)= \tr (P_{x_2} U P_{x_1}\rho_0 P_{x_1} U^*P_{x_2}).
\ee
Hence we have
\be
\rho_2 \P_2(x_1, x_2)=U P_{x_2}\rho_1 P_{x_2}U^*.
\ee
\item At time $j$, $j\geq 1$, the state is $\rho_j\in \cD\cM(\cH)$. We measure  the position on $V$,  get an outcome $x_{j+1}\in V$, with probability $\tr (P_{x_{j+1}} \rho_{j} )$ and evolve the reduced state  by $U \cdot U^*$ to get at time $j+1$
\be\label{inducstate}
\rho_{j+1}= U P_{x_{j+1}}\rho_{j} P_{x_{j+1}}U^*/\tr (P_{x_{j+1}}\rho_j).
\ee
By induction, the probability to get the sequence of outcomes $(x_1, x_2, \dots, x_{j+1})\in V^{j+1}$ is 
\be\label{defpqj}
\P_{j+1}(x_1, x_2, \dots, x_{j+1})= \tr (P_{x_{j+1}}U\dots  P_{x_2} U P_{x_1}\rho_0 P_{x_1} U^*P_{x_2}\dots U^* P_{x_{j+1}}),
\ee
and 
\be\label{426}
\rho_{j+1}\P_{j+1}(x_1, x_2, \dots, x_{j+1})=U P_{x_{j+1}}\dots  P_{x_2} U P_{x_1}\rho_0 P_{x_1} U^*P_{x_2}\dots P_{x_{j+1}}U^*.
\ee
\end{itemize}
The sequence of outcomes and states $\{(x_j, \rho_j)\}_{1\leq j\leq n}\in (V\times  \cD\cM(\cH))^n$, $n\in \N^*$, is a realization of a quantum trajectory of length $n$ of the quantum walker under repeated measurement of  the position on $V$, with corresponding probability 
\be\label{lawqtraj}
\P_{n}(x_1, x_2, \dots, x_{n})=\tr (P_{x_{n}}U\dots  P_{x_2} U P_{x_1}\rho_0 P_{x_1} U^*P_{x_2}\dots U^* P_{x_{n}})\geq 0.
\ee
\begin{rem} i) We did not pay attention to the possibility that some denominators $\tr(P_{x_{j+1}}\rho_{j})$ could vanish. This is harmless since  the quantities of interest, $\P_{n}(x_1, x_2, \dots, x_{n})$ and $\rho_n\P_{n}(x_1, x_2, \dots, x_{n})$, are always well defined.\\
ii) Note that $\P_j(x_1, x_2, \dots, x_{j})$ is indeed a probability on $V^j$ for all $j\in \N^*$: setting $\cU(\cdot)=U\cdot U^*$ and $D(\cdot ) = \sum_{x\in V}P_x\cdot P_x$ on $\cB(\cH)$ we get for each $j$,  
\be\label{prob1}
\sum_{x_1, x_2, \dots, x_{j}\in V^j}\P_j(x_1, x_2, \dots, x_{j})=\tr (D\circ \cU\dots \circ D\circ \cU\circ D(\rho_0))=\tr (\rho_0)=1,
\ee
as $\cU$ and $D$ are CPTP maps and $\rho_0\in \cD\cM(\cH)$.
\end{rem}

We are ready to provide a probabilistic interpretation of $\Phi_{\cS}^n$, for all $n\in \N^*$. 

\begin{prop} Let $\rho_0\in \cD\cM(l^2(D))$ be an initial state and  
consider the quantum trajectory of length $n\in \N^*$ corresponding to repeated measurements of the position on $V$ according to the protocol above. Let  
\be
\bar \rho_n=\sum_{x_1, x_2, \dots, x_{n}\in V^n} \rho_n \P_n(x_1, x_2, \dots, x_{n}) 
\ee
be the expectation value of the state obtained at time $n$ with respect to the probability distribution $\P_n$ on $V^n$ defined by (\ref{lawqtraj}).  We have
\be
\bar \rho_n=\Phi_{\cS}^n(\rho_0), \ \ \forall n\in \N^*. 
\ee
\end{prop}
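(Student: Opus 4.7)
The plan is to prove the identity by induction on $n \in \N^*$, using the key recursion (\ref{inducstate})--(\ref{426}) for unnormalized states and the factorization $\Phi_{\cS}(\cdot) = U_\cS D^{\rm I}(\cdot) U_\cS{}^*$ established in Proposition \ref{firstphi}.

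First, for the base case $n=1$, the definition $\bar\rho_1 = \sum_{x_1 \in V} \rho_1 \P_1(x_1)$ combined with $\rho_1 \P_1(x_1) = U_\cS P_{x_1}^{\rm I} \rho_0 P_{x_1}^{\rm I} U_\cS{}^*$ gives
\begin{equation}
\bar\rho_1 = \sum_{x_1 \in V} U_\cS P_{x_1}^{\rm I} \rho_0 P_{x_1}^{\rm I} U_\cS{}^* = U_\cS D^{\rm I}(\rho_0) U_\cS{}^* = \Phi_{\cS}(\rho_0),
\end{equation}
by (\ref{uphi}). This fixes the pattern: the sum over $x_1$ collapses the position projectors into the diagonal map $D^{\rm I}$, and then the conjugation by $U_\cS$ reassembles $\Phi_{\cS}$.

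For the inductive step, I will work at the level of the unnormalized quantities $\tilde\rho_j(x_1,\dots,x_j) := \rho_j \, \P_j(x_1,\dots,x_j)$, which by (\ref{426}) satisfy the recursion
\begin{equation}
\tilde\rho_{j+1}(x_1,\dots,x_{j+1}) = U_\cS P_{x_{j+1}}^{\rm I} \, \tilde\rho_j(x_1,\dots,x_j) \, P_{x_{j+1}}^{\rm I} U_\cS{}^*.
\end{equation}
Summing over $x_{j+1} \in V$ and invoking (\ref{uphi}),
\begin{equation}
\sum_{x_{j+1} \in V} \tilde\rho_{j+1}(x_1,\dots,x_{j+1}) = U_\cS D^{\rm I}\bigl(\tilde\rho_j(x_1,\dots,x_j)\bigr) U_\cS{}^* = \Phi_{\cS}\bigl(\tilde\rho_j(x_1,\dots,x_j)\bigr).
\end{equation}
Summing the remaining $j$ indices and using linearity of $\Phi_{\cS}$ on $\cT(l^2(D))$ yields $\bar\rho_{j+1} = \Phi_{\cS}(\bar\rho_j)$. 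Combined with the base case, induction gives $\bar\rho_n = \Phi_{\cS}^n(\rho_0)$.

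The only subtle point, and the step I would flag as the main obstacle, is the interchange of the (possibly infinite) sum over $x_{j+1} \in V$ with the linear maps $U_\cS \cdot U_\cS{}^*$ when $|V| = \infty$. This is not a delicate issue, however: $\tilde\rho_j(x_1,\dots,x_j)$ is a positive trace class operator satisfying $\tr \tilde\rho_j(x_1,\dots,x_j) = \P_j(x_1,\dots,x_j)$, so $\sum_{x_{j+1}} P_{x_{j+1}}^{\rm I} \tilde\rho_j(\cdot) P_{x_{j+1}}^{\rm I} = D^{\rm I}(\tilde\rho_j(\cdot))$ converges in trace norm by the monotone convergence applied to positive operators summing to $\tilde\rho_j(\cdot)$ modulo the off-diagonal pieces (as $\sum_x P_x^{\rm I} = \un$ strongly), and conjugation by the bounded operator $U_\cS$ preserves this trace-norm convergence; the remaining sums over $x_1,\dots,x_j$ are handled identically, the total mass being bounded by $\tr \rho_0 = 1$ via (\ref{prob1}). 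Once these manipulations are justified, the identity drops out immediately from the induction.
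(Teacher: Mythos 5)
Your proof is correct and follows essentially the same route as the paper: both rest on the unnormalized recursion (\ref{426}) and the factorization $\Phi_{\cS}(\cdot)=U_\cS D^{\rm I}(\cdot)U_\cS{}^*$ from (\ref{uphi}), your induction simply unrolling to the paper's direct identity $\bar\rho_n=(\cU\circ D^{\rm I})^n(\rho_0)$. The only difference is the justification of trace-norm convergence for infinite $G$, where you invoke absolute convergence of sums of positive trace-class terms with summable traces, while the paper truncates $D^{\rm I}$ to $D^{\rm I}_m$ and telescopes; both are valid.
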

\proof:
By definition and using (\ref{426}) (with the simpler notation above),
\be
 \bar \rho_n=\hspace{-.2cm} \sum_{x_1, x_2, \dots, x_{n}\in V^n} UP_{x_n}\dots UP_{x_1}\rho_0P_{x_1}U^*\dots P_{x_n}U^*=\cU \circ D\circ \cU\dots \circ D\circ \cU\circ D(\rho_0),
\ee
which yields the result thanks to (\ref{uphi}) in Proposition \ref{firstphi}. For $G$ infinite, the convergence takes place in trace norm, thanks to (\ref{krausrep}) and (\ref{bddphi}): let the positive map
\be
D_m(\cdot)=\sum_{x\in V\atop |x|\leq m}P_x\cdot P_x
\ee 
such that for $0\leq A\in \cT(l^2(D))$, we have $D_m(A)\leq D(A)$, $(D-D_m)(A)\leq D(A)$
 and  $(D-D_m)(A)\geq 0$ converges in trace norm to zero when $m\ra\infty$. The difference
 $ (\cU \circ D)^n(\rho_0)- (\cU \circ D_m)^n$ is a finite sum of trace class positive terms of the form
 \be\label{indivter}
 (\cU \circ D_m)^{n_1}\circ (\cU \circ (D-D_m))^{n_2}\circ \cdots  (\cU \circ D_m)^{n_{2p-1}} \circ (\cU \circ(D-D_m))^{n_{2p}}(\rho_0),
 \ee
 where $p$ is finite,  $\sum_{i=1}^{2p}n_i=n$, $n_i\geq 0$, and $n_{2k}\geq 1$, for some $k\leq p$. Using the inequalities above on each of the positive maps that compose \eqref{indivter}, except 
 $ (\cU \circ (D-D_m))^{n_{2k}}$, we get that its trace  is bounded above by
 \be
 \tr\Big((\cU \circ D)^{n'_1} (\cU \circ( D-D_m))^{n_{2k}} (\cU \circ D)^{n'_2}(\rho_0) \Big) \leq \tr\Big(( D-D_m)\circ (\cU \circ D)^{n'_2} (\rho_0)\Big),
 \ee
 where $n'_1, n'_2 \in \N$ are such that $n'_1+n'_2+n_{2k}=n$, and where we used that $\cU \circ D$, and $\cU$ are trace preserving. The last term vanishes as $m \ra\infty$, which ends the proof. 
\qed

{\bf Comparison with the open QW of  \cite{APSS1}. }
Let us finally show that open SQWs are distinct from the open QWs devised in \cite{APSS1}. To do so, we need to consider regular graphs $G$ with $d_x=d$ for all $x\in V$, which we further assume be finite, for simplicity. We have the identification $l^2(D)=\bigoplus_{x\in V}\cH_x^{\rm I}\simeq l^2(V)\otimes \C^d$. Let $ \{| \tau\ket \}_{\tau\in I} $ with $|I|=d$, be an ONB of $ \cH_x^{\rm I}\simeq \C^d$, and consider a labelling of the vertices so that the following identification holds, for each $x\in V$ fixed, 
\be 
|xy\ket \simeq |x\ket\otimes |\tau\ket, \ \ \text{where} \ \ {y\sim x}, \  {\tau\in I}.
\ee
Hence the identification of basis vectors of $\cB(l^2(D))\simeq \cB(l^2(V)\otimes \C^d)$
\be
|xy\ket \bra x' y'| \simeq | x\ket\bra x'|\otimes |\tau\ket\bra \tau' |. 
\ee 
The open QWs introduced in \cite{APSS1} are denoted by $\cM$, and act on density matrices on $l^2(V)\otimes \C^d$. Their defining characteristics is that they leave the following subset of states invariant
\be
\cV=\Big\{\rho\in \cD\cM(l^2(V)\otimes \C^d)\ | \ \rho=\sum_{x\in V}|x\ket\bra x|\otimes \rho(x), \ \rho(x)\in \cB(\C^d) \Big\},
\ee
see Corollary 2.5  in \cite{APSS1}.
Now, $|x y\ket\bra x y |\simeq |x\ket\bra x|\otimes |\tau\ket\bra \tau|\in \cV$, while the computation
\be
\Phi_{\cS}(|xy\ket \bra xy| )=\sum_{z\sim x \atop t\sim x}S_{z y}(x)\overline{S_{t y}(x)} |zx\ket\bra tx|
\ee
shows  $\Phi_{\cS}(|xy\ket \bra xy| )$ is not equivalent to an element of $\cV$, unless $S(x)$ {is a permutation matrix times a diagonal matrix of phases}.

\medskip

We now turn to the spectral properties of $\Phi_{\cS}$.

\section{Spectral and Dynamical Properties of $\Phi_{\cS}$}\label{sec:sdp}

We will mainly consider the finite dimensional case corresponding to finite graphs $G$ in this section, and will comment along the way on the infinite graph  case.

\medskip

For $G$ finite, we will not distinguish $\cT(l^2(D))$ and $\cB(l^2(D))$. Let us equip $\cB(l^2(D))$ with the Hilbert-Schmidt scalar product to make it a Hilbert space,
\be\label{hsip}
\bra A, B \ket_{HS}=\tr (A^* B), \ \ \forall A, B\in \cB(l^2(D)), \ \ \text{s.t.} \ \ \|A\|_{HS}=\sqrt{\tr(A^*A)}.
\ee
We denote the adjoint of a map $\Phi\in \cB(l^2(D))$ with respect to this scalar product by $\Phi^\dagger$. In particular, we have
\be
 \Phi_{\cS}{}^\dagger(\cdot)=\sum_{x\in V}K^*(x)\cdot K(x), \ \ (U_\cS \cdot U_\cS{}^*)^{\dagger}=U_\cS{}^*  \cdot U_\cS, \ \ D^{\#}=D^{\#}{}^{\dagger}, \ \ \#\in \{{\rm O,I}\},
\ee
which makes $D^\#$ and $\text{Diag}$, recall (\ref{defdiag}), orthogonal projectors on $\cB(l^2(D))$ and implies 
$
 \Phi_{\cS}{}^\dagger=D^{\rm I}\circ \Phi_{\cS}{}^\dagger \circ D^{\rm O}.
$
\begin{thm}\label{multipli}
Let $G$ be finite. The open SQW $\Phi_{\cS}$ is a partial isometry on $\cB(l^2(D))$ with $\ker \Phi_{\cS}=\ran (\un -D^{\rm I})$, where $\dim \ker \Phi_{\cS}=(\sum_{x\in V}d_x)^2-\sum_{x\in V}d_x^2.$\\
Moreover, $\sigma(\Phi_{\cS})\setminus\{0\}=\sigma\big({\rm Diag} \circ \Phi_{\cS}|_{{\rm Diag}\cB(l^2(D))}\big)\setminus\{0\}$, with identical geometric multiplicities.
\be
\sum_{\lambda\in\sigma(\Phi_{\cS})\setminus\{0\}}\dim \ker(\Phi_{\cS}-\lambda)\leq \rank{\rm Diag},
\ee
and $\Phi_{\cS}$ is not diagonalizable if $\exists x\in V$ such that $d_x\geq 2$.
\end{thm}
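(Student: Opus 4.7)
The plan is to exploit the two factorizations $\Phi_{\cS}(\cdot)=U_\cS D^{\rm I}(\cdot)U_\cS^* = D^{\rm O}\circ\Phi_{\cS}\circ D^{\rm I}$ from Proposition \ref{firstphi}, together with the observation that $D^{\rm I}$, $D^{\rm O}$ and ${\rm Diag}=D^{\rm I}\circ D^{\rm O}$ are orthogonal projectors on the Hilbert space $(\cB(l^2(D)),\langle\cdot,\cdot\rangle_{HS})$, being idempotent and HS--self-adjoint.

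For the first assertion I would note that the conjugation $A\mapsto U_\cS A U_\cS^*$ is an HS-isometry by cyclicity of the trace. Composing this isometry with the orthogonal projector $D^{\rm I}$ realizes $\Phi_{\cS}$ as a partial isometry with initial space $\ran D^{\rm I}$, so $\ker\Phi_{\cS}=\ran(\un-D^{\rm I})$. The dimension count is immediate: $\ran D^{\rm I}=\bigoplus_{x\in V}P_x^{\rm I}\cB(l^2(D))P_x^{\rm I}$ has HS-dimension $\sum_x d_x^2$, while $\dim \cB(l^2(D))=(\sum_x d_x)^2$.

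The heart of the theorem is the spectral identification, which I would prove by exhibiting mutually inverse linear bijections between $\ker(\Phi_{\cS}-\lambda)$ and $\ker({\rm Diag}\circ\Phi_{\cS}|_{{\rm Diag}\,\cB(l^2(D))}-\lambda)$ for each $\lambda\neq 0$. Given $A$ with $\Phi_{\cS}(A)=\lambda A$, the relation $\Phi_{\cS}=D^{\rm O}\circ\Phi_{\cS}$ forces $A=\lambda^{-1}\Phi_{\cS}(A)\in\ran D^{\rm O}$, so $B:=D^{\rm I}(A)={\rm Diag}(A)$ lies in ${\rm Diag}\,\cB(l^2(D))$; using $\Phi_{\cS}\circ D^{\rm I}=\Phi_{\cS}$ gives $\Phi_{\cS}(B)=\Phi_{\cS}(A)=\lambda A$ and hence ${\rm Diag}\circ\Phi_{\cS}(B)=\lambda B$, with $B\neq 0$ because $B=0$ would entail $\Phi_{\cS}(A)=\Phi_{\cS}\circ D^{\rm I}(A)=\Phi_{\cS}(B)=0$. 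Conversely, given $B\in{\rm Diag}\,\cB(l^2(D))$ with ${\rm Diag}\circ\Phi_{\cS}(B)=\lambda B$, I set $A=\lambda^{-1}\Phi_{\cS}(B)$; then $D^{\rm I}\Phi_{\cS}(B)=D^{\rm I}D^{\rm O}\Phi_{\cS}(B)={\rm Diag}\circ\Phi_{\cS}(B)=\lambda B$, so $\Phi_{\cS}=\Phi_{\cS}\circ D^{\rm I}$ yields $\Phi_{\cS}(A)=\lambda^{-1}\Phi_{\cS}\circ D^{\rm I}(\Phi_{\cS}(B))=\lambda^{-1}\Phi_{\cS}(\lambda B)=\lambda A$. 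A direct check shows that these two constructions are inverse to each other, which proves equality of geometric multiplicities.

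The dimension bound then follows from $\dim{\rm Diag}\,\cB(l^2(D))=\rank{\rm Diag}=|D|=\sum_x d_x$. For the non-diagonalizability, summing all geometric multiplicities of $\Phi_{\cS}$ yields at most $\dim\ker\Phi_{\cS}+\sum_x d_x=(\sum_x d_x)^2-\sum_x d_x(d_x-1)$; for $\Phi_{\cS}$ to be diagonalizable this must equal $\dim\cB(l^2(D))=(\sum_x d_x)^2$, forcing $\sum_x d_x(d_x-1)=0$, which is impossible as soon as some $d_x\geq 2$. The main obstacle is the bijection in Step~2: the identities \eqref{dphy} must be chained in just the right order so that both the forward map $A\mapsto D^{\rm I}(A)$ and the backward map $B\mapsto \lambda^{-1}\Phi_{\cS}(B)$ land in the prescribed spaces and preserve the eigenvalue $\lambda$; the rest is linear-algebraic bookkeeping.
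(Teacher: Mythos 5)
Your proposal is correct and follows essentially the same route as the paper: the partial isometry property via $\Phi_{\cS}{}^\dagger\circ\Phi_{\cS}=D^{\rm I}$, the reduction of nonzero eigenvalues to ${\rm Diag}\circ\Phi_{\cS}|_{{\rm Diag}\,\cB(l^2(D))}$ using the factorization $\Phi_{\cS}=D^{\rm O}\circ\Phi_{\cS}\circ D^{\rm I}$, and the same dimension count for non-diagonalizability. Your backward map $B\mapsto\lambda^{-1}\Phi_{\cS}(B)$ is exactly the paper's eigenvector ${\rm Diag}(A)+\lambda^{-1}(\un-D^{\rm I})\circ\Phi_{\cS}\circ{\rm Diag}(A)$ written more compactly, so the bijection argument coincides with the paper's.
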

\proof:
Thanks to Proposition (\ref{firstphi}),
\be
\Phi_{\cS}=\Phi_{\cS}\circ D^{\rm I} \ \ \text{and} \ \  \Phi_{\cS}{}^\dagger \circ \Phi_{\cS}=D^{\rm I},
\ee
so that $\Phi_{\cS}((\un -D^{\rm I})(A))=0$ and  $\ran (\un -D^{\rm I})\subset \ker \Phi_{\cS}$. \\
Moreover, for any $A\in \cB(l^2(D))$, 
\be
\|\Phi_{\cS}(A) \|_{HS}^2=\bra A| \Phi_{\cS}{}^\dagger \circ \Phi_{\cS}(A)\ket_{HS}=\bra A|D^{\rm I}(A)\ket_{HS}=\| D^{\rm I}(A)\|_{HS}^2,
\ee
so that on $\ran (\un -D^{\rm I})^\perp=\ker (\un -D^{\rm I})=\ran D^{\rm I}$, $\Phi_{\cS}$ is an isometry and $\ker \Phi_{\cS}=\ran (\un -D^{\rm I})$.\\
Then, $\ran D^{\rm I}$ is spanned by $\{|xy\ket\bra xz|, \ x\in V, \ y\sim x, z\sim x \}$, and thus has dimension $\sum_{x\in V}d_x^2$, which yields the first statement.

Let us turn to the nonzero eigenvalues of  $\Phi_{\cS}$. If $\lambda\in\sigma(\Phi_{\cS})\setminus\{0\}$, there exists $A\in  \cB(l^2(D))\setminus \{0\}$ such that
$D^{\rm I}(A)\neq 0$ and
\begin{align}
 \Phi_{\cS}(A)=D^{\rm O}\circ \Phi_{\cS}\circ D^{\rm I}(A)=\lambda (D^{\rm I}(A)+(\un-D^{\rm I})(A)),
\end{align}
by Proposition \ref{firstphi}. Projecting onto $D^{\rm I}\cB(l^2(D))$ and $(\un-D^{\rm I})\cB(l^2(D))$, we get 
\begin{align}
&D^{\rm I}\circ D^{\rm O}\circ \Phi_{\cS}\circ D^{\rm I}(A)=\lambda D^{\rm I}(A), 
&(\un-D^{\rm I})(A)=\frac{1}{\lambda}(\un-D^{\rm I})\circ \Phi_{\cS}\circ D^{\rm I}(A).
\end{align}
The first equation implies that $D^{\rm I}(A)$ is an eigenvector associated with $\lambda\neq 0$ for the restriction $D^{\rm I}\circ D^{\rm O}\circ \Phi_{\cS}|_{D^{\rm I}\cB(l^2(D))}$ which, with \eqref{defdiag}, implies $D^{\rm I}(A)=\text{Diag}(A)$. Hence $\text{Diag}(A)$ is an eigenvector of the restriction $\text{Diag} \circ \Phi_{\cS}|_{\text{Diag}\cB(l^2(D))}$ associated with $\lambda\neq 0$. 

Conversely, if $\text{Diag}(A)$ is a an eigenvector of $\text{Diag} \circ \Phi_{\cS}|_{\text{Diag}\cB(l^2(D))}$ associated with an eigenvalue $\lambda\neq 0$, then
\be\label{eigvecnz}
B={\rm Diag}(A)+\frac{1}{\lambda}(\un-D^{\rm I})\circ \Phi_{\cS}\circ {\rm Diag}(A)
\ee
satisfies thanks to Proposition \ref{firstphi}, $\ker \Phi_{\cS}=\ran (\un -D^{\rm I})$, and $\un-{\rm Diag}=\un-D^{\rm I}\circ D^{\rm O}$
\begin{align}
 \Phi_{\cS}(B)&=\Phi_{\cS}({\rm Diag}(A))+\frac{1}{\lambda}\Phi_{\cS}((\un-D^{\rm I})\circ \Phi_{\cS}\circ {\rm Diag}(A))\nonumber\\
 &={\rm Diag}\circ\Phi_{\cS}({\rm Diag}(A))+(\un-D^{\rm I})\circ D^{\rm O}\circ\Phi_{\cS}({\rm Diag}(A))\nonumber\\
 &=\lambda \Bigg( {\rm Diag}(A)) + \frac{1}{\lambda}(\un-D^{\rm I})\circ \Phi_{\cS}\circ {\rm Diag}(A) \Bigg)= \lambda B.
\end{align}

Finally, since $\rank {\rm Diag}= \sum_{x\in V}d_x$, the total geometric multiplicity of $\Phi_{\cS}$ cannot exceed 
$
(\sum_{x\in V}d_x)^2-\sum_{x\in V}d_x^2+\sum_{x\in V}d_x,
$
which agrees with $\dim(\cB(l^2(D)))=(\sum_{x\in V}d_x)^2$ iff $d_x=d^2_x, \ \forall x\in V$.
\qed
\begin{rem}
Similar statements hold for  $\Phi_{\cS}^\dagger$.
\\
Equation \eqref{eigvecnz} provides the explicit one to one correspondence between eigenvectors of ${\rm Diag} \circ \Phi_{\cS}|_{{\rm Diag}\cB(l^2(D))}$ and of $ \Phi_{\cS}$
associated with the same nonzero eigenvalues.\\
Denoting by $\vertiii{\cdot}$ the operator norm induced on $\cB(l^2(D))$ by $\|\cdot\|_{HS}$, we recover
\be
\vertiii{\Phi_{\cS}}=\vertiii{\Phi_{\cS}^\dagger}=1, 
\ee
which holds for all CPTP unital maps,  and all $p$-norms, $1<p$, see \cite{PGWPR}.\\
In particular, the spectrum of $\Phi_{\cS}$ satisfies $\sigma(\Phi_{\cS})\subset \{z\in \C, \ \text{s.t} \ |z|\leq 1\}$, where all eigenvalues of modulus one are semi-simple, and both $0$ and $1$ belong to the spectrum, since $|V|\geq 2$. \\
Also, $\sigma(\Phi_{\cS})=\overline{\sigma(\Phi_{\cS})}$, as a consequence of $\Phi_{\cS}(A)^*=\Phi_{\cS}(A^*)$, $\forall A\in \cB(l^2(D))$.
\end{rem}

We address now the computation of nonzero eigenvalues of $\Phi_{\cS}$. Recall (\ref{defdiagphi}), 
\be
\Phi^{\rm Diag}={\rm Diag}\circ \Phi_{\cS} |_{{\rm Diag} (\cB(l^2(D))) }, \ \ \text{so that} \ \ \Phi^{\rm Diag}{}^\dagger={\rm Diag}\circ \Phi_{\cS}{}^\dagger |_{{\rm Diag }(\cB(l^2(D))) }.
\ee
We view $\Phi^{\rm Diag}$ as a matrix in the ONB basis ${\rm Diag} (\cB(l^2(D)))=\spa \{|xy\ket\bra xy|\}_{x,y\in V\atop x\sim y}:
$
\begin{lem}
The restriction $\Phi^{\rm Diag}$  has a matrix representation with respect to the orthonormal basis $\{|xy\ket\bra xy|\}_{x,y \in V \atop x\sim y}$
which is entry-wise positive and bistochastic.
\end{lem}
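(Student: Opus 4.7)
The plan is to compute the matrix elements of $\Phi^{\rm Diag}$ in the basis $\{|xy\ket\bra xy|\}_{x\sim y}$ directly from the Kraus representation (\ref{matelphid}) and then read off both claims. First I would apply $\Phi^{\rm Diag}$ to a single basis vector. Starting from
\be
\Phi^{\rm Diag}(|x_0 y_0\ket\bra x_0 y_0|)=\sum_{x\in V}\sum_{y\sim x,\, z\sim x}|S_{zy}(x)|^2 |zx\ket\bra xy|\, |x_0 y_0\ket\bra x_0 y_0|\, |xy\ket\bra zx|, \nonumber
\ee
the scalar products $\bra xy|x_0 y_0\ket=\delta_{x,x_0}\delta_{y,y_0}$ collapse the double sum, leaving
\be
\Phi^{\rm Diag}(|x_0 y_0\ket\bra x_0 y_0|)=\sum_{z\sim x_0}|S_{z y_0}(x_0)|^2|z x_0\ket\bra z x_0|. \nonumber
\ee
This confirms that $\Phi^{\rm Diag}$ indeed leaves ${\rm Diag}(\cB(l^2(D)))$ invariant, and together with the orthonormality of $\{|xy\ket\bra xy|\}$ with respect to the Hilbert-Schmidt inner product (\ref{hsip}) it shows that the matrix element from the column labelled $(x_0,y_0)$ to the row labelled $(w,v)$ equals $|S_{w y_0}(x_0)|^2$ when $v=x_0$ and $w\sim x_0$, and vanishes otherwise.

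Entry-wise positivity is then immediate, as every nonzero entry is the squared modulus of a scattering amplitude. For the bistochastic property, I would verify the column and row sums separately using the unitarity of each $S(x)$. The column sum indexed by $(x_0,y_0)$ is
\be
\sum_{w\sim x_0}|S_{w y_0}(x_0)|^2=\|S(x_0)\, e_{y_0}\|^2=1, \nonumber
\ee
since the columns of $S(x_0)\in U(d_{x_0})$ have unit norm. The row sum indexed by $(w,v)$ (with $w\sim v$) only receives contributions from columns $(x_0,y_0)$ with $x_0=v$ and $y_0\sim v$, giving
\be
\sum_{y_0\sim v}|S_{w y_0}(v)|^2=\|S(v)^* e_w\|^2=1, \nonumber
\ee
by the unitarity of $S(v)$. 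This establishes that the matrix is bistochastic.

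I expect no real obstacle: once (\ref{matelphid}) is used to extract the coefficients, everything reduces to the column- and row-norm identities for a unitary matrix. The only mild care needed is to track which pairs $(w,v)$ actually produce nonzero entries so that the row sum is taken over the correct index set $\{y_0 : y_0\sim v\}$ of incoming edges at $v$, matching the size $d_v$ of the unitary $S(v)$.
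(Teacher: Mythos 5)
Your computation of the matrix elements from (\ref{matelphid}) and the verification of the column and row sums via the unit-norm columns and rows of the unitary $S(x)$ is exactly the argument the paper gives. The proposal is correct and essentially identical to the paper's proof.
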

\proof:
We have, see \eqref{matelphid}, 
\be
\Phi^{\rm Diag}(|xy\ket\bra xy|)=\sum_{z\sim x}|S_{zy}(x)|^2|zx\ket\bra zx|.
\ee
Hence, labelling the matrix elements by the directed edges $(xy)$, 
\be
\Phi^{\rm Diag}\simeq (\Phi^{\rm Diag}_{zt\, xy })_{x,y, z, t\in V } \ \ \text{where} \ \ \Phi^{\rm Diag}_{zt \, xy }=\begin{cases}
    |S_{zy}(x)|^2& \text{if } t=x, y \sim x, z\sim x \\
    0 & \text{otherwise}.
\end{cases}
\ee
Since $S(x)=(S_{zy}(x))_{z\sim x\atop y\sim x}$ is unitary for all $x\in V$, we have 
\be
\sum_{zt} \Phi^{\rm Diag}_{zt \, xy }=\sum_{z\sim x} |S_{zy}(x)|^2=1 \ \ {\rm and} \ \ 
\sum_{xy} \Phi^{\rm Diag}_{zt \, xy }=\sum_{y\sim t} |S_{zy}(t)|^2=1.
\ee
\qed
\begin{rem}
i) The operator $\Phi^{\rm Diag}{}^\dagger$ being the adjoint of $\Phi^{\rm Diag}$ has a matrix representation given by the transpose of $ (\Phi^{\rm Diag}_{zt\, xy })_{x,y, z, t\in V } $.\\
ii) The map $\Phi^{\rm Diag}$ is the transition matrix of a naturally related classical Markov process on $D$, the set of directed edges, parameterized by $\cS$.\\
\end{rem}

The main spectral properties of $\Phi^{\rm Diag}$, determining $\sigma(\Phi_{\cS})\setminus \{0\}$, read as follows according to Perron-Frobenius Theorem, see  {\it e.g.} \cite{KS, N, Wo}:
\begin{itemize}
\item $\sigma(\Phi^{\rm Diag})\subset \{z\in \C, \ {\rm s.t.} \ |z|\leq 1\}$, with semi-simple modulus one eigenvalues.
\item If $\Phi^{\rm Diag}$ is irreducible, $1$ is a simple eigenvalue with corresponding eigenvector $\sum_{x\sim y} |xy\ket\bra xy|$. If  $\Phi^{\rm Diag}$ has period $p>1$, $\sigma(\Phi^{\rm Diag})\cap S^1=\{e^{i2\pi k/p}\}_{0\leq k< p}$  are simple eigenvalues. Moreover, the whole spectrum is invariant under rotation by a the angle $2\pi/p$. 
\item If $\Phi^{\rm Diag}$ is irreducible and aperiodic, $1$ is the only eigenvalue of modulus one and is simple. 
\end{itemize}

In particular, we have 
\begin{prop}\label{irredlem} Assume $S_{zy}(x)\neq 0$, for all $x, y, z\in V$, $y\sim x, z\sim x$. Then the bistochastic map $\Phi^{\rm Diag}$ is irreducible.
Moreover, $\Phi^{\rm Diag}$ is aperiodic if the graph $G$ contains a cycle with an odd number of vertices, and it is irreducible with period 2 otherwise.
\end{prop}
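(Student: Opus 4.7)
The key observation is that, in the basis $\{|xy\ket\bra xy|\}_{(xy)\in D}$, the matrix $\Phi^{\rm Diag}$ is precisely the transition matrix of a Markov chain on directed edges whose one-step transitions from state $(xy)$ are to states $(zx)$ with probability $|S_{zy}(x)|^2$ for $z\sim x$. Under the hypothesis $S_{zy}(x)\neq 0$, every such transition has strictly positive probability. Interpreting an edge state $(xy)\in D$ as ``a walker currently at vertex $x$ having just arrived from $y$,'' a trajectory $(xy)\to(z_1x)\to(z_2z_1)\to\cdots$ of length $n$ in the chain corresponds exactly to a walk $x,z_1,z_2,\dots,z_n$ in the graph $G$. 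This bijective encoding reduces both claims to questions about walks in $G$.

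For irreducibility, given two directed edges $(xy)$ and $(vu)$, I need a path of positive weight from $(xy)$ to $(vu)$. Using the graph interpretation above, this is the existence of a walk $x=w_0,w_1,\dots,w_n=v$ in $G$ with $w_{n-1}=u$. Connectedness of $G$ provides a walk from $x$ to $u$, which I extend by one step $u\to v$; the resulting walk lifts to a product of transitions with strictly positive probabilities (by the assumption on $S$), giving irreducibility.

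For the periodicity dichotomy, the period at state $(xy)$ is the gcd of lengths $n$ of closed walks $x,w_1,\dots,w_{n-1},w_n=x$ in $G$ with $w_{n-1}=y$. The walk $x\to y\to x$ always provides the value $n=2$, so the period divides $2$. If $G$ contains a cycle $c_0,c_1,\dots,c_{2k+1}=c_0$ of odd length $2k+1$, then picking a shortest path $x=p_0,p_1,\dots,p_\ell=c_0$ of length $\ell$ from $x$ to the cycle, the concatenation
\[
x=p_0\to\cdots\to p_\ell=c_0\to c_1\to\cdots\to c_{2k+1}=c_0\to p_{\ell-1}\to\cdots\to p_0=x\to y\to x
\]
is a closed walk of length $2\ell+(2k+1)+2$, which is odd and ends with the edge $y\to x$. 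Thus the period divides $\gcd(2,\text{odd})=1$, i.e.\ the chain is aperiodic. Conversely, if $G$ contains no odd cycle, $G$ is bipartite, so every closed walk in $G$ has even length; hence every admissible return time $n$ is even and the period equals $2$.

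The only subtle point is the construction of an odd closed walk that \emph{ends with the prescribed edge} $y\to x$; the naive idea of simply adding an odd cycle to an even closed walk can destroy the required terminal edge. The trick above is to first do a round trip from $x$ that visits the odd cycle (yielding any odd increment to the walk length) and then append the mandatory $x\to y\to x$ tail, so that the final edge is automatically $y\to x$. Aside from this bookkeeping, the proof is a direct application of Perron--Frobenius combined with the translation between directed-edge trajectories and graph walks; no additional spectral machinery is needed.
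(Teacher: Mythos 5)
Your proof is correct and follows essentially the same route as the paper: both identify $\Phi^{\rm Diag}$ with the transition matrix of a Markov chain on the directed edges of $G$ (the paper phrases this via an auxiliary graph $\Gamma$), derive irreducibility from the connectedness of $G$ together with the non-vanishing of the $S_{zy}(x)$, obtain that the period divides $2$ from the backtracking walk $x\to y\to x$, and settle the dichotomy via the presence or absence of an odd cycle. Your explicit construction of an odd closed walk ending with the prescribed edge $y\to x$ at \emph{every} state is in fact slightly more careful than the paper's argument, which exhibits an odd return time only at edges lying on the cycle and implicitly relies on the period being constant on a communicating class.
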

\begin{rem}
Irreducibility may hold under weaker assumptions on the scattering matrices, see the example of the tree with three vertices discussed in Section \ref{treetv}.
\end{rem}
\begin{proof}:
Let $\Gamma$ be the  graph associated with the matrix $\Phi^{\rm Diag}$ as follows. It has vertices labelled by the directed edges of $G$, $(xy)$, $x\sim y$, and directed edges from  
$(xy)$ to $(zt)$  iff $\Phi^{\rm Diag}_{zt \,  xy}=\delta_{xt}|S_{zy}(x)|^2>0$. Under the assumption on the elements of $S(x)$, we get that $\Gamma$ has an edge from 
$(xy)$ to $(zx)$ for all $z\sim x$, $y\sim x$, $x\in V$, {\it i.e.} between any incoming edge of $G$ at $x$ and any outgoing edge of $G$ from $x$. The matrix $\Phi^{\rm Diag}$ is irreducible iff there exists a path of directed edges of $\Gamma$ between any two 
of its vertices $(xy)$ and $(zt)$. Since $G$ is connected, there exists a path from $x$ to $t$, made of directed edges $(x_1x)$, $(x_2x_1)$, \dots, $(tx_n)$. But $(xy)$ and $(x_1x)$ are connected in $\Gamma$, as are $(x_1x)$ and $(x_2x_1)$, up to $(x_n x_{n-1})$ and $(tx_n)$, so that $\Phi^{\rm Diag}$ is irreducible.

Because for any directed edge $(xy)$ of $G$, $(xy)$ and $(yx)$ form a directed edge of $\Gamma$, the matrix element $({\Phi^{\rm Diag}}^2)_{xy \, xy }>0$,  so that either the irreducible matrix  $\Phi^{\rm Diag}$ has period 2, or it is aperiodic. When $G$ contains a cycle with $n$ vertices, for any edge $(xy)$ of $G$ between vertices of the cycle, we have $({\Phi^{\rm Diag}}^n)_{xy \, xy }>0$, estimating the matrix element by the product of the elements along consecutive edges. Since an edge $(xy)$ satisfies $({\Phi^{\rm Diag}}^{2p+1})_{xy \, xy }>0$, $p\in N^*$, iff $G$ contains a cycle with $2p+1$ vertices including $x$ and $y$, this yields the result.  
\end{proof}\qed

Consequently, the discrete asymptotic dynamics induced by $\Phi_{\cS}$ is essentially independent of the set of scattering matrices $\cS$ that parameterize $\Phi_{\cS}$:
\begin{cor}\label{asyPhin}
Let $\Phi_{\cS}$ be defined on $\cB(l^2(D))$ with $|V|<\infty$.  If $S_{zy}(x)\neq 0$, for all $x, y, z\in V$, $y\sim x, z\sim x$, then 
\be
\frac1N\sum_{n=0}^{N-1} \Phi_{\cS}^n(\cdot)=\frac{\un}{\sum_{x\in V}d_x}\tr(\cdot)+\ode(1/N).
\ee
If, furthermore, $G$ contains a cycle with an odd number of vertices,
$\exists \gamma>0$ such that
\be\label{expdecfinvol}
\Phi_{\cS}^n(\cdot)=\frac{\un}{\sum_{x\in V}d_x}\tr(\cdot)+ \ode(\e^{-\gamma n}).
\ee
\end{cor}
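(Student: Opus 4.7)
The plan is to reduce the dynamics of $\Phi_\cS$ to the Perron--Frobenius analysis of the finite-dimensional bistochastic map $\Phi^{\rm Diag}$ and then transport the conclusion back through the outer bounded maps. By Corollary \ref{cordynphi}, for every $n\geq 2$
\[
\Phi_\cS^n=\Phi_\cS\circ(\Phi^{\rm Diag})^{n-2}\circ {\rm Diag}\circ \Phi_\cS,
\]
so the asymptotics of $\Phi_\cS^n$ on $\cB(l^2(D))$ are entirely controlled by those of $(\Phi^{\rm Diag})^m$ on the finite-dimensional invariant subspace ${\rm Diag}(\cB(l^2(D)))$.

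Next I would identify the Perron eigendata of $\Phi^{\rm Diag}$. Unitality of both $\Phi_\cS$ and ${\rm Diag}$ implies $\Phi^{\rm Diag}(\un)=\un$, so the identity $\un=\sum_{x\sim y}|xy\ket\bra xy|\in {\rm Diag}(\cB(l^2(D)))$ is a right $1$-eigenvector; dually, bistochasticity (column sums equal to $1$) is exactly the statement that $\tr$ is a left $1$-eigenvector. Under the nonvanishing hypothesis on the entries of $\cS$, Proposition \ref{irredlem} gives irreducibility of $\Phi^{\rm Diag}$; Perron--Frobenius then makes the eigenvalue $1$ simple, with rank-one spectral projector
\[
P_1(A)=\frac{\tr(A)}{\sum_{x\in V}d_x}\un,\qquad A\in {\rm Diag}(\cB(l^2(D))),
\]
and the rest of the modulus-one spectrum consists of simple $p$-th roots of unity for the period $p$.

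For the Ces\`aro claim I would split
\[
\frac1N\sum_{n=0}^{N-1}\Phi_\cS^n=\ode(1/N)+\Phi_\cS\circ\Big(\frac1N\sum_{m=0}^{N-3}(\Phi^{\rm Diag})^{m}\Big)\circ {\rm Diag}\circ\Phi_\cS,
\]
and use the standard fact $\frac1N\sum_{m=0}^{N-3}(\Phi^{\rm Diag})^m=P_1+\ode(1/N)$, valid because $\Phi^{\rm Diag}$ is power bounded with only semi-simple unit-modulus eigenvalues: geometric sums of nontrivial roots of unity stay bounded, and the part with spectral radius $<1$ sums to a bounded Neumann series. Using trace preservation of ${\rm Diag}\circ\Phi_\cS$ and the unitality $\Phi_\cS(\un)=\un$, the middle expression collapses to $\tr(A)\un/\sum_{x\in V}d_x$ for all $A\in\cB(l^2(D))$, which is the claimed limit. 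Under the additional odd-cycle hypothesis Proposition \ref{irredlem} upgrades $\Phi^{\rm Diag}$ to aperiodic, so Perron--Frobenius provides a spectral gap and $(\Phi^{\rm Diag})^m=P_1+\ode(\e^{-\gamma m})$ for some $\gamma>0$; substituting this into the reduction above delivers \eqref{expdecfinvol}. The one step requiring genuine attention is the identification of the right and left $1$-eigenvectors in the second paragraph, since it is precisely this that guarantees that sandwiching $P_1$ between $\Phi_\cS$ and ${\rm Diag}\circ\Phi_\cS$ yields the advertised constant map on the full algebra $\cB(l^2(D))$ rather than only on its diagonal corner where $P_1$ a priori lives.
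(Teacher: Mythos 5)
Your proof is correct, but it takes a different route from the paper's. The paper works with the spectral decomposition of $\Phi_{\cS}$ itself: it uses Theorem \ref{multipli} to identify $\sigma(\Phi_{\cS})\setminus\{0\}$ with $\sigma(\Phi^{\rm Diag})\setminus\{0\}$, uses unitality of both $\Phi_{\cS}$ and $\Phi_{\cS}{}^\dagger$ together with irreducibility to conclude that the eigenvalue $1$ of $\Phi_{\cS}$ is simple with the HS-self-adjoint rank-one spectral projector $P_{\{1\}}(\cdot)=\un\,\tr(\cdot)/\sum_{x}d_x$, and then invokes the spectral (Kato) decomposition of $\Phi_{\cS}$ plus Ces\`aro averaging, respectively the spectral gap in the aperiodic case. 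You instead keep the entire Perron--Frobenius analysis on the finite bistochastic matrix $\Phi^{\rm Diag}$ and transport the limit projector $P_1$ back through the factorization of Corollary \ref{cordynphi}, collapsing $\Phi_{\cS}\circ P_1\circ{\rm Diag}\circ\Phi_{\cS}$ via trace preservation and unitality. Your route has the advantage of never touching the Jordan structure of the non-diagonalizable map $\Phi_{\cS}$ (semi-simplicity of its peripheral eigenvalues, the eigennilpotent at $0$), staying entirely within classical Perron--Frobenius for a stochastic matrix; the paper's route yields as a by-product the explicit identification of the invariant spectral projector of $\Phi_{\cS}$ itself as an orthogonal rank-one projector. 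Note that the paper explicitly acknowledges your style of argument in Remark (i) following the Corollary, as ``an alternative argument providing the asymptotic behaviour of $\Phi_{\cS}^n$ on the basis of the spectral properties of $\Phi^{\rm Diag}$'', sketched in Appendix C for the analogous Corollary \ref{corpsin}. You are also right to flag the sandwiching step as the one needing care: the identification of the left and right Perron eigenvectors of $\Phi^{\rm Diag}$ (via bistochasticity) is exactly what makes the sandwiched map the advertised constant map on all of $\cB(l^2(D))$.
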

\begin{proof}: 
Since $\Phi_{\cS}$ is a unital CPTP map, the identity $\un \in \cB(l^2(D)))$ is invariant under $\Phi_{\cS}$ and $\Phi_{\cS}{}^\dagger$. Moreover, the irreducibility of $\Phi^{\rm Diag}$ and Theorem \ref{multipli} imply that $\un$ spans the invariant space. Consequently, the rank one self-adjoint spectral projector of $\Phi_{\cS}$ associated with the eigenvalue one is self-adjoint with respect to the Hilbert Schmidt inner product (\ref{hsip}) and reads 
\be
P_{\{1\}}(\cdot)=\frac{\un}{\sum_{x\in V}d_x}\tr(\cdot).
\ee
If $\Phi^{\rm Diag}$ has period 2, $-1\in\sigma(\Phi_{\cS})$ is simple, and since all other eigenvalues have moduli strictly smaller than 1, the spectral decomposition of $\Phi_{\cS}$, see \cite{Ka}, and the Ces\`aro average yield the result. In case $\Phi^{\rm Diag}$ is aperiodic, $1$ is the only eigenvalue of modulus one, so that the spectral decomposition yields (\ref{expdecfinvol}), which ends the proof.
\end{proof}\qed

This corollary implies that the asymptotics in time of the (Ces\`aro mean) 
probability $\Q_n^{\rho_0}(x)$ to find the quantum walker in the subspace $\cH^{\rm I}_x$, see (\ref{prohix}), satisfies 
\be\label{asymprobvert}
\lim_{N\ra \infty}\frac1N\sum_{n=0}^{N-1}\Q_n^{\rho_0}(x) = \frac{d_x}{\sum_{y\in V}d_y}.
\ee
Of course, in case $\Phi_{\cS}^n(\cdot)$ converges exponentially fast to its invariant spectral projector, no Ces\`aro mean is necessary and the convergences are exponential as well.

\begin{rem}
i) For an alternative argument providing the asymptotic behaviour of $\Phi_{\cS}^n$ on the basis of the spectral properties of $\Phi^{\rm Diag}$, see Appendix C.\\
ii) The asymptotic CPTP map shows that if $G$ is infinite, the existence of a non trivial large times asymptotics  for ${\Phi^{Diag}}^n$ is not guaranteed. We consider such an infinite dimensional case at the end of the present Section.
\end{rem}

\subsection{Tree with three vertices}\label{treetv}

For illustration purposes, we consider the special case of $G=T_3$, the tree with three vertices, see figure  \ref{fig:T_3}. 
Note that this case coincides with the star-graph $SG$ with $N=2$, and central vertex $y$.

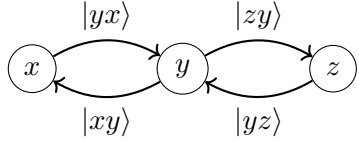
\begin{figure}[h]
\centering
\begin{tikzpicture}
  \node[circle, draw] (x) at (0,0) {$x$};
  \node[circle, draw] (y) at (2,0) {$y$};
  \node[circle, draw] (z) at (4,0) {$z$};

  \draw[->, thick] (x) to [bend left] [edge label=$|yx\rangle$](y);
\draw[->, thick] (y) to [bend left] [edge label=$|xy\rangle$](x);
  \draw[->, thick] (y) to [bend left] [edge label=$|zy\rangle$](z);
\draw[->, thick] (z) to [bend left] [edge label=$|yz\rangle$](y);
\end{tikzpicture}
\caption{\small The graph $G=T_3$ with basis vectors of $\cH_{T_3}$.}
  \label{fig:T_3}
\end{figure}

The degrees of the vertices are $d_x=d_z=1$ and $d_y=2$, so that $\dim l^2(D)=4$ and $\dim \cB(l^2(D))=16$.  The associated scattering matrices in $\cS=\{S(x), S(y), S(z)\}$ are
\begin{align}
S(x)&=e^{i\theta_x}\in U(1), \ \ S(z)=e^{i\theta_z}\in U(1)\nonumber\\
S(y)&=\begin{pmatrix}S_{xx}(y) & S_{xz}(y) \cr S_{zx}(y) & S_{zz}(y)\end{pmatrix}\in U(2)
\end{align}
according to our notation. The matrix representation of $U_\cS$ in the ordered basis 
\be
\{|yx\ket, \, |yz\ket, \, |xy\ket, \, |zy\ket \}
\ee
is given by, see (\ref{stargraphN}),
\begin{align}
U_{\cS}&=
\begin{pmatrix}
0&0&e^{i\theta_x}&0\\
0&0&0&e^{i\theta_z}\\
S_{xx}(y) & S_{xz}(y) &0&0\\
S_{zx}(y) & S_{zz}(y)&0&0
\end{pmatrix}.
\end{align}
Set $D=\diag(e^{i\theta_x},e^{i\theta_z})\in U(2)$. Then, {see \eqref{specUSG},} the spectrum of $U_\cS$ satisfies
\be
\sigma(U_\cS^2)=\sigma(DS(y))=\{e^{i\alpha_1},e^{i\alpha_2}\}, \ \ \textit{i.e.} \ \ \sigma(U_\cS)=\{\pm e^{i\alpha_1/2}, \pm e^{i\alpha_2/2}  \}.
\ee

To express $\Phi_{\cS}$, we observe that
\begin{align}
P_x^{\rm I}&=|xy\ket\bra xy|, \ \ P_z^{\rm I}=|zy\ket\bra zy| \  \ \text{and}\ \ 
P_y^{\rm I}=|yx\ket\bra yx|+ |yz\ket\bra yz|, \nonumber\\
P_x^{\rm O}&=|yx\ket\bra yx|, \ \ P_z^{\rm O}=|yz\ket\bra yz| \  \ \text{and}\ \ 
P_y^{\rm O}=|xy\ket\bra xy|+ |zy\ket\bra zy|,
\end{align}
so that 
\begin{align}
 \ran D^{\rm I}&=\spa \{|xy\ket\bra xy|, |zy\ket\bra zy|, |yx\ket\bra yx|,  |yz\ket\bra yz|, |yx\ket\bra yz|, |yz\ket\bra yx|\} \nonumber \\
\ran D^{\rm O}&=\spa \{|yx\ket\bra yx|, |yz\ket\bra yz|, |xy\ket\bra xy|,  |zy\ket\bra zy|, |xy\ket\bra zy|, |zy\ket\bra xy|\}.
\end{align}
Consequently, with the shorthand $D^{\rm \#}_\perp=\un - D^{\rm \#} $, $\#\in \{\rm O, I\}$,
\begin{align}
\ran {\rm Diag }&=\spa \{|xy\ket\bra xy|, |zy\ket\bra zy|, |yx\ket\bra yx|,  |yz\ket\bra yz|\}, \nonumber \\
\ran D^{\rm I }\circ D^{\rm O}_\perp&=\spa \{|yx\ket\bra yz|, |yz\ket\bra yx|\}\nonumber\\ 
\ran D^{\rm O }\circ D^{\rm I}_\perp&=\spa \{|xy\ket\bra zy|, |zy\ket\bra xy|\},
\end{align}
and we refrain from spelling out the eight basis vectors of $\ran D^{\rm O }_\perp\circ D^{\rm I}_\perp $.
Recall that 
$
\ker \Phi_{\cS}= \ran (\un - D^{\rm I})
$ has dimension 10.
We express $\Phi_{\cS}$ as a block matrix in the following ordered list of subspaces, with their respective ordered bases listed above
\be
\ran {\rm Diag } \ \  \ \ \ran D^{\rm I }\circ D^{\rm O}_\perp \ \  \ \ \ran D^{\rm O }\circ D^{\rm I}_\perp \ \  \ \ \ran D^{\rm O }_\perp\circ D^{\rm I}_\perp.
\ee
Below, the symbol ${\bf 0}\in \C^8$ denotes a vector of  zeros and the vertical and horizontal lines delimitate blocs with respect to the projectors $D^{\rm I}$ and $D^{\rm I}_\perp$
\begin{align}
&\Phi_{\cS}=\\
&
\begin{pmatrix}
0 & 0 & |S_{xx}(y)|^2 & |S_{xz}(y)|^2 &  S_{xx}(y)\overline{S_{xz}(y)} & S_{xz}(y)\overline{S_{xx}(y)}&\vline  & 0 & 0  & {\bf 0}^T \\
0 & 0 & |S_{zx}(y)|^2 & |S_{zz}(y)|^2 & S_{zx}(y)\overline{S_{zz}(y)} & S_{zz}(y)\overline{S_{zx}(y)}& \vline&0 & 0&{\bf 0}^T\\
1 & 0 & 0 & 0 &0 & 0&\vline &0 & 0 &{\bf 0}^T\\
0 & 1 & 0 & 0 &0 & 0&\vline &0 & 0 &{\bf 0}^T\\
0&0&0&0&0&0&\vline&0 & 0 &{\bf 0}^T\\
0&0&0&0&0&0&\vline&0 & 0 &{\bf 0}^T  \\
\hline
   \hspace{-.13cm}\phantom{\Big|}    0 & 0 & S_{xx}(y)\overline{S_{zx}(y)} &  \overline{S_{zz}(y)}S_{xz}(y) &S_{xx}(y)\overline{S_{zz}(y)}  &  S_{xz}(y)\overline{S_{zx}(y)}&\vline&0 & 0 &{\bf 0}^T\\ 
        0 & 0 & \overline{S_{xx}(y)}S_{zx}(y) & S_{zz}(y)\overline{S_{xz}(y)} & S_{zx}(y)\overline{S_{xz}(y)}  & S_{zz}(y)\overline{S_{xx}(y)} &\vline&0 & 0 &{\bf 0}^T
\\
{\bf 0}&{\bf 0}&{\bf 0}&{\bf 0}&{\bf 0}&{\bf 0}&\vline&{\bf 0} &{\bf 0} &{\bf 0}\, {\bf 0}^T
\end{pmatrix}.
\nonumber
\end{align}
We observe that the Hilbert-Schmidt norm of each nonzero column equals one, as it should. 
The four by four upper left corner of the matrix corresponds to the bistochastic matrix representation of $\Phi^{\rm Diag}$ in the chosen basis of $\ran {\rm Diag }$
\be
\label{bistoch}
\Phi^{\rm Diag}=\begin{pmatrix}
0 & 0 & |S_{xx}(y)|^2 & |S_{xz}(y)|^2 \\
0 & 0 & |S_{zx}(y)|^2 & |S_{zz}(y)|^2 \\
1 & 0 & 0 & 0 \\
0 & 1 & 0 & 0 
\end{pmatrix}.
\ee
The spectral data of $\Phi^{\rm Diag}$ as a function of the scattering matrix $S(y)$ are readily determined. Without going into details, we have 
\begin{itemize}
\item If the matrix elements of $S(y)$ are all nonzero, $\Phi^{\rm Diag}$ is irreducible with period 2. For $S(y)=\frac{1}{\sqrt 2}\begin{pmatrix}1 & 1 \cr -1 & 1\end{pmatrix}$, the Hadamard matrix, $\Phi^{\rm Diag}$ is not diagonalisable, has simple spectrum $\sigma(\Phi^{\rm Diag})=\{1, -1, 0\}$, with an eigennilpotent associated with the spectral projector on its kernel.
\item If $S(y)=\un$, $\Phi^{\rm Diag}$ is reducible with spectrum  $\sigma(\Phi^{\rm Diag})=\{1, -1\}$, each eigenvalue being of multiplicity 2. 
\item For  $S(y)=\begin{pmatrix} 0 & 1\cr 1 & 0\end{pmatrix}$, $\Phi^{\rm Diag}$ is irreducible with period 4, so that $\sigma(\Phi^{\rm Diag})=\{1, -1, i, -i\}$.
\end{itemize}

\subsection{The graph $\Z$ with Hadamard Scattering Matrices}

We consider the infinite graph $G=\Z$ with $S(x)=\frac{1}{\sqrt{2}}\begin{pmatrix} 1 & 1 \\ -1 & 1\end{pmatrix}\in U(2)$, the Hadamard matrix for each $x$, to illustrate the difference in the asymptotics given in Corollary \ref{asyPhin} for finite graphs.  Hence, the nonzero matrix elements of the infinite stochastic matrix  ${\Phi^{\rm Diag}}$, see \eqref{matelphid}, are all equal to $1/2$. 

To analyze ${\Phi^{\rm Diag}}$, we label the vertices of $G$ by $x\in \Z$, and we find it convenient to denote the vectors corresponding to incoming edges at $x$ from the left as $|2x-1\ket$, and those associated with outgoing edges from $x$ to the left as $|2x\ket$, see Figure \ref{figZ2H}.

\begin{figure}[h]
\centering
\begin{tikzpicture}
    \foreach \x in {-3, -2, -1, 0, 1, 2, 3} {
        \node[circle, fill=white, draw, minimum size=0.8cm, inner sep=0pt] (\x) at (1.5*\x, 0)
        {\textcolor{black}{\x}};
    }
    
    \foreach \x in {-3, -2, -1, 0, 1, 2} {
        \pgfmathtruncatemacro{\doublex}{2*\x+1}
        \pgfmathtruncatemacro{\doublexplusone}{2*(\x+1)}
        
        \draw[->, line width=0.3mm] (\x) .. controls +(0.5, 0.5) and +(-0.5, 0.5) .. (\the\numexpr\x+1\relax) 
        node[midway, above] {$|\doublex\rangle$};
        
        \draw[<-, line width=0.3mm] (\x) .. controls +(0.5, -0.5) and +(-0.5, -0.5) .. (\the\numexpr\x+1\relax) 
        node[midway, below] {$|\doublexplusone\rangle$};
    }
       
    \node at (-5.5, 0) {\(\cdots\)};
    \node at (5.5, 0) {\(\cdots\)};
\end{tikzpicture}
\caption{\small The graph $G=\Z$ with incoming and outgoing edges at the vertices. }
\label{figZ2H}
\end{figure}
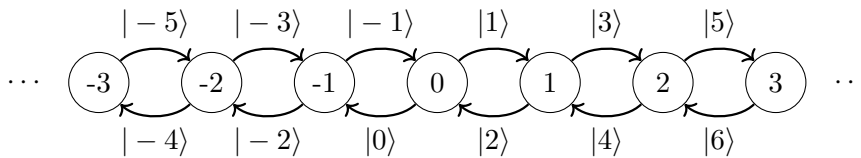

We further write the basis vectors of ${\rm Diag} \cT(l^2(D))$ as $|x\ket\bra x |=e_x$, $x\in \Z$, so that  
${\rm Diag} \cT(l^2(D))\simeq l^1(\Z)$, with ordered basis 
\be\label{obl1}
\{\dots, e_{-2}, e_{-1}, e_{0}, e_1, e_2, \dots \}.
\ee
Accordingly, the matrix elements of $\Phi^{\rm Diag}$ in the ordered basis \eqref{obl1} are denoted by $\Phi^{\rm Diag}_{x y}$. More precisely, we have
\begin{align}\label{phidmat}
\Phi^{\rm Diag}(e_{2x})&=\frac12\big( e_{2(x-1)}+e_{2(x-1)+1}  \big),\nonumber\\
\Phi^{\rm Diag}(e_{2x+1})&=\frac12\big( e_{2(x+1)}+e_{2(x+1)+1}  \big).
\end{align}
This shows that $\Phi^{\rm Diag}$ is an irreducible stochastic matrix of period $2$, looking at the graph $\Gamma$ associated with $\Phi^{\rm Diag}$  in Figure \ref{figgamma}; see also the proof of Proposition \ref{irredlem}.

\begin{figure}[h]
\centering
\begin{tikzpicture}
    \foreach \x in {-3, -2, -1, 0, 1, 2, 3} {
        \node[circle, fill=white, draw, minimum size=0.8cm, inner sep=0pt] (\x) at (1.5*\x, 0)
        {\textcolor{black}{$e_{\x}$}};
    }

    \foreach \x in {-1, 0, 1} {
        \pgfmathtruncatemacro{\evenx}{2*\x}
        \pgfmathtruncatemacro{\oddx}{2*\x+1}

        \ifnum \evenx>-3
            \draw[->, line width=0.3mm] (\evenx) .. controls +(-0.5, -0.3) and +(0.5, -0.3) .. (\the\numexpr\evenx-1\relax);
        \fi
        \ifnum \evenx>-2
            \draw[->, line width=0.3mm] (\evenx) .. controls +(-0.5, -0.7) and +(0.5, -0.7) .. (\the\numexpr\evenx-2\relax);
        \fi

        \ifnum \oddx<3
            \draw[->, line width=0.3mm] (\oddx) .. controls +(0.5, 0.3) and +(-0.5, 0.3) .. (\the\numexpr\oddx+1\relax);
        \fi
        \ifnum \oddx<2
            \draw[->, line width=0.3mm] (\oddx) .. controls +(0.5, 0.7) and +(-0.5, 0.7) .. (\the\numexpr\oddx+2\relax);
        \fi
    }

    \draw[->, line width=0.3mm] (-3) .. controls +(0.5, 0.3) and +(-0.5, 0.3) .. (-2); 
    \draw[->, line width=0.3mm] (-3) .. controls +(0.5, 0.7) and +(-0.5, 0.7) .. (-1); 

    \draw[->, line width=0.3mm] (-2) .. controls +(-0.5, -0.7) and +(0.5, -0.7) .. (-5.9, -0.3); 

    \node at (-5.5, 0) {\(\cdots\)};
    \node at (5.5, 0) {\(\cdots\)};
\end{tikzpicture}
\caption{\small The graph $\Gamma$ with the specified incoming and outgoing edges.}
\label{figgamma}
\end{figure}
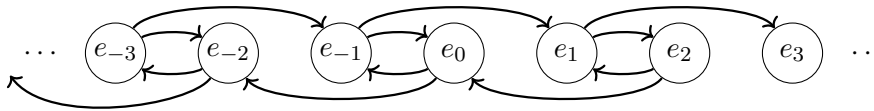

As a simple computation reveals, ${\Phi^{\rm Diag}}$ admits no invariant vector in $l^1(\Z)$. 
We prove the following property of its matrix elements in Appendix B:
\begin{lem}\label{lemwoes}
There exists $c<\infty$ such that for any $x, y\in \Z$, { there is an $N_0(x-y)\in \N$ depending on $x-y$, so that }
\be
0\leq {({\Phi^{\rm Diag})}^{n}}_{x y}\leq c/\sqrt{n} , \ \ \text{for all } \ \ n\geq N_0(x-y).
\ee 
\end{lem}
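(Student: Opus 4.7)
The plan is to identify the Markov chain associated with the column-stochastic transition matrix $\Phi^{\rm Diag}$ as a projective lift of the standard simple random walk (SRW) on $\Z$, and then to invoke the classical uniform Stirling bound on SRW transition probabilities.

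Concretely, let $(X_n)_{n\geq 0}$ be the chain on $\Z$ with ${\Phi^{\rm Diag}}_{xy} = \P(X_{n+1}=x \mid X_n = y)$, and define the projection $g : \Z \to \Z$ by $g(2x) = g(2x+1) = x$. From \eqref{phidmat}, the two successors of an even state $2x$ are $\{2x-2, 2x-1\}$, both mapping to $x-1$ under $g$, while the two successors of an odd state $2x+1$ are $\{2x+2, 2x+3\}$, both mapping to $x+1$. Hence the projected process $Z_n := g(X_n)$ satisfies $Z_{n+1} - Z_n = -1$ when $X_n$ is even and $+1$ when $X_n$ is odd. Moreover, the two successors of any state always have opposite parities with equal weight $1/2$, so $p_{n+1} := X_{n+1} \bmod 2$ is Bernoulli$(1/2)$ independently of $X_n$. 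Iterating, $(p_n)_{n\geq 1}$ is an i.i.d.\ Bernoulli$(1/2)$ sequence, independent of $p_0$, the parity fixed by the initial condition $X_0 = y$. Setting $\epsilon_k := 2p_k - 1 \in \{\pm 1\}$, this yields
\begin{equation*}
Z_n - Z_0 \;=\; \epsilon_0(y) + \sum_{k=1}^{n-1}\epsilon_k \;=\; \epsilon_0(y) + S_{n-1},
\end{equation*}
where $S_{n-1}$ is a standard SRW of length $n-1$ on $\Z$ and $\epsilon_0(y) \in \{\pm 1\}$ is deterministic.

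Since $X_n = 2 Z_n + p_n$ and $p_n$ is independent of $Z_n$ (which is measurable with respect to $p_0, \dots, p_{n-1}$), one obtains the factorization
\begin{equation*}
\big({\Phi^{\rm Diag}}^{n}\big)_{xy} \;=\; \P(X_n = x \mid X_0 = y) \;=\; \tfrac{1}{2}\, \P\!\big(S_{n-1} = g(x) - g(y) - \epsilon_0(y)\big).
\end{equation*}
The classical uniform bound $\P(S_m = \ell) \leq \binom{m}{\lfloor m/2\rfloor}2^{-m} = \ode(1/\sqrt{m})$, following from Stirling's formula, then yields the claim with some absolute constant $c$, valid as soon as $n$ is large enough that $|g(x)-g(y)-\epsilon_0(y)| \leq n-1$, which defines $N_0(x-y)$ up to an additive constant; for smaller $n$ the matrix element is trivially zero and thus also dominated by $c/\sqrt{n}$.

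The argument has no essential obstacle; the only mildly delicate point is the independence of $Z_n$ and $p_n$, which rests on the uniform-parity observation (conditionally on $X_n$, the parity of $X_{n+1}$ is uniform) combined with the explicit measurability of $Z_n$ with respect to the earlier parities $(p_0, \dots, p_{n-1})$.
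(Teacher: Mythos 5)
Your proof is correct, and it arrives at exactly the same closed form for the matrix elements as the paper — namely that $({\Phi^{\rm Diag}}^{n})_{xy}$ equals $2^{-n}$ times a binomial coefficient $\binom{n-1}{\cdot}$, i.e.\ one half of an $(n-1)$-step simple random walk probability — before concluding with the same Stirling bound on the central binomial coefficient. The route to that formula is, however, packaged differently. The paper works algebraically: it observes the identity $\Phi^{\rm Diag}(e_{2x+1})=\Phi^{\rm Diag}(e_{2(x+2)})$, which lets it restrict attention to even-indexed basis vectors, and then proves ${\Phi^{\rm Diag}}^n(e_{2x})=2^{-(n-1)}\sum_{k=0}^{n-1}\binom{n-1}{k}\Phi^{\rm Diag}(e_{2(x-(n-1)+2k)})$ by induction, reading off the matrix elements at the end. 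You instead exhibit the chain structurally as a lift of the simple random walk: the projection $g$ collapses each pair $\{2x,2x+1\}$, the increment of $Z_n=g(X_n)$ is determined by the parity $p_n$ of $X_n$, and the parities $(p_n)_{n\ge1}$ are i.i.d.\ Bernoulli$(1/2)$ and decoupled from $Z_n$, giving the factorization $({\Phi^{\rm Diag}}^n)_{xy}=\tfrac12\,\P(S_{n-1}=g(x)-g(y)-\epsilon_0(y))$. Your key independence claim is justified: $X_{n+1}$ is a deterministic function of $(X_n,p_{n+1})$ with $p_{n+1}$ uniform given $X_n$, so $Z_n$ is measurable with respect to $(p_0,\dots,p_{n-1})$ and independent of $p_n$. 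I checked that your factorization reproduces the paper's explicit matrix elements in all four parity cases (e.g.\ $({\Phi^{\rm Diag}}^n)_{2y\,2x}=2^{-n}\binom{n-1}{(n+y-x)/2}$). What your version buys is a conceptual explanation of \emph{why} the binomial structure appears — $\Phi^{\rm Diag}$ is literally a two-to-one cover of the SRW with a free parity bit — at the cost of having to argue the parity decoupling carefully; the paper's induction is more pedestrian but entirely mechanical. The only cosmetic point is that your cutoff is naturally expressed in terms of $g(x)-g(y)-\epsilon_0(y)$ rather than $x-y$, but since this quantity differs from $(x-y)/2$ by a bounded amount, it is dominated by an $N_0$ depending only on $x-y$ as the lemma states.
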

As a consequence, the map provided in Corollary \ref{cordynphi}
\be
\Phi_{\cS}^n= \Phi_{\cS}\circ  {(\Phi^{\rm Diag})}^{n-2}\circ  {\rm Diag}\circ \Phi_{\cS}, 
\ee
converges to zero in the weak sense. More precisely, $\forall (xy), (x'y'), (zt), (z't')\in D$, 
\be
\lim_{n\ra\infty}\tr (|x'y'\ket\bra z't'|\, \Phi_{\cS}^n(|xy\ket\bra zt|))=0,
\ee
since the matrix elements of $\Phi_{\cS}$ connect $P_x^{\rm I}$ and $P_x^{\rm O}$, so that the argument of the trace is a sum of finitely many matrix elements of $ {(\Phi^{\rm Diag})}^{n-2}$. Hence, for infinite graphs, the dynamics can send the state at infinity, preventing the existence of an asymptotic state.

\section{Induced Open Quantum Walks on $G$}\label{sec:IOQW}

The open QW $\Phi_{\cS}\in \cT(l^2(D))$ defined above lives on the directed edges of $G$. We provide here an open QW $ \Psi_{\cS}\in \cT(l^2(V))$ defined via  $\Phi_{\cS}$ which lives on the vertices of $G$, by reducing each incoming subspaces $\cH_x^{\rm I}$, $x\in V$ to the vectors $|x\ket\in l^2(V)$ associated with the vertex $x\in V$.  To do so, we make use of the natural boundary operator $R$ and its adjoint $R^*$  between $l^2(D)$ to $l^2(V)$, see (\ref{defrrs}), generalizing the approach \cite{HKSS2} of the unitary Grover Walk to open QW.

\medskip

We recall the expressions of $R$ and $R^*$ given by 
\begin{align}\label{boundaryop}
R=\sum_{x\in V}|x\ket\bra \omega(x)|:l^2(D)\ra l^2(V), \ \ R^*=\sum_{x\in V}|\omega(x)\ket\bra x|:l^2(V)\ra l^2(D), 
\end{align}
where $\{\omega(x)\}_{x\in V}$ is a family of normalized vectors of $l^2(D)$ indexed by $x\in V$ such that
\be\label{defomegax}
\omega(x)=\sum_{y\sim x}\omega_y(x)|xy\ket\in P_x^{\rm I}l^2(D), \ \text{with} \ \bra \omega(x) |\omega({x'})\ket =\delta_{x x'}, \ \forall x, x' \in V.
\ee
Then we consider the map
\be
\cR(\cdot)=R^*\cdot R : \cT(l^2(V))\ra \cT(l^2(D))
\ee
which is CPTP thanks to \eqref{idv}. Note that for $G$ infinite, the series \eqref{idv} for $RR^*$ converges in the strong sense. Also, the series \eqref{idpi} for $R^*R$ converges strongly so that the following map is well defined 
\be
\cR^\dagger(\cdot)=R\cdot R^* : \cT(l^2(D))\ra \cT(l^2(V))
\ee
and can be extended as a map on $\cB(l^2(D))$, which coincides with  $\cR^*$, the adjoint  of $\cR$, and is CP and unital. 
We now set
\be
\tilde \Phi_{\cS}=\cR^\dagger\circ  \Phi_{\cS} \circ \cR :  \cT(l^2(V))\ra \cT(l^2(V)),
\ee
which in terms of the expression (\ref{defphi}) for $\Phi_{\cS}$ reads
\be\label{tildefphi}
\tilde \Phi_{\cS}(B)=\sum_{x\in V}RK(x)R^* B R K^*(x) R^*, \ \ \forall B\in \cT(l^2(V)).
\ee
To assess the properties of $\tilde \Phi_{\cS}$, we compute 
\begin{align}
RK(x)R^*=\sum_{x',x''\in V}|x'\ket\bra \omega({x'})|K(x) \omega({x''})\ket\bra x''|,
\end{align}
where
\begin{align}
\bra \omega({x'})|K(x)|\omega({x''})\ket&=\sum_{y\sim x \atop z\sim x}S_{zy}(x)\bra \omega({x'})|zx\ket\bra xy|\omega({x''})\ket \nonumber \\
&=\sum_{y\sim x \atop z\sim x}{S_{zy}(x)}  \overline{\omega_x(z)}  {\omega_y(x)} \delta_{x' z}\delta_{x x''},
\end{align}
so that
\be
RK(x)R^*=\sum_{y\sim x \atop z\sim x}{S_{zy}(x)} \overline{\omega_x(z)}  {\omega_y(x)}|z\ket\bra x|.
\ee
This leads us to introduce
\be\label{newvec}
v(x)=\sum_{z\sim x}v_z(x)|z\ket \in l^2(V), \ \text{with} \ v_z(x)=(S(x)\omega(x))_z=\sum_{y\sim x} S_{zy}(x)\omega_y(x),
\ee
where  we view $\omega(x)$ and $S(x)\omega(x)$ as vectors in $\C^{d_x}$ above, and
\be\label{vecthet}
\theta(x)=\sum_{z\sim x}\overline{\omega_x(z)}v_z(x)|z\ket\in  l^2(V).
\ee
Note that $v(x)$ and $\theta(x)$  have finitely many nonzero components and since $S(x)$ is unitary and $\|\omega(x)\|=1$, we have
\be\label{normthet}
\|v(x)\|^2=1, \ \ \|\theta(x)\|^2=\sum_{z\sim x}|\omega_x(z)|^2|v_z(x)|^2\leq \sum_{z\sim x}|v_z(x)|^2=1.
\ee
Thus, the Kraus operators associated with the CP map $\tilde \Phi_{\cS}$ have rank one 
\be\label{raone}
RK(x)R^*=|\theta(x)\ket\bra x|, \ \ RK^*(x)R^*=|x\ket\bra \theta(x)|.
\ee
From the computations above we deduce that
\be\label{quin}
\tilde \Phi_\cS^*(\un_V)=\sum_{x\in V}RK^*(x)R^*RK(x)R^*=\sum_{x\in V} |x\ket\bra x| \ \|\theta(x)\|^2\leq \un_{V}.
\ee
{This makes $\tilde \Phi_{\cS}$ a quantum operation - {\it i.e.} a completely positive trace-nonincreasing map satisfying (\ref{quin})-} rather than a CPTP map. 
Incidentally, we get from (\ref{normthet}) that $\tilde \Phi_{\cS}$ is a CPTP map iff $|\omega_x(z)|^2=1$ for all $x\in V$ and all $z\sim x$, which is possible only if $|V|=2$.

\medskip

To cure this defect, we modify the vector $\theta(x)$, observing $\bra \theta(x)|x\ket=0$: set
\be\label{defchi}
\chi(x)=\theta(x)+e^{i\beta_x}\sqrt{1-\|\theta(x)\|^2}|x\ket, \ \text{where} \ \beta_x\in \R,
\ee
and, recall (\ref{raone}), we define rank one operators on $l^2(V)$ 
\be\label{rankonekraus}
\cG(x)=|\chi(x)\ket\bra x|, \ \ \cG^*(x)=|x\ket\bra \chi(x)|.
\ee
\begin{definition}
For each $x\in V$,  let $\theta(x)$,  $\chi(x)$ and $\cG(x)$ be given by  (\ref{vecthet}), (\ref{defchi}) and (\ref{rankonekraus}). We define the map 
$\Psi_{\cS}(\cdot): \cT(l^2(V))\ra \cT(l^2(V))$ by 
\be\label{definpsi}
\Psi_{\cS}(\cdot)=\sum_{x\in V}  \cG(x) \cdot \cG^*(x).
\ee
\end{definition}

\medskip

{The normalization performed by the replacement of $\theta(x)$ by $\chi(x)$ is a choice. See also Remark \ref{63} iv) below.}
To assess the properties of $\Psi_{\cS}$, it is useful to introduce a  stochastic matrix  $P\in M_{|V|}(\C)$ by its elements, where $|V|=\infty$  if $G$ is infinite.
\be\label{stomatP}
P_{x y}= |\bra y|\chi(x)\ket|^2\geq 0, \ \text{for} \ x, y\in V.
\ee
Note that $P_{x y}=0$ if $x\not \sim y$ and $x\neq y$,  which makes $P$ sparse. The matrix
$P$ is the transition matrix of a discrete time Markov chain $M$ on $V$, $M:\N \ra V$ such that at time $t\in \N$ and for $x, y\in V$,
\be
\P(M(t)=y|M(t-1)=x)= |\bra y|\chi(x)\ket|^2=P_{xy}.
\ee 
A probability vector on $V$ is a row vector of the form $p=(p_x)_{x\in V}$, {\it i.e.}, given an ordering of the vertices,
\be
p=(p_{x_1}, p_{x_2}, \dots, p_{x_{|V|}}),
\ee with $p_x\geq 0$ and $\sum_{x\in V}p_x=1$. The support of $p$ is the set $\{x\in V \ \text{s.t.} \ p_x>0\}$, and $p$ is strictly positive if its support is $V$. For a {set} of vertices 
$S\subset V$, we denote  the probability of this set by $p(S)=\sum_{x\in S}p_x$.\\
Given an initial probability  vector  $p$ we have for all $t\in \N$, 
\be\label{procm}
\P(M(0)=x)=p_x \ , \forall x\in V, \ \Rightarrow \ \P_p(M(t)=x)=(pP^t)_x,
\ee 
where the subscript $p$ indicates the initial distribution of the Markov chain $M$, see  \cite{N, KS, Wo} for example.

\begin{thm}\label{dynpsiV}
The map $\Psi_{\cS}$ defined in (\ref{definpsi}) is a CPTP map on $ \cT(l^2(V))$.\\
With $P$ given by (\ref{stomatP}), we have for all $n\geq 1$ 
\be\label{evolPP}
\Psi_{\cS}^n(\cdot)=\sum_{x, y\in V} |\chi(y)\ket\bra x| \, \cdot \,  |x\ket\bra \chi(y)| \, {(P^{n-1})}_{x y}.
\ee
Moreover, for any initial state $\rho_0\in \cD\cM(l^2(V))$, the probability to find the quantum walker on the vertex  $x\in V$ at time $n\geq 1$, $\Q_n^{\rho_0}(x)$, reads
\be
\Q_n^{\rho_0}(x)=(r_0 P^{n})_x,
\ee  
where $r_0=(\bra x_1| \rho_0 x_1\ket, \bra x_2| \rho_0 x_2\ket, \dots , \bra x_{|V|}| \rho_0 x_{|V|}\ket)$.
\end{thm}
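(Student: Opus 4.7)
The plan is to handle the three assertions in turn, with induction doing the main work.

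First, to show $\Psi_{\cS}$ is CPTP, I would note that (\ref{definpsi}) is in Kraus form with operators $\cG(x) = |\chi(x)\ket\bra x|$, which immediately gives complete positivity. For the trace-preservation condition $\sum_{x\in V}\cG^*(x)\cG(x)=\un_V$, compute
\be
\cG^*(x)\cG(x)=|x\ket\bra\chi(x)|\chi(x)\ket\bra x|=\|\chi(x)\|^2 |x\ket\bra x|.
\ee
The key fact is that $\theta(x)$, as given by (\ref{vecthet}), is supported on the neighbours $z\sim x$, hence $\bra\theta(x)|x\ket=0$ since $G$ has no loops; combined with (\ref{defchi}) this yields $\|\chi(x)\|^2=\|\theta(x)\|^2+(1-\|\theta(x)\|^2)=1$. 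Summing over $x$ using $\sum_{x\in V}|x\ket\bra x|=\un_V$ (in the strong sense if $|V|=\infty$) gives the trace preservation. The strong convergence of the Kraus sum follows exactly as in Remark \ref{remconvu} and Lemma \ref{propK}.

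Next, to establish (\ref{evolPP}), I would proceed by induction on $n\geq 1$. The base case $n=1$ is direct from (\ref{definpsi}) since $P^0_{xy}=\delta_{xy}$ collapses the double sum to $\sum_x |\chi(x)\ket\bra x|\cdot|x\ket\bra\chi(x)|$. For the inductive step, the essential observation is the identity
\be
\bra z|\chi(y)\ket\bra\chi(y)|z\ket=|\bra z|\chi(y)\ket|^2=P_{yz},
\ee
by definition (\ref{stomatP}). Assuming the formula at step $n$ and applying $\Psi_{\cS}$ once more, the computation
\begin{align}
\Psi_{\cS}^{n+1}(A)&=\sum_{z\in V}|\chi(z)\ket\bra z|\Big(\sum_{x,y\in V}|\chi(y)\ket\bra x|A|x\ket\bra\chi(y)|P^{n-1}_{xy}\Big)|z\ket\bra\chi(z)|\nonumber\\
&=\sum_{x,z\in V}|\chi(z)\ket\bra x|A|x\ket\bra\chi(z)|\Big(\sum_{y\in V}P^{n-1}_{xy}P_{yz}\Big)\nonumber\\
&=\sum_{x,z\in V}|\chi(z)\ket\bra x|A|x\ket\bra\chi(z)|P^n_{xz}
\end{align}
closes the induction via the matrix product for $P^n$.

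Finally, for the position distribution, I would start from $\Q_n^{\rho_0}(x)=\bra x|\Psi_{\cS}^n(\rho_0)|x\ket$, substitute (\ref{evolPP}), and use $|\bra x|\chi(y)\ket|^2=P_{yx}$ again:
\be
\Q_n^{\rho_0}(x)=\sum_{x',y\in V}|\bra x|\chi(y)\ket|^2\bra x'|\rho_0|x'\ket P^{n-1}_{x'y}=\sum_{x'\in V}(r_0)_{x'}\sum_{y\in V}P^{n-1}_{x'y}P_{yx}=(r_0 P^n)_x,
\ee
as desired. I expect no serious obstacle here; the only subtle point worth double-checking is the rearrangement of sums when $|V|=\infty$, which is justified by positivity of all terms (monotone/Tonelli) and by the strong convergence already noted for the Kraus sums. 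The crucial structural input throughout is the rank-one form $\cG(x)=|\chi(x)\ket\bra x|$, which reduces all compositions to inner products $\bra y|\chi(x)\ket$ that by design reproduce the Markov kernel $P$.
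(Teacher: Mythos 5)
Your proposal is correct and follows essentially the same route as the paper: both rest on the rank-one Kraus form $\cG(x)=|\chi(x)\ket\bra x|$, the normalization $\|\chi(x)\|^2=1$ (which the paper also derives from $\bra\theta(x)|x\ket=0$), and the key identity $|\bra y|\chi(x)\ket|^2=P_{xy}$. The only cosmetic difference is that you prove \eqref{evolPP} by induction on $n$, whereas the paper expands $\Psi_{\cS}^n$ directly as a sum over paths $(x_1,\dots,x_n)\in V^n$ and reads off the product of transition probabilities as ${P^{n-1}}_{x_1 x_n}$ — the same computation organized differently.
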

\begin{rem}\label{63} i) For $A\in \cT(l^2(V))$, the sum over $x, y\in V$ defining $\Psi_{\cS}^n(A)$ converges in trace norm.\\
ii) For $\rho_0\in \cD\cM(l^2(V))$, $n\geq 1$, we can write
\begin{align}
\Psi_{\cS}^n(\rho_0)
&=\sum_{y\in V} |\chi(y)\ket\bra \chi(y)| \, \P_{r_0}(M(n-1)=y),
\end{align}
the expectation of the matrix valued random variable $ |\chi(\cdot)\ket\bra \chi(\cdot)|$ on $V$ with respect to the law of the Markov chain $M$ at time $n-1$.\\
iii) Similarly, the probability of presence of the quantum walker at $x\in V$ at time $n$ is given by the law of the Markov chain $M$:
\be
\Q_n^{\rho_0}(x)=\P_{r_0}(M(n)=x).
\ee 
iv) There are other possibilities to modify the Kraus operators in order to promote $\tilde \Phi_\cS$ to a $CPTP$ map. 
A well-known modification consists in adding the extra Kraus operator $(\un_V-\Phi_\cS^\dagger(\un_V))^{1/2}\geq 0$ to $\{RK(x)R^*\}_{x\in V}$. However, since
\be
(\un_V-\Phi_\cS^\dagger(\un_V))^{1/2}=\sum_{x\in V }(1-\|\theta(x)\|^2)^{1/2}|x\ket\bra x|
\ee 
has a priori a large rank, this supplementary Kraus operator  gives the corresponding CPTP map less structure than $\Psi_\cS$ has.
\end{rem}
\proof:
By construction,
$\|\chi(x)\|^2=1$
for all $x\in V$, so that the Kraus operators defining $\Psi_{\cS}$ satisfy
\be
\sum_{x\in V}\cG^*(x)\cG(x)=\sum_{x\in V}|x\ket\bra x|=\un_V,
\ee
with convergence in the strong sense if $G$ is infinite, which ensure $\Psi_{\cS}$ is CPTP.
Then, for any $n\in \N$ we compute
\begin{align}\label{psin}
\Psi_{\cS}^n(\cdot)=\sum_{x_1, x_2, \dots, x_n\in V} |\chi(x_n)\ket\bra x_1| \, \cdot \, & |x_1\ket\bra \chi(x_n)| \times \\
& |\bra x_2|\chi(x_1)\ket|^2 |\bra x_3|\chi(x_2)\ket|^2\dots  |\bra x_n|\chi(x_{n-1})\ket|^2, \nonumber
\end{align}
which, with   (\ref{stomatP}), implies \eqref{evolPP} immediately.
Finally, 
\begin{align}\label{expqnrho}
\Q_n^{\rho_0}(x)&=\bra x| \Psi_{\cS}^n(\rho_0)x\ket\nonumber \\
&= \sum_{x', y'\in V} \bra x|\chi(y')\ket\bra x'| \rho_0  x'\ket\bra \chi(y')|x\ket  \, {(P^{n-1})}_{x' y'}
\nonumber \\
&= \sum_{x', y'\in V} \bra x'| \rho_0  x'\ket {(P^{n-1})}_{x' y'}P_{y' x}=(r_0 P^{n})_x,
\end{align}
where
\be\label{rorho}
r_0=(\bra x_1| \rho_0 x_1\ket, \bra x_2| \rho_0 x_2\ket, \dots , \bra x_{|V|}| \rho_0 x_{|V|}\ket)
\ee 
which, with \eqref{procm}, ends the proof. 
\qed\\

\medskip 

We mainly consider $G$ finite in the rest of this section, and will comment along the way on the infinite graph case.

\medskip 

In this case, we can deduce the large $n$ behaviour of the  map $\Psi_{\cS}^n(\cdot)$ from the properties of the stochastic matrix $P$.
\begin{cor}\label{corpsin}
Let $G$ be finite. If $P$ is irreducible, there exists a strictly positive probability vector $\pi=(\pi_{x_1}, \pi_{x_2}, \cdots, \pi_{x_{|V|}})$ such that with ${\bf 1}^T=(1,1\cdots, 1)$ and for $N$ large enough
$\frac1N\sum_{n=0}^{N-1}P^n={\bf 1} \pi+O(N^{-1})$
and 
\be
\frac1N\sum_{n=0}^{N-1}\Psi_{\cS}^n(\cdot)=\sum_{y\in V} |\chi(y)\ket\bra \chi(y)| \,\pi_{y} \, \tr(\cdot)+O(N^{-1}).
\ee
If $P$ is irreducible and moreover aperiodic, there exists 
$\gamma>0$ such that for $n$ large, $P^n={\bf 1} \pi+O(e^{-\gamma n})$ and 
\be
\Psi_{\cS}^n(\cdot)=\sum_{y\in V} |\chi(y)\ket\bra \chi(y)| \,\pi_{y} \, \tr(\cdot) + O(e^{-\gamma n}).
\ee
Moreover, for any initial state $\rho_0\in \cD\cM(l^2(V))$, $\Q_n^{\rho_0}(x)$, the probability to find the quantum walker on the vertex  $x\in V$ at time $n$,  satisfies for $P$ irreducible
\be\label{CesaroQrn}
\frac1N\sum_{n=0}^{N-1}Q_n^{\rho_0}(x)=\pi_x+ O(N^{-1}).
\ee
If $P$ is irreducible and aperiodic, 
\be\label{expdecayQrn}
Q_n^{\rho_0}(x)=\pi_x+ O(e^{-\gamma n}).
\ee
\end{cor}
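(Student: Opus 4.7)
The plan is to invoke the standard Perron-Frobenius asymptotics for $P$, the finite irreducible stochastic matrix defined in \eqref{stomatP}, and then substitute them into the explicit formulas obtained in Theorem \ref{dynpsiV}, namely
\[
\Psi_{\cS}^n(\cdot)=\sum_{x, y\in V} |\chi(y)\ket\bra x| \, \cdot \,  |x\ket\bra \chi(y)| \, {P^{n-1}}_{x y}
\quad\text{and}\quad \Q_n^{\rho_0}(x)=(r_0 P^n)_x.
\]
Since $G$ is finite, $P\in M_{|V|}(\C)$ is a bona fide stochastic matrix, and irreducibility guarantees (see e.g.\ \cite{N,KS,Wo}) the existence of a unique strictly positive left invariant probability vector $\pi$ with $\pi P=\pi$, together with the semisimplicity of all eigenvalues of $P$ of modulus one; the spectral decomposition of $P$ then yields $\frac{1}{N}\sum_{n=0}^{N-1}P^n={\bf 1}\pi+O(N^{-1})$. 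If $P$ is further aperiodic, $1$ is the only eigenvalue of $P$ on the unit circle, all other eigenvalues having modulus strictly less than $1$, so that $P^n={\bf 1}\pi+O(e^{-\gamma n})$ for some $\gamma>0$.

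Next I substitute these expressions into the formula for $\Psi_{\cS}^n$. The key observation is that $({\bf 1}\pi)_{xy}=\pi_y$ is independent of $x$, so the double sum decouples: for any $A\in\cT(l^2(V))$,
\[
\sum_{x,y\in V}|\chi(y)\ket\bra x|A|x\ket\bra \chi(y)|\pi_y=\Big(\sum_{y\in V}|\chi(y)\ket\bra \chi(y)|\pi_y\Big)\tr(A),
\]
using $\sum_{x\in V}\bra x|A|x\ket=\tr(A)$. The only piece of bookkeeping is the shift $n\mapsto n-1$ in the formula of Theorem \ref{dynpsiV}, which holds only for $n\geq 1$; the $n=0$ term in the Cesàro average contributes the identity map, whose effect is of order $O(1/N)$. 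The two displayed statements on $\Psi_{\cS}^n(\cdot)$ then follow in the Cesàro, resp.\ exponential, regime.

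For the asymptotics of $\Q_n^{\rho_0}(x)$, observe that the row vector $r_0$ defined in \eqref{rorho} has nonnegative entries and satisfies $r_0{\bf 1}=\tr(\rho_0)=1$, since $\rho_0$ is a density matrix; hence $r_0$ is a probability vector on $V$. Consequently $r_0({\bf 1}\pi)=(r_0{\bf 1})\pi=\pi$, and inserting the asymptotics of $P^n$ into $\Q_n^{\rho_0}(x)=(r_0 P^n)_x$ (resp.\ averaging first in $n$) yields \eqref{expdecayQrn} (resp.\ \eqref{CesaroQrn}). I do not expect any serious obstacle: the content lies entirely in the classical spectral analysis of the finite stochastic matrix $P$, and Theorem \ref{dynpsiV} reduces the asymptotic analysis of $\Psi_{\cS}^n$ and $\Q_n^{\rho_0}$ to a linear substitution of those asymptotics.
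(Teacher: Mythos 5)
Your proposal is correct and follows essentially the same route as the paper: invoke Perron--Frobenius for the finite irreducible stochastic matrix $P$ to get the Ces\`aro (resp.\ exponential) asymptotics of $P^n$, then substitute into the formulas $\Psi_{\cS}^n(\cdot)=\sum_{x,y}|\chi(y)\ket\bra x|\cdot|x\ket\bra\chi(y)|\,P^{n-1}_{xy}$ and $\Q_n^{\rho_0}(x)=(r_0P^n)_x$ of Theorem \ref{dynpsiV}, using that $({\bf 1}\pi)_{xy}=\pi_y$ decouples the double sum via $\sum_x\bra x|\cdot|x\ket=\tr(\cdot)$. Your extra remarks on the $n\mapsto n-1$ index shift and on $r_0$ being a probability vector are minor bookkeeping points that the paper leaves implicit.
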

\proof:
If $P$ is irreducible, Perron-Frobenius Theorem ensures the existence of a strictly positive probability row vector $\pi=(\pi_{x_1}, \pi_{x_2}, \cdots, \pi_{x_{|V|}})$ such that with ${\bf 1}^T=(1,1\cdots, 1)$ and for $N$ large enough
\be\label{Cesaropn}
\frac1N\sum_{n=0}^{N-1}P^n={\bf 1} \pi+O(N^{-1}),
\ee
see {\it e.g.} \cite{N, KS, Wo}. Hence, the matrix elements of the projector ${\bf 1} \pi$ read $({\bf 1} \pi)_{xy}=\pi_y>0$. Inserting this in \eqref{evolPP} yields 
\be
\frac1N\sum_{n=0}^{N-1}\Psi_{\cS}^n(\cdot)=\sum_{x, y\in V} |\chi(y)\ket\bra x| \, \cdot \,  |x\ket\bra \chi(y)| \,\pi_{y} + O(N^{-1}),
\ee
which, with  $\sum_{x\in V}\bra x| \, \cdot \,  |x\ket=\tr (\cdot)$, proves the result in the irreducible case. The irreducible and aperiodic case is proven the same way.

Consider now the probability $Q_n^{\rho_0}$ given by  formula \eqref{expqnrho}. Inserting  in this expression the large time asymptotics (\ref{Cesaropn}) for the Ces\`aro mean of $P^n$ in the irreducible case proves \eqref{CesaroQrn}. Formula \eqref{expdecayQrn} is proven similarly in the irreducible and aperiodic case.
\qed

\begin{rem}\label{specproof}
i) Up to the strict positivity of $\pi$, the Corollary can be given a purely spectral proof that we sketch in Appendix C for completeness.\\
ii) Since the asymptotic map typically differs from $\un_V\tr (\cdot)/|V|$, this shows that $\Psi_\cS$ is not entropy non-decreasing.\\
iii)  If $P$ fails to be irreducible, the limiting behaviours of $\Psi_{\cS}(\rho_0)$ and $\Q_n^{\rho_0}$ depend on the initial state $\rho_0\in \cD\cM(l^2(V))$, as shown on the example in Section \ref{B}. \\
iv) The formula of the asymptotic state is likely to hold true for infinite graphs. We show that this is the case on the example of the next section.
\end{rem}

\medskip

The results presented in Theorem \ref{dynpsiV} and Corollary \ref{corpsin} only depend on the form  \eqref{rankonekraus} of the rank one Kraus operators of $\Psi_{\cS}$. In particular, the asymptotic state of $\Psi_{\cS}^n$ and corresponding probabilities to find the walker on each vertex are monitored by the invariant probability vector $\pi$ of the Markov chain associated with $P$. This is to be contrasted with the corresponding asymptotic quantities stated in Corollary \ref{asyPhin} and (\ref{asymprobvert}) for the open QW $\Phi_{\cS}^n$ which only depend on the degrees of the vertices of $G$. 

The non trivial dependence of  the invariant probability vector $\pi$  on the parameters of the construction, {\it i.e.} the graph $G$, the scattering matrices $\cS=\{S(x)\}_{x\in V}$ and the vectors $\{\omega(x)\}_{x\in V}$ appearing in $R$, is displayed by the examples of the next section.

\subsection{Examples}
We illustrate  the variety of behaviours induced open SQWs on graphs can display as a function of the scattering matrices by working out the cases where they correspond to the $\alpha-$Grover Walk, and to the Discrete Fourier Transform.
\medskip

Let $G$ be an arbitrary finite graph and consider the boundary operator $R$ and its adjoint (\ref{boundaryop}) parameterized by the family of unit vectors
\be\label{choiomeg}
\omega(x)=\frac{\bf 1}{\sqrt{d_x}} \in \C^{d_x},\ \ \forall x\in V.
\ee
This choice is motivated by the fact that it yields an equal weight to each of the vectors spanning $\cH_x^{\rm I}$, and thus is likely to make $P$ irreducible. 
The components of the vector $v(x)$ (\ref{newvec}) thus satisfy
\be
v_z(x)=\frac{1}{\sqrt{d_x}} \sum_{y\sim x}S_{zy}(x)=:  \frac{1}{\sqrt{d_x}} L_z(x),\ \ \forall x\in V, z\sim x.
\ee
Consequently, see (\ref{vecthet}), (\ref{defchi}),
\begin{align}
\theta(x)&= \frac{1}{\sqrt{d_x}}\sum_{z\sim x}\frac{L_z(x)}{\sqrt{d_z}} |z\ket, \ \ \text{with} \ \|\theta(x)\|^2= \frac{1}{d_x}\sum_{z\sim x}\frac{|L_z(x)|^2}{d_z},\nonumber\\
\chi(x)&=\theta(x)+e^{i\beta_x}\sqrt{1-\|\theta(x)\| ^2}|x\ket,
\end{align}
which leads to the stochastic matrix $P$ (\ref{stomatP}):
\be\label{stochomunif}
P_{xy}=\begin{cases} 
\frac{|L_y(x)|^2}{d_x d_y} & \text{if } x \sim y  \\
1- \|\theta(x)\|^2 & \text{if } x = y \\
0 & \text{otherwise} 
\end{cases},\ \ \forall x,y\in V. 
\ee

For this choice of boundary operator, we now consider two families of scattering matrices $\cS=\{S(x)\}_{x\in V}.$

\subsubsection{The $\alpha-$Grover Scattering Matrices}\label{A}
Following (\ref{Salpha}), we  set $\forall x\in V$,
\be\label{SalphaOW}
S(x)=\frac1{d_x}|{\bf 1}\ket\bra {\bf 1}|+e^{i\alpha}\big(\un_{d_x} - \frac1{d_x}|{\bf 1}\ket\bra {\bf 1}|\big),
\ee 
the matrix elements of which read, with $\nu(\alpha)=1-e^{i\alpha}\neq 0$,
\be
S_{zy}(x)=\begin{cases}  e^{i\alpha}+\frac{\nu(\alpha)}{d_x} & \text{if } y=z\sim x  \\
\frac{\nu(\alpha)}{d_x} &  \text{if }  y\neq z, \ y\sim x, z\sim x \\
0 & \text{otherwise.} 
\end{cases}
\ee
Thus, for $z\sim x$
\be
L_z(x)=(d_x-1)\frac{\nu(\alpha)}{d_x}+e^{i\alpha}+\frac{\nu(\alpha)}{d_x} = 1,
\ee
and, $\forall x,y\in V$,
\be\label{stochA}
P_{xy}=\begin{cases} 
\frac{1}{d_x d_y} & \text{if } x \sim y  \\
1-\frac1{d_x}\sum_{z\sim x}\frac1{d_z}& \text{if } x = y \\
0 & \text{otherwise.} 
\end{cases}
\ee
We observe that $P=P^T$ is independent of $\alpha$ and is actually bistochastic. Since $G$ is connected, $P$ is irreducible. We have shown
\begin{lem}
The invariant probability vector of the stochastic matrix $P$ associated with the open QW  on $G$ induced by the $\alpha-$Grover walk is uniform on the vertices of $G$:
\be
\pi_y=1/|V|, \ \ \forall y\in V.
\ee
\end{lem}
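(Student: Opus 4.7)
The proof plan is essentially a one-liner exploiting the structure already laid out in the excerpt. The stochastic matrix $P$ in (\ref{stochA}) has just been observed to satisfy $P=P^T$, i.e.\ it is symmetric, and to be irreducible (since $G$ is connected and the off-diagonal entries $1/(d_xd_y)$ are strictly positive along every edge). From these two facts the conclusion follows immediately.

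First I would verify directly that $P$ is bistochastic by checking the column sums: for any fixed $y\in V$,
\begin{equation*}
\sum_{x\in V}P_{xy}=\sum_{x\sim y}\frac{1}{d_xd_y}+\Big(1-\frac{1}{d_y}\sum_{z\sim y}\frac{1}{d_z}\Big)=\frac{1}{d_y}\sum_{x\sim y}\frac{1}{d_x}+1-\frac{1}{d_y}\sum_{z\sim y}\frac{1}{d_z}=1,
\end{equation*}
which is of course automatic from $P=P^T$ combined with the row-stochasticity $\sum_y P_{xy}=1$. Consequently the uniform row vector $\pi=(1/|V|,\dots,1/|V|)$ satisfies $\pi P=\pi$: indeed $(\pi P)_y=\frac{1}{|V|}\sum_{x\in V}P_{xy}=\frac{1}{|V|}=\pi_y$.

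Finally, irreducibility of $P$ (invoked by Perron--Frobenius in Corollary \ref{corpsin}) guarantees uniqueness of the strictly positive invariant probability vector, so $\pi_y=1/|V|$ for all $y\in V$ is the one identified in the statement. There is no real obstacle here; the content of the lemma is entirely contained in the symmetry $P=P^T$, which in turn is a direct consequence of $L_z(x)=1$ for the $\alpha$-Grover scattering matrices together with the symmetric factor $1/(d_xd_y)$ arising from the uniform choice $\omega(x)=\mathbf{1}/\sqrt{d_x}$.
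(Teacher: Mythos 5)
Your proposal is correct and follows essentially the same route as the paper: observe that $L_z(x)=1$ makes $P$ symmetric and hence bistochastic, so the uniform vector is invariant, and connectedness of $G$ gives irreducibility and hence uniqueness. The explicit column-sum check is a harmless elaboration of what the paper states directly.
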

Thus the asymptotic probability of presence of the walker on the sites of $G$ is uniform as well. Also, the asymptotic state in Ces\`aro mean, is $\frac{1}{|V|}\sum_{y\in V} |\chi(y)\ket\bra \chi(y)|$, see Corollary \ref{corpsin}, where
\be
 |\chi(y)\ket=\frac{1}{\sqrt{d_y}}\sum_{z\sim y}\frac1{\sqrt{d_z}}|z\ket+e^{i\beta_y}\sqrt{1-\frac{1}{d_y}\sum_{z'\sim x}\frac{1}{d_{z'}}}|y\ket.
\ee

\begin{rem} i) {Actually, the arguments that lead to these results also hold if $S(x)=\un_x$, for all $x\in V$, which corresponds formally to the value $\alpha=0$ that was excluded.}\\
ii) Any unitary matrix with constant diagonal elements and constant off-diagonal elements coincides, up to a phase, with  (\ref{SalphaOW}), so that its associated stochastic matrix $P$ is identical to (\ref{stochA}). The same holds if the unitary matrix is multiplied from the left by any unitary diagonal matrix, thanks to  (\ref{stochomunif}).
\end{rem}

\subsubsection{The Discrete Fourier Transform Scattering Matrices}\label{B}
For any $x\in V$, let $\Omega_x=e^{-2\pi i/d_x}$ and consider 
\be\label{FFT}
S(x)=\frac{1}{\sqrt{d_x}}\begin{pmatrix}
1 & 1 & 1 & \cdots & 1 \\
1 & \Omega_x & \Omega_x^2 & \cdots & \Omega_x^{d_x-1} \\
1 & \Omega_x^2 & \Omega_x^4 & \cdots & \Omega_x^{2(d_x-1)} \\
\vdots & \vdots & \vdots & \ddots & \vdots \\
1 & \Omega_x^{d_x-1} & \Omega_x^{2(d_x-1)} & \cdots & \Omega_x^{(d_x-1)(d_x-1)}
\end{pmatrix}
\ee
written in the canonical basis of $\C^{d_x}$. Here, the $j k$ matrix element of $S(x)$ corresponds to $S_{x_j x_k}(x)$, for a labelling of the $d_x$ adjacent vertices $x_1, x_2, \dots, x_{d_x}$  of $x$.
Since the rows of a unitary matrix are orthogonal, we get for $x_j\sim x$,
\be\label{Llabel}
L_{x_j}(x)=\begin{cases}  \sqrt{d_x} & \text{if} \ j=1, \\
0 & \text{if}  \ j\geq 2,\end{cases}
\ee
so that $\theta(x)=\frac{1}{\sqrt{d_{x_1}}}|x_1\ket$, with $x_1\sim x$.
Consequently, $\forall x,y\in V$,
\be\label{stochB}
P_{xy}=\begin{cases} 
\frac{1}{d_{x_1}} & \text{if } y=x_1\sim x  \\
1-\frac{1}{d_{x_1}} & \text{if } y = x \\
0 & \text{otherwise.} 
\end{cases}
\ee
{Let us stress that $P_{xy}$ and the induced dynamics depend on the choice of the vertices labelling.}
This stochastic matrix is in general not irreducible, but its invariant probability vectors can be determined. Once the labelling of the adjacent vertices to $x$ is fixed for all $x\in V$, $P$ defines the map $N$ on $V$ such that 
\begin{align}\label{mapn}
N:&\ V\ra V \nonumber\\
&\ x\mapsto N(x)=y, \ \ \text{where } \ y\sim x \text{ is uniquely determined by } P_{xy}>0.
\end{align}
We can then construct a directed subgraph $\Sigma\subset G$ (without loops) associated with $P$ or equivalently with $N$: $\Sigma$ has same vertex set $V$ as $G$ and $(yx)$ form an edge of $\Sigma$ iff $y=N(x)$. $\Sigma$ is also known as a {\it functional graph} or {\it successor graph}, and may or may not by connected. 

Consider  the set of the connected components of $\Sigma$,
\be
{\rm CC}(\Sigma)=\cup_{i=1}^r C_i \ \ \text{where} \ \
C_i \subseteq V \ \text{is connected in } \Sigma, \ i=1,\dots, r.
\ee
Each component $C_i\in {\rm CC}(\Sigma)$, $i=1,\dots, r$, consists of one cycle, the vertices of which are possibly the roots of a finite tree. See Figure \ref{fig:exN} for an example corresponding to the map $N$ on $\{s,t,u,v,w,x,y,z\}$ with values 
\be\label{exaN}
\{ N(s)=y, N(t)=x, N(u)=x, N(v)=t, N(w)=t, N(x)=y, N(y)=z, N(z)=x \}.
\ee
\begin{figure}[h]
\centering
\begin{tikzpicture}[->, >=stealth, node distance=2cm, main node/.style={circle, draw, minimum size=0.8cm}]

    \node[main node] (1) at (0, 0) {x};
    \node[main node] (2) at (1, 1.732 ) {y}; 
    \node[main node] (3) at (2, 0) {z};
    \node[main node] (4) at (-2, 0) {t};
    \node[main node] (5) at (-1, 1.732) {u};
    \node[main node] (6) at (-4, 0) {v};
    \node[main node] (7) at (-3, 1.732) {w};
    \node[main node] (8) at (3, 1.732 ) {s};

    \path[->, thick] (1) edge (2);
    \path[->, thick] (2) edge (3);
    \path[->, thick] (3) edge (1);
    \path[->, thick] (4) edge (1);
    \path[->, thick] (5) edge (1);
    \path[->, thick] (6) edge (4);
    \path[->, thick] (7) edge (4);
    \path[->, thick] (8) edge (2);

\end{tikzpicture}
\caption{\small The functional graph $\Sigma$ associated with the map $N$ (\ref{exaN}) }
  \label{fig:exN}

\end{figure}
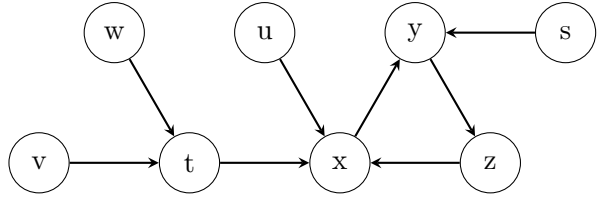

Given the stochastic matrix $P$ (\ref{stochB}), we first determine the dimension of $\ker (P-\un)$, which equals the dimension of the subspace of  left invariant vectors of $P$.
\begin{lem}
The stochastic matrix associated with DFT scattering matrices satisfies
\be
\dim \ker (P-\un)=|{\rm CC}(\Sigma)|.
\ee
\end{lem}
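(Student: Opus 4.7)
The plan is to translate the invariance equation $\pi P = \pi$ into a pushforward-invariance equation under the map $N:V\to V$ of (\ref{mapn}), and then read off the dimension of the kernel directly from the tree-and-cycle structure of the functional graph $\Sigma$.

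First, I would write $(\pi P)_y = \pi_y$ row by row. Because each row of $P$ has only two nonzero entries, namely the diagonal $P_{xx} = 1 - 1/d_{N(x)}$ and the off-diagonal $P_{x,N(x)} = 1/d_{N(x)}$, the diagonal contribution cancels on both sides and one is left with
\begin{equation*}
\frac{\pi_y \, d_y}{d_{N(y)}} \;=\; \sum_{x \in N^{-1}(y)} \pi_x, \qquad y \in V.
\end{equation*}
Introducing $\mu_x := \pi_x/d_{N(x)}$ and using that $N(x)=y$ forces $d_{N(x)} = d_y$ throughout the sum on the right, this simplifies to the clean pushforward relation
\begin{equation*}
\mu_y \;=\; \sum_{x \in N^{-1}(y)} \mu_x, \qquad y \in V.
\end{equation*}
Since $\pi\mapsto\mu$ is a diagonal linear bijection on $\C^{|V|}$ (all $d_{N(x)}\geq 1$), solutions of $\pi P=\pi$ and of this pushforward equation are in one-to-one correspondence, so it suffices to determine the dimension of its solution space.

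Second, I would invoke the standard structure of a finite functional graph: each connected component $C_i$ of $\Sigma$ contains exactly one directed cycle $\gamma_i$, while the remaining vertices of $C_i$ form a finite rooted forest whose edges flow into $\gamma_i$ under $N$. A backward induction then forces $\mu$ to vanish on the transient part (vertices not lying on any cycle): for any vertex $y$ with $N^{-1}(y)=\emptyset$ the equation immediately yields $\mu_y=0$, and propagating up the trees toward the cycles, any transient $y$ whose predecessors already satisfy $\mu=0$ must itself satisfy $\mu_y=0$.

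Third, on each cycle $\gamma_i$ the equation reduces to $\mu_{N(x)}=\mu_x$ for $x\in\gamma_i$, since the tree-predecessors attached to $\gamma_i$-vertices have already been shown to carry $\mu=0$. Hence $\mu$ must be constant on $\gamma_i$, giving exactly one free parameter per connected component. Back in the $\pi$ variables, this yields $|{\rm CC}(\Sigma)|$ mutually independent left-invariant vectors $\pi^{(i)}$ of $P$ defined by $\pi^{(i)}_y = d_{N(y)}$ for $y\in\gamma_i$ and $\pi^{(i)}_y=0$ otherwise, which span $\ker(P-\un)$ by disjointness of supports. The main obstacle is the backward induction eliminating $\mu$ on the transient forests of $\Sigma$; once this is handled, the identity $\dim\ker(P-\un) = |{\rm CC}(\Sigma)|$ is immediate.
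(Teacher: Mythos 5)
Your proof is correct, but it takes a genuinely different route from the paper's. The paper computes the \emph{right} kernel of $P-\un$: writing $E=P-\un$, the equation $Ev=0$ reads $\eps_{N(j)}\,(v_{N(j)}-v_j)=0$ for every $j\in V$, i.e. $v_j=v_{N(j)}$, so $v$ must be constant along $N$-orbits and hence on each connected component of $\Sigma$; the dimension count is then immediate, with no need for the cycle-plus-forest structure of functional graphs or any elimination argument. You instead compute the \emph{left} kernel (row vectors with $\pi P=\pi$), which has the same dimension as the right kernel for a finite square matrix --- a fact the paper itself invokes in the sentence preceding the lemma, so this is not a gap. Your reduction to the pushforward equation $\mu_y=\sum_{x\in N^{-1}(y)}\mu_x$ with $\mu_x=\pi_x/d_{N(x)}$ is correct (the factor $d_{N(x)}=d_y$ is indeed constant over $x\in N^{-1}(y)$), as is the backward induction killing $\mu$ on the transient forest (predecessors of transient vertices are themselves transient, so the induction closes) and the constancy of $\mu$ on each cycle. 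The price is a longer argument for this particular lemma; the payoff is that you obtain the explicit left-invariant vectors, supported on the cycles with components proportional to $d_{N(y)}$, as a byproduct --- which is precisely the content of the paper's subsequent lemma on the invariant probability vector $\pi$, proved there by a separate block-matrix computation. So your single argument essentially covers both results at once.
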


\proof:  We set $E=P-\un$, with 
matrix elements $E_{xy}$, ${x, y\in V}$  
\be\label{matE}
E_{xy}=\begin{cases} 
\eps_{N(x)} & \text{if } y=N(x)\sim x  \\
-\eps_{N(x)} & \text{if } y = x \\
0 & \text{otherwise} 
\end{cases},\ \ \forall x,y\in V, \ \ \text{with} \ \eps_x=1/d_x\in (0,1].
\ee
Labelling the vertices by $V=\{1,2,\dots, n\}$, with $n=|V|$, we get that 
\be
v=(v_1, v_2, \dots, v_n)^T\in \ker (P-\un)=\ker E
\ee  
if and only if
\be
\eps_{N(j)} v_j=\eps_{N(j)} v_{N(j)}, \ \forall j\in V.
\ee 
Hence, writing $V=C_1 \cup C_2 \cup \dots \cup C_r$, where $r<n$, we get for each $i\in \{1,\dots, r\}$ and all $k\in C_i\in{\rm CC(\Sigma)}$, $v_k=v(C_i)\in \C$ arbitrary, since $\eps_j>0$ for all $j\in V$.
Thus $\ker (P-\un)$ has dimension $r$, the number of connected components of $\Sigma$.\qed 

Thus, if $\Sigma$ is connected, $P$ admits a unique invariant row probability vector $\pi$. 

\begin{lem}\label{piconnected}
Assume $\Sigma$ is connected and denote by ${\rm Cyc}\subseteq V$ the set of its vertices belonging to the cycle of  $\Sigma$. Then the invariant row probability vector $\pi$ of $P$ is supported on $\rm Cyc$ and  has components  
\be\label{pidft}
\pi_x=\begin{cases}
{d_{N(x)}}/{\sum_{y\in {\rm Cyc}}d_y}& \text{if } x\in {\rm Cyc}\\
0 & \text{\em otherwise}.
\end{cases}
\ee 
\end{lem}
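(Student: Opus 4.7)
\proof:
The plan is to write the left invariance equation $\pi P=\pi$ componentwise, show that any invariant probability vector must vanish off $\mathrm{Cyc}$ via an induction exploiting the tree structure of $\Sigma\setminus\mathrm{Cyc}$, and then solve the resulting two-term recursion on the cycle. Uniqueness comes for free: the previous Lemma gives $\dim\ker(P-\un)=|\mathrm{CC}(\Sigma)|=1$, so it suffices to exhibit \emph{any} invariant probability vector of the claimed form.

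Writing $\eps_x=1/d_x$ and using (\ref{stochB}), the $y$-component of $\pi P=\pi$ simplifies to
\be
\pi_y\,\eps_{N(y)}=\eps_y\sum_{x\,:\,N(x)=y}\pi_x,
\ee
which is the only equation I will need. First, I would establish that any solution is supported on $\mathrm{Cyc}$. Since $\Sigma$ is a connected functional graph, its vertex set decomposes as $V=\mathrm{Cyc}\sqcup\mathrm{Tr}$, where $\mathrm{Tr}$ carries the structure of a forest of rooted trees attached to cycle vertices with all $\Sigma$-edges restricted to $\mathrm{Tr}$ oriented toward the root. I would then induct on the distance from a tree vertex to the furthest leaf above it: for a leaf $y\in\mathrm{Tr}$, the preimage $\{x\,:\,N(x)=y\}$ is empty, so the displayed equation forces $\pi_y=0$ (as $\eps_{N(y)}>0$); for a higher tree vertex, every $N$-preimage is itself a tree vertex strictly closer to the leaves, on which $\pi$ already vanishes by induction.

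Once $\pi$ is known to be supported on $\mathrm{Cyc}$, I label the cycle as $c_1\to c_2\to\cdots\to c_k\to c_1$ with $N(c_i)=c_{i+1}$ (indices mod $k$). The only $N$-preimage of $c_{i+1}$ still carrying weight is $c_i$, so the equation reduces to the simple two-term recursion $\pi_{c_{i+1}}d_{c_{i+1}}=\pi_{c_i}d_{c_{i+2}}$, and the Ansatz $\pi_{c_i}=d_{c_{i+1}}/Z$ with $Z=\sum_{y\in\mathrm{Cyc}}d_y$ is readily checked to satisfy both the recursion and the normalization $\sum_i\pi_{c_i}=1$. The only (mild) obstacle I anticipate is justifying the forest structure of $\mathrm{Tr}$ used in the induction, but this is a standard structural feature of connected functional graphs and drives the induction cleanly. \qed
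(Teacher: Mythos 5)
Your proof is correct and in substance follows the same route as the paper's: write out the stationarity equations explicitly, exploit the cycle--plus--attached-trees structure of the functional graph $\Sigma$, and reduce to the two-term recursion $\pi_{c_{i+1}}d_{c_{i+1}}=\pi_{c_i}d_{c_{i+2}}$ around the cycle, whose normalized solution is exactly \eqref{pidft}. The one step where you genuinely diverge is the vanishing of $\pi$ off ${\rm Cyc}$: the paper writes $E=P-\un$ in block form with respect to the splitting of $V$ into ${\rm Cyc}$ and its complement (the upper-right block vanishes because $N$ maps ${\rm Cyc}$ into itself), invokes the one-dimensionality of the solution space established in the preceding lemma, and is thereby entitled to simply \emph{look for} a solution with $\pi^{(2)}=0$; you instead prove directly that \emph{every} invariant vector vanishes on $V\setminus{\rm Cyc}$ by induction from the leaves of the attached trees toward the cycle, using that a leaf has empty $N$-preimage and that preimages of a tree vertex are tree vertices of strictly smaller height. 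Your variant is slightly more self-contained for that step and, as a by-product, re-derives uniqueness from the cycle recursion without needing the dimension count; the price is having to set up the forest structure of the off-cycle part, which the paper's block-matrix shortcut avoids. Both arguments are sound and yield the same formula.
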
 
\begin{rem}
The lack of strict positivity of $\pi$ shows $P$ is not irreducible.
\end{rem}
\proof:
As above, we use the labelling $V=\{1,2,\dots, n\}$, with $n=|V|$ and look for $\pi=(\pi_1, \dots, \pi_n)$ such that $\pi E=0$. Assuming the vertices $\{1,\dots, s\}={\rm Cyc} \subseteq V$ satisfy 
\be\label{periodN}
N(1)=2, N(2)=3, \dots, N(s-1)=s, N(s)=1,
\ee 
we get the following block structure for $E$
\be
E=\begin{pmatrix} A & {\mathbb O} \\
B & C\end{pmatrix}
\ee 
where $A\in M_s(\C)$ reads
\be
A=\begin{pmatrix}
-\eps_2 & \eps_2 &              &                                   &                      \\
             &-\eps_3 & \eps_3 &                                       &                       \\
             &           &       \ddots      &                                &                      \\
            &           &            &                   -\eps_{s} &            \eps_{s}        \\
    \eps_1         &           &                                   &                      &    -\eps_1      
\end{pmatrix},
\ee
$B\in M_{n-s\ s}(\C)$ and $C\in M_{n-s \ n-s}(\C)$ and ${\mathbb O}\in M_{s \ n-s}(\C)$ is the matrix with zero elements.
Writing $\pi=(\pi^{(1)}, \pi^{(2)})$ where $\pi^{(1)}$ is a length $s$ row vector and $\pi^{(2)}$ a length $n-s$ row vector, the equation $\pi E=0$ reads
\be
(\pi^{(1)}A+\pi^{(2)}B, \pi^{(2)}C)=0.
\ee
Knowing the subspace of solutions is of dimension 1, we set $\pi^{(2)}=0$ and look for non trivial solutions to $\pi^{(1)}A=0$:
\begin{align}
&\pi^{(1)}_1\eps_2=\pi^{(1)}_s\eps_1\nonumber\\
&\pi^{(1)}_{j-1}\eps_j=\pi^{(1)}_{j}\eps_{j+1}, \ \ 2\leq j\leq s-1 \nonumber\\
&\pi^{(1)}_{s-1}\eps_s=\pi^{(1)}_s\eps_1.
\end{align}
Such a solution is given  by
\be
\pi^{(1)}_j=c/\eps_{j+1},\ \ \forall 1\leq j\leq s-1, \ \ \text{and}\ \ \pi^{(1)}_s=c/\eps_{1},
\ee
where $c\in \C^*$ is arbitrary,  which after normalization and with $\eps_j=1/d_j$ and (\ref{periodN}) yields  (\ref{pidft}).
\qed

As a consequence, with the DFT scattering matrix and under the assumption that $\Sigma$ is connected, we obtain that the asymptotic state in Ces\`aro mean reads
\be\label{asstacyc}
\sum_{y\in {\rm Cyc}} |\chi(y)\ket\bra \chi(y)| \,\frac{d_{N(y)}}{\sum_{z\in {\rm Cyc}}d_z}
\ee
with 
\be\label{chiN}
 |\chi(y)\ket=\frac{1}{\sqrt{d_{N(y)}}}|N(y)\ket+e^{i\beta_y}\sqrt{1-\frac{1}{d_{N(y)}}}|y\ket.
\ee
The asymptotic probability of presence  at $x$ in Ces\`aro mean is $\pi_x=\frac{d_{N(x)}}{\sum_{y\in {\rm Cyc}}d_y}$ if $x\in {\rm Cyc}$ and zero otherwise.
\medskip

Finally, if $\#{\rm CC}(\Sigma)\geq 2$, we have $V=C_1\cup \dots \cup C_r$ and we consider $\C^{|V|}=\oplus_{i=1}^r\C^{|C_i|}$, where $\C^{|C_i|}$ is associated with the vertices in $C_i$. 
Therefore we can write  
\be\label{decomP} P=\bigoplus_{i=1}^r P_{i}, 
\ee where ${P_{i}}$ is the restriction of $P$ to the invariant subspaces $\C^{|C_i|}$ for $i\in \{1, 2, \dots, r\}$. 
Moreover, the results derived above in case $\Sigma$ is connected apply to each of the $P_i$'s. In particular, for each $i\in \{1, 2, \dots, r\}$, there exists a unique invariant probability row vector $\pi_i\in \C^{|C_i|}$ for $P_i$, supported on ${\rm Cyc}_i$, the cycle of $C_i$, according to Lemma \ref{piconnected}. In the following, we view $\pi_i\in \C^{|C_i|}\subset \C^{|V|}$ as a probability vector on $\C^{|V|}$ and, accordingly, $P_i$ as a matrix on $\C^{|V|}$. We also introduce the corresponding column vectors  ${\bf 1}_i\in \C^{|V|}$ supported on ${\rm Cyc}_i$ with nonzero components equal to one.

\medskip

From the decomposition (\ref{decomP}) and Lemma \ref{piconnected}, we deduce 
\be\label{CesaropnMany}
\frac1N\sum_{n=0}^{N-1}P^n=\sum_{1\leq i\leq r} {\bf 1}_i \pi_i+O(N^{-1}),
\ee
where the sum of the right hand side is the spectral projector of $P$ on its eigenvalue 1, which is of dimension $r=|{\rm CC}(\Sigma)|$.
Gathering these observations, we obtain in the general case for DFT scattering matrices:
\begin{lem}
For $\rho_0\in \cD\cM(l^2(V))$, the asymptotic probability of presence  at $x$, $Q^{\rho_0}_n(x)$ given by  (\ref{expqnrho}), satisfies
\be
\lim_{N\ra \infty}\frac1N\sum_{n=0}^{N-1}Q^{\rho_0}_n(x)=\begin{cases}
r_0({\rm Cyc_i})\pi_i(x)& \text{\em if } x\in {\rm Cyc}_i , \\
0 & \text{\em otherwise,}\end{cases} 
\ee 
where $\pi_i$ is given in Lemma \ref{piconnected} the index $i\in\{1,2,\dots, r\}$ is uniquely determined by $x$ so that $\pi_i(x)>0$, $r_0$ is defined by $\rho_0$ in (\ref{rorho}) and $r_0({\rm Cyc_i})=\sum_{y\in {\rm Cyc_i}}{r_0(y)}$.

Moreover, the corresponding asymptotic state  (\ref{evolPP}) applied to $\rho_0$ satisfies
\be
\lim_{N\ra \infty}\frac1N\sum_{n=0}^{N-1}\Psi_{\cS}^n(\rho_0)=\sum_{i=1}^r {r_0({\rm Cyc_i})}\sum_{y\in {\rm Cyc_i}} |\chi(y)\ket\bra \chi(y)| \pi_i(y).
\ee
\end{lem}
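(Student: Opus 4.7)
The plan is to derive both claims as direct consequences of Theorem \ref{dynpsiV} combined with the block-diagonal Cesàro asymptotics \eqref{CesaropnMany} for $P^n$, exploiting the fact that each factor $\pi_i$ is supported on $\text{Cyc}_i$ as given by Lemma \ref{piconnected}. No new analytical input is required: the proof amounts to substituting the decomposition \eqref{CesaropnMany} into the formulas already established for $Q_n^{\rho_0}(x)$ and $\Psi_\cS^n(\rho_0)$, and identifying the resulting scalars.

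For the first statement, I would start from the identity $Q_n^{\rho_0}(x) = (r_0 P^n)_x$ proved in Theorem \ref{dynpsiV}. Averaging in $n$ and using \eqref{CesaropnMany} gives
\begin{equation*}
\lim_{N\to\infty}\frac{1}{N}\sum_{n=0}^{N-1}Q_n^{\rho_0}(x) = \sum_{i=1}^{r}\bigl(r_0\,\mathbf{1}_i\,\pi_i\bigr)_x.
\end{equation*}
Since $\mathbf{1}_i$ is a column vector supported on $\text{Cyc}_i$ with unit entries, the product $r_0\mathbf{1}_i$ is the scalar $\sum_{y\in\text{Cyc}_i} r_0(y)=r_0(\text{Cyc}_i)$, so the right-hand side is $\sum_{i=1}^{r}r_0(\text{Cyc}_i)\,\pi_i(x)$. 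Because the cycles $\text{Cyc}_1,\dots,\text{Cyc}_r$ are pairwise disjoint and $\pi_i$ is supported on $\text{Cyc}_i$ by Lemma \ref{piconnected}, at most one term in this sum is nonzero, uniquely determined by the component $C_i\ni x$, yielding the announced piecewise formula.

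For the asymptotic state, I would rewrite \eqref{evolPP} as
\begin{equation*}
\Psi_\cS^n(\rho_0) = \sum_{x,y\in V} r_0(x)\,|\chi(y)\rangle\langle\chi(y)|\,(P^{n-1})_{xy},\ \ n\geq 1,
\end{equation*}
using $\langle x|\rho_0 x\rangle = r_0(x)$. Taking the Cesàro mean, absorbing the shift $n\mapsto n-1$ into an $O(N^{-1})$ error and applying \eqref{CesaropnMany}, one gets
\begin{equation*}
\lim_{N\to\infty}\frac{1}{N}\sum_{n=0}^{N-1}\Psi_\cS^n(\rho_0) = \sum_{i=1}^{r}\Bigl(\sum_{x\in V} r_0(x)(\mathbf{1}_i)_x\Bigr)\sum_{y\in V}\pi_i(y)\,|\chi(y)\rangle\langle\chi(y)|.
\end{equation*}
The inner $x$-sum collapses to $r_0(\text{Cyc}_i)$, while the $y$-sum reduces to $y\in\text{Cyc}_i$ by support of $\pi_i$, producing exactly the stated expression.

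There is no substantial obstacle: the argument is essentially bookkeeping of a scalar-vector-row product on each spectral block of $P$, and the limits are automatic in finite dimension since \eqref{CesaropnMany} holds entrywise. The only place one must be careful is the index shift between $P^{n-1}$ in \eqref{evolPP} and $P^n$ in \eqref{CesaropnMany}, which is harmless for Cesàro averages, and the observation that the disjointness of the sets $\text{Cyc}_i$ together with the support property of $\pi_i$ makes the piecewise formula for $Q_n^{\rho_0}$ well-defined.
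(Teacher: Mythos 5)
Your proposal is correct and follows essentially the same route as the paper, which obtains the lemma by substituting the Ces\`aro asymptotics (\ref{CesaropnMany}), the block decomposition (\ref{decomP}), and the support property of the $\pi_i$ from Lemma \ref{piconnected} into the formulas $Q_n^{\rho_0}(x)=(r_0P^n)_x$ and (\ref{evolPP}) of Theorem \ref{dynpsiV}. Your explicit bookkeeping of the scalar $r_0\mathbf{1}_i=r_0({\rm Cyc}_i)$ and of the harmless index shift $P^{n-1}\mapsto P^n$ matches the paper's (largely implicit) argument.
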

\begin{rem} i) The difference between the finite $N$ and asymptotic Ces\`aro mean probabilities and states  is again $O(N^{-1})$.\\ 
ii) Because $P$ is stochastic, (\ref{CesaropnMany}) implies in the limit $N\ra \infty$
\be
1=r_0 \sum_{i=1}^r  {\bf 1}_i \pi_i \,{\bf 1}=  \sum_{i=1}^r  r_0({\rm Cyc_i}),
\ee
so that $\{r_0({\rm Cyc_i})\}_{1\leq i\leq r}$ defines a probability on the cycles of $\Sigma$. \\
iii) The asymptotic state is thus the expectation with respect to the probability on the cycles above of the asymptotic states (\ref{asstacyc}) for each cycle.  
\end{rem}

\subsection{An infinite Graph with DFT scattering matrices}

To complete the picture, we consider a simple example of infinite graph, in the framework of the discrete Fourier transform. \\

{Let $(\Z, E)$ be the graph with edge set $E=\cup_{x\in \Z}\{[x,x+1]\}$ between each consecutive integers. We consider $G=(\Z, \tilde E)$ as the previous graph to which we add an extra edge between the vertices $0$ and $+1$: $\tilde E=E\cup \{[0,1]_1\}$, see Figure \ref{figGZplus}.} 
\begin{figure}[h]
	\centering
	\begin{tikzpicture}
		\foreach \x in { -2, -1, 0, 1, 2, 3} {
			\node[circle, fill=white, draw, minimum size=0.8cm, inner sep=0pt] (\x) at (1.5*\x, 0)
			{\textcolor{black}{\x}};
		}
		
		\foreach \x in { -2, -1} {
			\draw[line width=0.3mm] (\x) -- (\the\numexpr\x+1\relax);
		}
		\draw[line width=0.3mm] (1) -- (2);
		\draw[line width=0.3mm] (2) -- (3);
		
		\draw[line width=0.3mm] (1) to[bend left=30] (0);
		\draw[line width=0.3mm] (0) to[bend right=-30] (1);
		
		\node at (-4, 0) {\(\cdots\)};
		\node at (5.5, 0) {\(\cdots\)};
	\end{tikzpicture}
	\caption{\small The infinite graph $G=(\Z, \tilde E)$. }
	\label{figGZplus}
\end{figure}
Take the scattering matrices given by \eqref{FFT}  at each vertex, and consider a labelling of the adjacent vertices of $x\in V=\Z$, see \eqref{Llabel}, such that
\be
\begin{cases}
x_1= x+1 & \text{if } \ x\leq 0,\\
x_1=x-1 &  \text{if } \ x\geq 1.
\end{cases}
\ee

Accordingly, the only off diagonal nonzero entries of $P$ are, see (\ref{stochB}),
\be
P_{-1 \, 0}= 1/3, \ P_{0 \, 1}= 1/3,\  P_{2 \, 1}=1/3, \  P_{1 \, 0}=1/3 
\ee
and 
\be
\begin{cases}
P_{x-1 \, x}=1/2 &  \text{if } \ x\leq -1,\\
P_{x+1 \, x}=1/2 &  \text{if } \ x\geq 2,
\end{cases}.
\ee
These relations define the map $N$ (\ref{mapn}),
and the diagonal elements are determined by the stochasticity condition. This yields the infinite matrix, where the dots mark the main diagonal, with underlined $00$ element $2/3$:
\be\label{Pinf}
P = 
\begin{pmatrix}
\ddots &  1/2 & & & & & & & \\
 & 1/2 &  1/2 &  &  &  &  & &  \\
& & 1/2  &  1/2 &  &  &  &   &\\
&  & & 2/3 &  1/3 & 0 &  &  & \\
&  & & 0 & \underline{2/3} & 1/3 &  &  & \\
&  &  & 0 & 1/3 & 2/3 &  &  &\\
&  &  &  &  & 1/3 & 2/3 &  & \\
&  &  &  &  &  & 1/2 &  1/2 & \\
& & & & & & &1/2 & \ddots  \\
\end{pmatrix}
\ee

To this reducible matrix corresponds the situation depicted as Figure \ref{figdiminf}
\begin{figure}[h]
\centering
\begin{tikzpicture}
    \foreach \x in { -2, -1, 0, 1, 2, 3} {
        \node[circle, fill=white, draw, minimum size=0.8cm, inner sep=0pt] (\x) at (1.5*\x, 0)
        {\textcolor{black}{\x}};
    }
    
        \foreach \x in { -2, -1} {
        \draw[->, line width=0.3mm] (\x) -- (\the\numexpr\x+1\relax);
    }
    \draw[<-, line width=0.3mm] (1) -- (2);
    \draw[<-, line width=0.3mm] (2) -- (3);
    \draw[->, line width=0.3mm] (1) to[bend left=30] (0);
     \draw[->, line width=0.3mm] (0) to[bend right=-30] (1);
    
    \node at (-4, 0) {\(\cdots\)};
    \node at (5.5, 0) {\(\cdots\)};
\end{tikzpicture}
\caption{\small The infinite graph $\Sigma$ associated with the stochastic matrix  $P$ (\ref{Pinf}). }
  \label{figdiminf}

\end{figure}

It is readily verified that the only probability vector $\pi\in l^1(\Z)$ such that $\pi P=\pi$ has nonzero components
\be\label{defpi3}
\pi_{0}=1/2, \ \ \pi_1=1/2.
\ee
Incidentally, this is the invariant probability vector of the central $2\times 2$ stochastic block 
\be\label{defF}
G=\begin{pmatrix} 2/3 & 1/3 \\ 1/3 & 2/3 
\end{pmatrix}.
\ee
\begin{lem}\label{tech}
There exists $\gamma >0$ and $c>0$ such that for any $x, y\in \Z$, there is an $N(x,y)\in \N$ so that for $n\geq N(x,y)$ it holds
\be
|P_{x y }^n-\pi_y|\leq ce^{-\gamma n},
\ee
with the understanding that $\pi_y=0$ for $y\leq -1$ or $y\geq 2$.
\end{lem}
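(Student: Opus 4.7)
The key structural observation is that $A = \{0, 1\}$ is absorbing for the Markov chain: inspection of \eqref{Pinf} shows $P_{xy} = 0$ for $x \in A$, $y \notin A$, so the $2\times 2$ bloc $G$ from \eqref{defF} governs the dynamics once $A$ is reached, while $\Z \setminus A$ consists of two transient tails on which the only nonzero off-diagonal transition $P_{x, N(x)}$ strictly drifts $x$ toward $A$ with probability $q_x \in \{1/3, 1/2\}$. Since $G$ is symmetric with eigenvalues $1$ and $1/3$, it admits the explicit spectral decomposition $G^m_{ij} - \pi_j = (\delta_{ij} - 1/2)(1/3)^m$ for $i, j \in A$ and $m \geq 0$, which already settles the lemma in the case $x, y \in A$ with rate $\log 3$.

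For $x \notin A$ the walker enters $A$ at the deterministic vertex $h(x) \in A$ (namely $h(x) = 0$ if $x \leq -1$ and $h(x) = 1$ if $x \geq 2$) at a random time $T_x = H_1 + \cdots + H_{K_x}$, where $K_x = |x - h(x)|$ and the $H_j$ are independent geometric holding times with success parameters bounded below uniformly by $1/3$. Hence $\E[e^{\lambda T_x}]$ is finite for every $\lambda < \log(3/2)$, and Chernoff's inequality yields $\P(T_x > n) \leq C_x e^{-\gamma_1 n}$ for any prescribed $\gamma_1 \in (0, \log(3/2))$.

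The main step treats $x \notin A$ and $y \in A$. Conditioning on $T_x$ and inserting the spectral formula for $G^{n-k}$,
\[
P^n_{xy} - \pi_y = -\pi_y\, \P(T_x > n) + (\delta_{h(x), y} - 1/2) \sum_{k = K_x}^{n} \P(T_x = k)\,(1/3)^{n-k}.
\]
The first term is $O(e^{-\gamma_1 n})$ by the preceding paragraph. The convolution sum is the main obstacle, because the naive bound using $\E[3^{T_x}]$ fails: $\log 3$ lies outside the domain of finite Laplace transform of $T_x$. The fix is to split at $k = \lfloor n/2 \rfloor$: for $k \leq n/2$ bound $(1/3)^{n-k} \leq (1/3)^{n/2}$ and use $\sum_k \P(T_x = k) \leq 1$, while for $k > n/2$ bound $\sum_{k > n/2} \P(T_x = k)(1/3)^{n-k} \leq \P(T_x > n/2) \leq C_x e^{-\gamma_1 n/2}$. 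Both pieces decay as $O(e^{-\gamma n})$ with $\gamma = \tfrac{1}{2}\min(\gamma_1, \log 3)$.

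For $y \notin A$ one has $\pi_y = 0$, so it suffices to bound $P^n_{xy}$. If $x$ and $y$ lie in opposite tails, or if $y$ lies strictly between $x$ and $A$ in the same tail, then monotonicity of the drift forces $P^n_{xy} \equiv 0$. Otherwise $\{M_n = y\}$ entails that the partial hitting-time sum from $x$ past $y$ exceeds $n$, and the same Chernoff argument produces $P^n_{xy} \leq C_{x,y} e^{-\gamma_1 n}$. Combining the three cases gives $|P^n_{xy} - \pi_y| \leq C_{x,y} e^{-\gamma n}$ pointwise in $x, y$; absorbing the $(x,y)$-dependent prefactor into the threshold $N(x, y)$ (by passing to any slightly smaller rate $\gamma' < \gamma$) yields the uniform constants $c, \gamma > 0$ asserted in the lemma.
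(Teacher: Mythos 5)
Your proof is correct in substance but follows a genuinely different route from the paper. The paper's Appendix D is purely algebraic: it splits $P=B+C$ with $C=\frac13(|-1\ket\bra 0|+|2\ket\bra 1|)$, exploits the nilpotency relations $C^2=0$, $C\sigma^n=0$, $G^nC=0$ to write $P^n=B^n+\sum_{k}\sigma^k C G^{n-k-1}$, and then estimates the resulting binomial convolutions explicitly via the identity $\sum_{k\ge j}z^k\binom{k}{j}=z^j(1-z)^{-j-1}+O(e^{-\gamma_z M})$. You instead use the probabilistic structure directly: $\{0,1\}$ is a closed class governed by $G$, the two tails are transient with deterministic drift toward $\{0,1\}$ and geometric holding times, so the first-passage time $T_x$ has exponential moments up to $\log(3/2)$, and the strong Markov property at $T_x$ plus the explicit spectral decomposition of $G$ reduces everything to the tail bound $\P(T_x>n)\le C_xe^{-\gamma_1 n}$ and a convolution $\sum_k\P(T_x=k)(1/3)^{n-k}$, which you correctly handle by splitting at $k=n/2$ (and you are right that the naive $\E[3^{T_x}]$ bound fails). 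The two arguments are essentially the algebraic and probabilistic faces of the same computation; yours is more conceptual and generalizes more readily to other drift rates, while the paper's gives the constants explicitly.

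One slip in the last paragraph: the two sub-cases for $y\notin\{0,1\}$ are swapped. If $y$ lies \emph{strictly between} $x$ and $\{0,1\}$ in the same tail (e.g.\ $x=5$, $y=3$), the walker passes through $y$ on its way in, so $P^n_{xy}>0$ for suitable $n$; this is precisely the case where your hitting-time/Chernoff bound applies. The case where $P^n_{xy}\equiv 0$ is the opposite one: $y$ farther from $\{0,1\}$ than $x$ in the same tail (i.e.\ $y>x\ge 2$ or $y<x\le -1$), together with opposite tails and $x\in\{0,1\}$, $y\notin\{0,1\}$, since the drift never moves the walker away from $\{0,1\}$. Both correct ingredients are present in your text, just attached to the wrong sub-cases; interchanging them completes the argument.
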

The proof of this technical lemma is provided in Appendix D.

As a consequence, thanks to Theorem \ref{dynpsiV}, we have in this infinite dimensional example
\be
\Psi_{\cS}^n(\cdot)=\sum_{y=0}^1 \frac12|\chi(y)\ket\bra \chi(y)| \, \tr(\cdot) + O(e^{-\gamma n}),
\ee
in where $\chi(y)$ are defined by \eqref{chiN}, in the weak sense. This result is in keeping with the finite dimensional result Lemma \ref{piconnected}.

\medskip

\noindent {\bf Acknowledgment: } This work is partially supported by the French National Research Agency in the framework of the
"France 2030" program (ANR-11-LABX-0025-01) for the LabEx PERSYVAL, and by the grant Dynacqus ANR-24-CE40-5714-02.

\section{Appendices}

{\bf Appendix A:} This appendix is devoted to the proof of Theorem \ref{findim}.
We denote $U_{\alpha}$ by $U$ for short below, and  we occasionally drop the identity symbols in expressions like $F_{jj}-z\un_j$, $z\in \C$, writing $F_{jj}-z$ instead.

\begin{proof}: 
Let us assume that $U \psi=\lambda\psi$, $\lambda\in {\mathbb S}^1$, $\psi\neq0$.  With $\psi= \begin{pmatrix}\psi_1 \\ \psi_2\end{pmatrix}$, where $\psi_1=\Pi \psi$, $\psi_2=(\un -\Pi)\psi$,  and 
(\ref{decomp}), (\ref{Calpha}), we have, equivalently
\be\label{eve}
\left\{\begin{matrix} 
F_{11}\psi_1+\e^{i\alpha}F_{12}\psi_2=\lambda \psi_1 \phantom{i}\\
F_{21}\psi_1+\e^{i\alpha}F_{22}\psi_2=\lambda \psi_2 .  
\end{matrix}\right.
\ee
Composing the eigenvalue equation with $F$, we obtain similarly
\be\label{feve}
\left\{\begin{matrix} 
\lambda F_{11}\psi_1+\lambda F_{12}\psi_2= \psi_1 \phantom{xxi}\\
\lambda F_{21}\psi_1+\lambda F_{22}\psi_2=\e^{i\alpha} \psi_2 .  
\end{matrix}\right.
\ee
Expressing $F_{12}\psi_2$ from the first equation \eqref{feve} and inserting in the first equation \eqref{eve}, we get
\begin{align}\label{sp11}
F_{11}\psi_1=\frac{\lambda^2-\e^{i\alpha}}{\lambda(1-\e^{i\alpha})} \psi_1=\mu \psi_1,
\end{align}
while using the last equations \eqref{feve},  \eqref{eve} to eliminate  $F_{21}\psi_1$ yields similarly
\begin{align}\label{sp22}
F_{22}\psi_2=-\frac{\lambda^2-\e^{i\alpha}}{\lambda(1-\e^{i\alpha})} \psi_2=-\mu \psi_2.
\end{align}
We also note that $\psi_1=0$ implies  $\psi_2\neq 0$ and, with $F_{22}=F_{22}^*$,
\be
F_{22}\psi_2=\e^{-i\alpha}\lambda \psi_2=\overline{\lambda} \e^{i\alpha}\psi_2 \  \Rightarrow \ \lambda = \pm \e^{i\alpha} \ \& \ \mu = \mp 1.
\ee 
Similarly, $\psi_2=0$ implies  $\psi_1\neq 0$ and 
\be
F_{11}\psi_1=\lambda \psi_1=\overline{\lambda} \psi_1 \  \Rightarrow \ \lambda = \mu = \pm 1.
\ee 
Assume $\lambda\not\in\{ \pm \e^{i\alpha}\}$. Since $\psi_1\neq 0$ by our choice of $\lambda$, (\ref{sp11}) implies $\mu\in \sigma(F_{11})$. Also, if $\lambda\not\in\{\pm1\}$, then $\psi_2\neq 0$ so that by (\ref{sp22}), $-\mu\in \sigma(F_{22})$.
\medskip

Consider here $\alpha\in (-\pi, \pi)\setminus\{0\}$ and suppose $\lambda=\pm1$. Then $\mu=\pm1$, so that $\| F_{11}\psi_1\|=\|\psi_1\|\neq 0$. Since $F$ is unitary and $\Pi, \un -\Pi$ are orthogonal projectors, we deduce that
\be\label{propunitF}
\|\psi_1\|^2=\|F\psi_1\|^2=\|F_{11}\psi_1\|^2+\|F_{21}\psi_1\|^2=\|\psi_1\|^2+\|F_{21}\psi_1\|^2 \ \Rightarrow \ F_{21}\psi_1=0.
\ee
Hence, the second equation (\ref{eve}) implies $\psi_{2}=0$, since $F_{22}\mp e^{-i\alpha}\un_2 $ is invertible, as $F_{22}=F_{22}^*$. Altogether, $U\psi=\pm\psi$ implies $\psi=\Pi \psi=\psi_1\neq 0$ and $F_{11}\psi_1=\pm \psi_1$, {\it i.e.} 
\be\label{1incl}
\ker (U\mp\un)\subset \ker (F_{11}\mp\un_1).
\ee
If  $\lambda=\pm \e^{i\alpha}$, then $\mu=\mp 1$ and $\psi_2\neq 0$. Hence $\pm1\in \sigma(F_{22})$, and by an argument similar to the one above, $F_{12}\psi_2=0$, so that $\psi_1=0$ by the first equation (\ref{eve}), since $F_{11}\mp e^{i\alpha}\un_1$ is invertible. Hence, $U\psi=\pm \e^{i\alpha}\psi$ implies $\psi=(\un-\Pi)\psi=\psi_2$ and  $F_{22}\psi_2=\pm\psi_2$, {\it i.e. } 
\be\label{2incl}
\ker (U\mp\e^{i\alpha}\un)\subset \ker (F_{22}\mp \un_2).
\ee

Conversely, assume
\be
F_{11}\psi_1=\mu \psi_1,\ee 
with, for now, $\mu  \in (-1,1)$, $\psi_1\neq 0$. Then $\phi_{\alpha}^{-1}(\{\mu\})=\{\lambda_+, \lambda_-\}$, where $\lambda_\epsilon\neq \pm \e^{i\alpha}$, $\epsilon\in\{+, -\}$. We construct two eigenvectors of $U$ associated with $\lambda_+$ and $\lambda_-$ respectively, as follows. With $\lambda$ denoting $\lambda_+$ or $\lambda_-$, we set 
\be\label{defpsi2}
\psi_2=-\e^{-i\alpha}(F_{22}-\lambda\e^{-i\alpha})^{-1}F_{21}\psi_1,
\ee
which is well defined and consider
\be\label{conev}
\psi=\psi_1+\psi_2=\big(\un -\e^{-i\alpha}(F_{22}-\lambda\e^{-i\alpha})^{-1}F_{21}\big)\psi_1.
\ee 
Therefore, using $F_{22}\psi_2=-\e^{-i\alpha}F_{21}\psi_1+\lambda \e^{-i\alpha}\psi_2$,
\begin{align}\label{prepev}
U\psi & = F_{11}\psi_1+\e^{i\alpha}F_{12}\psi_2+ F_{21}\psi_1+\e^{i\alpha}F_{22}\psi_2\nonumber \\
&= \mu\psi_1 +\lambda \psi_2 -F_{12}(F_{22}-\lambda\e^{-i\alpha})^{-1}F_{21}\psi_1.
\end{align}
It thus remains to show that $-F_{12}(F_{22}-\lambda\e^{-i\alpha})^{-1}F_{21}\psi_1=(\lambda-\mu)\psi_1$ to get eventually $U\psi=\lambda\psi$.
We proceed by considering the Schur complement of the $22$ block of the invertible operator
\be
F-\lambda\e^{-i\alpha}=\begin{pmatrix} F_{11} -\lambda\e^{-i\alpha}& F_{12} \\ F_{21} & F_{22}-\lambda\e^{-i\alpha}\end{pmatrix}, 
\ee
given by 
\be\label{schur}
S_{1}=F_{11} -\lambda\e^{-i\alpha} - F_{12}(F_{22} -\lambda\e^{-i\alpha})^{-1}F_{21} : \cH_1 \ra \cH_1.
\ee
Recall that since $F-\lambda\e^{-i\alpha}$ and $F_{11} -\lambda\e^{-i\alpha}$ are both invertible, $S_{1}$ is invertible as well and such that
\begin{align}
(F-\lambda\e^{-i\alpha})^{-1}=\begin{pmatrix} S_{1}^{-1} & * \\ * & *
\end{pmatrix},
\end{align}
where  a $*$ denotes a block we do not need to know here. Then, since the resolvent of $F=F^*=F^{-1}$ reads  for $z\neq \pm1$
\be\label{invfz}
(F-z)^{-1}=\frac{F+z}{1-z^2},
\ee
we get an alternative expression for $S_{1}^{-1}$ by considering 
\begin{align}
\frac{F+\lambda\e^{-i\alpha}}{1-\lambda^2\e^{-2i\alpha}}=\begin{pmatrix} S_{1}^{-1} & * \\ * & *
\end{pmatrix},
\end{align}
which yields
\be\label{altschur}
S_{1}=(1-\lambda^2\e^{-2i\alpha})(F_{11}+\lambda \e^{-i\alpha})^{-1}.
\ee
Altogether, we deduce from \eqref{schur} and \eqref{altschur}
\be
 - F_{12}(F_{22} -\lambda\e^{-i\alpha})^{-1}F_{21} =(1-\lambda^2\e^{-2i\alpha})(F_{11}+\lambda \e^{-i\alpha})^{-1}-(F_{11} -\lambda\e^{-i\alpha}),
\ee
which applied to $\psi_1$ gives
\begin{align}
 - F_{12}(F_{22} -\lambda\e^{-i\alpha})^{-1}F_{21} \psi_1&=\frac{(1-\lambda^2\e^{-2i\alpha})}{\mu+\lambda \e^{-i\alpha}}\psi_1-(\mu-\lambda \e^{-i\alpha})\psi_1\nonumber\\
 &=-\mu \psi_1+\frac{1+\lambda \e^{-i\alpha}\mu}{\mu+\lambda \e^{-i\alpha}}\psi_1.
 \end{align}
 Using $\mu = \ffi_\alpha(\lambda)$ under the form $\e^{-i\alpha}\lambda^2 + \mu \lambda =1+\e^{-i\alpha} \mu \lambda$, the last fraction reduces to $\lambda$ which, together with \eqref{prepev}, shows $U\psi=\lambda\psi$. 
 
 Also, if $\dim \ker (F_{11}-\mu \un_1)=m_1^\mu>1$, and $\{\psi_1^{(1)}, \dots, \psi_1^{(m_1^\mu)}\}$ form a basis of this eigenspace, the previous construction yields for each $\psi_1^{(j)}$ two eigenvectors of $U$ associated with $\lambda_-$ and $\lambda_+$ denoted by $\psi_\pm^{(j)}$ via (\ref{conev}). Then $\{\psi_\pm^{(1)}, \dots, \psi_\pm^{(m_1^\mu)}\}$ are linearly independent since their projections $\{\Pi \psi_\pm^{(j)}\}$ are, so that
 \be\label{dimker1}
 \dim \ker (U-\lambda_\pm\un)\geq \dim \ker (F_{11}-\mu \un_1).
 \ee
 
With $\mu\in (-1,1)$ again, so that $\lambda_\pm\not\in \{\pm 1\}$, we can start from 
\be
F_{22}\psi_2=-\mu \psi_2
\ee 
with $\psi_2\neq 0$ to construct eigenvectors for $U$ associated with $\lambda\in\{\lambda_-, \lambda_+\}=\ffi_\alpha^{-1}(\{\mu\})$ in a similar way as above. Indeed, setting
\be
\psi_1=-\e^{i\alpha}(F_{11}-\lambda)^{-1}F_{12}\psi_2,
\ee
which is well defined, we check using the Schur complement of the $11$ block of the invertible operator $F-\lambda$, (\ref{invfz}), and the relation $\ffi_\alpha(\mu)=\lambda$, that  
\be\label{conev2}
\psi=\psi_1+\psi_2=\big(\un -\e^{-i\alpha}(F_{11}-\lambda)^{-1}F_{12}\big)\psi_2
\ee 
satisfies $U\psi=\lambda\psi$, for $\lambda=\lambda_-$ and $\lambda=\lambda_+$. And by an argument similar to that just given,
\be\label{dimker2}
 \dim \ker (U-\lambda_\pm\un)\geq \dim \ker (F_{22}+\mu \un_2).
\ee
  
We consider now the cases $\mu=\pm1$ excluded above, assuming $F_{11}\psi_1=\pm \psi_1\neq 0$. Equation (\ref{propunitF}) implies $F_{21}\psi_1=0$, and we get from 
 (\ref{prepev}) $U\psi_1=F_{11}\psi_1=\pm\psi_1$, {\it i.e.}  $\lambda_\pm=\pm 1 \in \sigma(U)$, where $\pm 1\in \ffi_\alpha^{-1}(\{\pm 1\})$. This yields
 \be\label{inc1} 
 \ker (F_{11}\mp\un_1) \subset \ker (U\mp\un).
 \ee
 Note that the vector $\psi_2$ given in (\ref{defpsi2}) is well defined  for $\lambda_\pm=\pm 1$ and equals zero, as it should. 
 
Next, we consider $\psi_2\in \ker (F_{22}\mp \un_2)$. The equation corresponding to (\ref{propunitF}) with indices exchanged yields $F_{12}\psi_2=0$, so that
(\ref{prepev}) implies $U\psi_2=\pm\e^{i\alpha}\psi_2$ and
\be\label{inc2} 
 \ker (F_{22}\mp \un_2)\subset  \ker (U\mp \e^{i\alpha}\un).
\ee
Together with (\ref{1incl}) and (\ref{2incl}), the last two inclusions prove (\ref{eqker}).

At this point, we note with (\ref{dimprojpi}) that
\begin{align}
&\sum_{\mu\in\sigma(F_{11})}\dim\ker (F_{11}-\mu)=\dim \Pi=|V|\nonumber\\
&\sum_{\mu\in\sigma(F_{22})}\dim\ker (F_{22}-\mu)=\dim(\un -\Pi)=|D|-|V|,
\end{align}
and that (\ref{dimker1}), (\ref{dimker2}) yield for $\mu\in(-1,1)$
\be\label{ineqker2}
 \dim \ker (U-\lambda_-\un)+ \dim \ker (U-\lambda_+\un)\geq \dim \ker (F_{11}-\mu \un_2)+\dim \ker (F_{22}+\mu \un_2).
\ee
With (\ref{eqker}), we deduce
\begin{align}\label{argue}
|D|&=\sum_{\lambda\in\sigma(U)}\dim\ker (U-\lambda) \\
&= \sum_{\lambda\in\sigma(U)\setminus\{\pm1, \pm\e^{i\alpha}\}}\dim\ker (U-\lambda)+\sum_{\lambda\in\{\pm1, \pm\e^{i\alpha}\}}\dim\ker (U-\lambda)  \nonumber\\
&\geq \sum_{\mu\in\sigma(F_{11})}\dim\ker (F_{11}-\mu)+\sum_{\mu\in\sigma(F_{22})}\dim\ker (F_{22}-\mu)=|D|. \nonumber
\end{align}
Therefore, we have equality in (\ref{ineqker2}) for all $\mu\in (-1,1)$. In turn, with (\ref{dimker1}), (\ref{dimker2}) this implies 
\be
\dim \ker (U-\lambda_\pm\un)=\dim (\ker F_{11}-\mu \un_2)=\dim (\ker F_{22}+\mu \un_2).
\ee
This ends the proof in the case $\alpha\in (-\pi,\pi)\setminus\{0\}$. 
\medskip

When $\alpha =\pi$, the potential eigenvalues $\pm\e^{i\alpha}$ and $\mp1$ for $U$ are degenerate, which requires revisiting the argument. 
Assuming $U\psi=\lambda\psi$ with $\psi\neq 0$ and $\lambda=\pm1$, we do not have necessarily $\psi_1\neq 0$. However we derive from (\ref{eve}) and (\ref{feve}) that 
\be
F_{11}\psi_1\mp\psi_1=0 \ \ \text{and}\ \
F_{22}\psi_2\pm\psi_2=0,
\ee
which implies
\be
\Pi \ker(U\mp\un)\subset \ker (F_{11}\mp\un_1), \ \ \text{and} \ \ (\un-\Pi) \ker(U\mp\un)\subset \ker (F_{22}\mp\un_2).
\ee
Conversely, if $0\neq \psi_1\in  \ker (F_{11}\mp\un_1)$, then $F_{21}\psi_1=0$, see \eqref{propunitF}, so that $U\psi_1=\pm\psi_1.$ Similarly, $0\neq \psi_2\in  \ker (F_{22}\mp\un_2)$ implies $U\psi_2=\mp\psi_2$. In other words,
\be
\ker (F_{11}\mp\un_1)\subset  \ker(U\mp\un), \ \ \text{and} \ \ \ker (F_{22}\pm\un_2)\subset \ker(U\mp\un),
\ee
where the subspaces $\ker (F_{11}\mp\un_1)$ and $\ker (F_{22}\pm\un_2)$ are orthogonal. Hence the validity of (\ref{eqkerpi}).
Since \eqref{ineqker2} also holds for $\alpha=\pi$ and $-1<\mu<1$, we can argue as \eqref{argue} to deduce the validity of (\ref{30}) for $\alpha=\pi$.
\end{proof}\qed

\medskip 

{\bf Appendix B:} Let us prove Lemma \ref{lemwoes}. 
We first observe that (\ref{phidmat}) implies
\be\label{evenonly}
\Phi^{\rm Diag}(e_{2x+1})=\Phi^{\rm Diag}(e_{2(x+2)}),
\ee
which allows us to focus on ${\Phi^{\rm Diag}}^n(e_{2x})$, for $n\in \N$. Consequently,
\be
{\Phi^{\rm Diag}}^2(e_{2x})=\frac12\big({\Phi^{\rm Diag}}(e_{2(x-1)})+{\Phi^{\rm Diag}}(e_{2(x+1)})\big).
\ee
By induction one gets for any $n\in N$
\be
{\Phi^{\rm Diag}}^n(e_{2x})=\frac1{2^{n-1}}\sum_{k=0}^{n-1}\binom{n-1}{k} {\Phi^{\rm Diag}}(e_{2(x-(n-1)+2k)}),
\ee
where 
\be
{\Phi^{\rm Diag}}(e_{2(x-(n-1)+2k)})=\frac12\big(e_{2(x-n+2k)} + e_{2(x-n+2k)+1} \big).
\ee
Therefore, the matrix elements of ${\Phi^{\rm Diag}}$ read, using \eqref{evenonly},
\begin{align}
&{\Phi^{\rm Diag}}^n_{2y \,2x}={\Phi^{\rm Diag}}^n_{2y+1\, 2x}=\frac1{2^n}\binom{n-1}{\frac{n+y-x}{2}} \nonumber\\
&{\Phi^{\rm Diag}}^n_{2y\,  2x+1}={\Phi^{\rm Diag}}^n_{2y\,  2x+1}=\frac1{2^n}\binom{n-1}{\frac{n+y-x}{2}-1}.
\end{align}
with the understanding that $\binom{r}{s}=0$ if $r<s$, or $(r,s) \not \in \N\times \N$.   
It remains to establish the asymptotics for $n$ large of the right hand sides. Consider the (nonzero) binomial coefficient
\be
\binom{N}{\frac{N+\Delta}{2}}=\frac{N!}{\frac{N+\Delta}{2}!\frac{N-\Delta}{2}!},
\ee
for $\Delta$ fixed, as $N\ra\infty$, with $N$ and $\Delta$ of identical parity. Using Stirling's formula $N!=\sqrt{2\pi N}(N/e)^N(1+o(1))$, we infer
\be
\frac{1}{2^{N+1}}\binom{N}{\frac{N+\Delta}{2}}=\frac{1}{\sqrt{2\pi N}}+o(1)\leq c/\sqrt{N}, \ \ \forall N\geq N_0(\Delta),
\ee
which proves the statement. \qed

\medskip 

{\bf Appendix C:}
This appendix sketches an alternative spectral argument to show Corollary \ref{corpsin}.  
Since the stochastic matrix $P$ satisfies  $\|P^n\|_{\infty}\leq 1$ for all $n\in \N$, where $\| M \|_\infty=\max_{1\leq i\leq d}\sum_{j=1}^n |M_{ij}|$, for any $M\in M_{d}(\C)$, and admits  ${\bf 1}$ as an invariant vector, it has spectral radius $1$. Moreover, all its eigenvalues of modulus 1 are semi-simple. In case the eigenvalue 1 is simple, the rank one matrix ${\bf 1} \pi$ is the corresponding spectral projector such that $\pi P=\pi$. Since the eigenvalues of $P$ have either modulus strictly smaller than 1, or have modulus one and are semi-simple, the Ces\`aro mean (\ref{Cesaropn}) holds thanks to functional calculus, see \cite{Ka}. In case no other eigenvalue than 1 has modulus 1, one gets $P^n={\bf 1} \pi+O(e^{-\gamma n})$, $\gamma>0$, by functional calculus again. Finally, noting that $P^n$ and its Ces\`aro mean are stochastic matrices, the $xy$ matrix element of (\ref{Cesaropn}) satisfies $0\leq \frac1N\sum_{n=0}^{N-1}P^n_{xy}=\pi_y+O(N^{-1})$, which yields $\pi_y\geq 0$, in the limit $N\ra \infty.$

When $\dim \ker (P-1)>1$,  the same result holds, with the spectral projector $P_{\{1\}}$ onto $ \ker (P-1)$  in place of ${\bf 1} \pi$.
\medskip

{\bf Appendix D:} We prove Lemma \ref{tech} here. 
Let us decompose the operator \eqref{Pinf} as 
\be\label{defC}
P=B+C, \ \text{ where } \  C=\frac13(|-1\ket\bra 0|+|2\ket\bra 1|),
\ee
and $B$  has the block decomposition 
\be
B=\begin{pmatrix} \sigma_- & 0 & 0 \\ 0 & G & 0\\ 0 & 0 & \sigma_+
\end{pmatrix} = \sigma + G.
\ee
Here $G$ is defined in (\ref{defF}) and $\sigma_\pm$ correspond, up to a factor $1/2$ and except for one diagonal element, to the identity plus a one sided shift. 
The immediate properties
\be
\sigma G = G\sigma = 0, \ \ C^2=0, \  \ C\sigma ^n=0, \ \ G^n C=0,
\ee
yield
\be\label{decsum}
P^n=(B+C)^n 
=B^n+\sum_{k=0}^{n-1}\sigma^kCG^{n-(k+1)}.
\ee
Moreover, this implies that the general term of the sum satisfies
\be
\bra x | \sigma^kCG^{n-(k+1)} y\ket =0 \ \ \text{if} \ \ 0\leq x \leq 1 \ \text{or} \ y\in \{0, 1\}^{C},
\ee
which with  the block structure of $B^n$ shows that $P^n_{x y}=0$ for $x\geq 0$ and $y\leq -1$, or for $x\leq 1$ and $y\geq 2$.

We check by induction that for $y\geq 3$
\be\label{decal}
P^n |y\ket = B^n |y\ket=\sigma^n |y\ket= \frac{1}{2^n}\sum_{k=0}^n\begin{pmatrix}n \\ k\end{pmatrix} |y+k\ket,
\ee 
while for $y\leq -2$, the same formula holds with $|y-k\ket$ in place of $|y+k\ket$. Hence 
$P^n_{x y}=0$ for $x<y$ and $y\geq 3$, and for $x>y$ and $y\leq -2$.

For $y=2$, the expression is a little different due to the value $2/3$ of the entry $22$  of $\sigma_+$. By induction
\be\label{morbin}
P^n|2\ket=\Big(\frac23\Big)^n\Big\{|2\ket+\frac12\sum_{k=0}^{n-1} \Big(\frac23\Big)^{-k-1} P^k|3\ket\Big\},
\ee
while the same expression holds for $P^n|-1\ket$, with $|-2\ket$ in place of $|3\ket$.

The spectral decomposition of the $2\times 2$ block $G$ (\ref{defF}) reads
\be\label{specdecG}
G=Q_{\{1\}}+\frac13 Q_{\{1/3\}},  
\ee
where $\sigma(G)=\{1, 1/3\}$ and  the corresponding eigenprojectors are $Q_{\{1\}}= \frac12|{\bf 1}\ket\bra {\bf 1} |= {\bf 1} \pi $, with $\pi$ given in  (\ref{defpi3}), and 
$Q_{\{1/3\}}=\un -Q_{\{1\}}$.

\medskip

This yields for $0\leq x\leq 1, 0\leq y\leq 1$
\be
P^n_{x y}=B^n_{x y}= G^n_{x y} = {Q_{\{1\}}}_{x y}+O(e^{-\gamma n})=\pi_y+O(e^{-\gamma n}),
\ee
for $0< \gamma \leq  \ln 3$. For $x, y \geq 3$, with $x\geq y$,
\be
P^n_{x y}=\sigma^n_{x y}=\frac{1}{2^n}\begin{pmatrix}n \\ x-y\end{pmatrix} \leq \frac{1}{2^n} \frac{n^{x-y}}{(x-y)!}\leq  e^{-\gamma n},
\ee
for $0<\gamma <\ln 2$, and $n\geq N'(x-y)$, for some $N'(x-y)>0$. A similar bound holds in case $x, y \leq -2$. Equation (\ref{morbin}) implies 
$P^n_{2 \, 2}= P^n_{-1 \, -1 }=(2/3)^n\leq e^{-n\ln(3/2)}$, and together with (\ref{decal})
 for $x\geq 3$, 
\begin{align}\label{xg3}
P^n_{x 2}&= \Big(\frac23\Big)^n\frac12\sum_{k=0}^{n-1} \Big(\frac23\Big)^{-k-1} P^k_{x 3}=  \Big(\frac23\Big)^{n-1}\frac12\sum_{k=x-3}^{n-1} \Big(\frac34\Big)^{k}\binom{k}{x-3}.
\end{align}
Thanks to the relation for any $z\in \C$, s.t. $|z|<1$, $j\geq 0$,
\be\label{limbinoz}
\left| \sum_{k=j}^{M}z^k\begin{pmatrix}k \\ j\end{pmatrix} - \frac{z^j}{(1-z)^{j+1}} \right| \leq ce^{-\gamma_z M},
\ee
for $0<\gamma_z < |\ln |z||$, $c=c(\gamma_z)$ and $M\geq N_0(j)$, we eventually obtain for $n\geq N''(x)$,
\be
P^n_{x 2}= \Big(\frac23\Big)^{n}(3^{x-2}+O(e^{-\gamma_z n}))\leq (3^{x-2}+1) e^{-n\ln(3/2)}\leq C_0 e^{-\gamma n},
\ee
where $0<\gamma<\ln(3/2)$, and $C_0$ is independent of $x$. Note that the error term can be made uniform in $x$ 
by lowering the value of $\gamma$, and choosing $n$ large enough, in an $x$ dependent way. 
A similar estimate holds for $P^n_{x\, -1}$, $x\leq -2$.

We are left to consider the sum in (\ref{decsum}) for $x\in \{0,1\}^{C}$ and $y\in \{0, 1\}$.  Assume $x\geq 2$, the case $x \leq -1$ being similar.\\
For  $x\geq 3$, we have thanks to (\ref{defC}) and (\ref{xg3})
\begin{align}\label{leftsum}
\sum_{k=0}^{n-1}\bra x | \sigma^kCG^{n-(k+1)} y\ket&=\sum_{k=x-2}^{n-1}\frac13 \Big(\frac{2}{3}\Big)^{k-1}\frac12\sum_{l=x-3}^{k-1} \Big(\frac{3}{4}\Big)^{l}\begin{pmatrix}l \\ x-3\end{pmatrix}\bra 1 |G^{n-(k+1)} y\ket.
\end{align}
Inserting the spectral decomposition of $G$ into the scalar product, we get $\bra 1 | Q_{\{1\}} y\ket =\pi_y$ for the contribution stemming from the eigenprojector $Q_{\{1\}}$.
Exchanging the summation indices, the remaining double sum reads
\be
\frac14\sum_{l=x-3}^{n-2}  \Big(\frac{3}{4}\Big)^{l}\begin{pmatrix}l \\ x-3\end{pmatrix}\sum_{k=l+1}^{n-1}\Big(\frac{2}{3}\Big)^{k}=
\frac34\sum_{l=x-3}^{n-2}  \Big(\frac{3}{4}\Big)^{l}\begin{pmatrix}l \\ x-3\end{pmatrix}\Big\{\Big(\frac{2}{3}\Big)^{l+1}-\Big(\frac{2}{3}\Big)^{n}\Big\}.
\ee
Using (\ref{limbinoz}), the first contribution equals $1+O(e^{-\gamma n})$ while, the second is $O(e^{-\gamma n})$, for some $0<\gamma<\ln(3/2)$. Let us show that the contribution from the second spectral projector to (\ref{leftsum}) is exponentially decreasing. The corresponding scalar product in the right hand side of the sum reads
\be
\frac{3^{(k+1)}}{3^{n}} \bra 1| Q_{\{1/3\}} y\ket
\ee 
so that performing the same steps as above, the double sum is now of order
\begin{align}
\frac1{3^{n}} \sum_{l=x-3}^{n-2}  \Big(\frac{3}{4}\Big)^{l}\begin{pmatrix}l \\ x-3\end{pmatrix}\sum_{k=l+1}^{n-1}2^k&=\frac1{3^{n}} \sum_{l=x-3}^{n-2}  \Big(\frac{3}{4}\Big)^{l}\begin{pmatrix}l \\ x-3\end{pmatrix}(2^n-2^{l+1})\nonumber\\
&\leq \frac1{3^{n}} \sum_{l=x-3}^{n-2}  \Big(\frac{3}{4}\Big)^{l}\begin{pmatrix}l \\ x-3\end{pmatrix}2^n.
\end{align}
The last series is of order $(2/3)^n$ by  (\ref{limbinoz}), so that  we eventually get for $x\geq 3$, $y\in \{0,1\}$
\be
P^n_{x y}=\pi_y+ O(e^{-\gamma n}),
\ee
where $0<\gamma < \ln 2$, for $n\geq N'''(x)$. 

Eventually, we address the entry $P^n_{2y}$, $y\in \{0, 1\}$, which reads thanks to (\ref{decal}), (\ref{morbin}) and (\ref{specdecG})
\begin{align}
P^n_{2y}&=\sum_{k=0}^{n-1}\frac13\bra 2 | \sigma^k 2\ket \Big(\bra 1 | Q_{\{1\}}y \ket + \frac1{3^{n-(k+1)}}\bra 1 |  Q_{\{1/3\}}y\ket\Big)\nonumber\\
&=\sum_{k=0}^{n-1}\frac13 \Big(\frac23\Big)^k\Big(\pi_y + \frac{3^{k+1}}{3^{n}}\bra 1 |  Q_{\{1/3\}}y\ket\Big).
\end{align}
Computing the geometric series, one gets $P^n_{2y}=\pi_y+O(e^{-\gamma n})$, for $\gamma >0$, which finishes the proof of the lemma.
\qed


\begin{thebibliography}{99}

\bibitem[AAKV]{AAKV} A. Ambainis, D. Aharonov, J. Kempe, U. Vazirani, Quantum Walks on Graphs, {\it Proc. 33rd ACM STOC}, 50-59 (2001)

\bibitem[AAM+]{AAM+} A. Ahlbrecht, A. Alberti, D. Meschede, V. B. Scholz, A. H. Werner, and R. F. Werner, Molecular binding in interacting quantum walks, {\it New Journal of Physics}, {\bf 14}, 073050, (2012).


\bibitem[AJR]{AJR} S. Andr\'eys, A. Joye,  R. Raqu\'epas, Fermionic walkers driven out of equilibrium, 
{\it J. Stat. Phys.}, {\bf 184}, 14, (2021)


\bibitem[ABJ1]{ABJ1} Asch, J., Bourget, O., Joye, A.,  Localization Properties of the Chalker-Coddington Model. {\it Ann. H. Poincar\'e}, {\bf 11}, 1341--1373, (2010).

\bibitem[ABJ2]{ABJ2} Asch, J., Bourget, O., Joye, A., Dynamical Localization of the Chalker-Coddington Model far from Transition, {\it J. Stat. Phys.}, {\bf 147}, 194-205 (2012).

\bibitem[ABJ3]{ABJ3} Asch, J., Bourget, O., Joye, A., Spectral Stability of Unitary Network Models, {\it Rev. Math. Phys.}, {\bf 27}, 1530004, (2015).

\bibitem[ABJ4]{ABJ4} Asch, J., Bourget, O., Joye, A., Chirality induced Interface Currents in the Chalker Coddington Model ", 
{\it J. Spectral Theory}, {\bf 9}, 405-1429, (2019).

\bibitem[ABJ5]{ABJ5}Asch, J., Bourget, O., Joye, A., On stable quantum currents, {\it J. Math. Phys.}, {\bf 61}, 092104, (2020)

\bibitem[A]{A} Attal, S., Lectures in Quantum Noise Theory, Chapter 6: Quantum Channels, { \texttt http://math.univ-lyon1.fr/~attal/chapters.html}

\bibitem[AG-PS]{AG-PS} Attal, S., Guillotin-Plantard, N., and Sabot, C., Central limit theorems for open quantum random walks and quantum measurement records, {\it Ann. H. Poincar\'e} {\bf 16} 15-43, (2015).

\bibitem[AJP]{AJP} "Open Quantum Systems", Edt. by Attal, S., Joye, A., Pillet, C.-A., 
Springer {\it Lecture Notes in Mathematics}, {\bf 1880, 1881 \& 1882}, (2006). 


\bibitem[AL]{AL} R. Alicki, K. Lendi, Quantum Dynamical Semigroups and Applications, {\it Lect. Notes Phys.} {\bf 717} Springer (2007)

\bibitem[APSS1]{APSS1}
S Attal, F Petruccione, C. Sabot, I. Sinayskiy, Open quantum random walks,
{\it J. Stat. Phys.} {\bf 147} (4), 832-852, (2012).


\bibitem[ASW]{ASW} A. Ahlbrecht, V. B. Scholz, A. H. Werner, Disordered quantum walks in one lattice dimension, {\it J. Math. Phys.} {\bf 52} (2011).

\bibitem[BFS]{BFS} Bach, V., Fr\"ohlich, J., Sigal, I. M., Quantum Electrodynamics of Confined Nonrelativistic Particles, {\it Adv. Math.} {\bf 137}, 299–395 (1998).
  

\bibitem[BHJ]{BHJ} Bourget, O., Howland, J.S., Joye, A.: Spectral Analysis of Unitary Band Matrices, {\it Commun. Math. Phys.} {\bf 234}, 191-227 (2003).

\bibitem[BJPP]{BJPP} T. Benoist, V. Jaksic, Y. Pautrat, C.-A. Pillet,  On entropy production of repeated quantum measurements I. General theory. {\it Commun. Math. Phys.},  {\bf 357}, 77-123. (2018)


\bibitem[BM]{BM} H. Boumaza, L. Marin, Absence of absolutely continuous spectrum for random scattering zippers. {\it J. Math. Phys.}, {\bf 56}, 2015.

\bibitem[CC]{CC} Chalker, J. T., Coddington, P. D. Percolation, quantum tunnelling and the integer Hall effect. {\it J. Phys. C: Solid State Physics}, {\bf 21}, 2665, (1988).

\bibitem[C]{C} A. M. Childs, Universal Computation by Quantum Walk, {\it Phys. Rev. Lett.} {\bf 102}, 180501 (2009).

\bibitem[CFGW]{CFGW} C. Cedzich, J. Fillman, T. Geib, A.H. Werner, Singular continuous Cantor spectrum for magnetic quantum walks
{\it  Lett. Math. Phys.}, {\bf 110} 1141-1158, (2020) .

\bibitem[CFO]{CFO} C. Cedzich, J. Fillman, D. C. Ong, Almost Everything About the Unitary Almost Mathieu Operator
{\it  Commun.Math.Phys.} {\bf 403} , 745-794, (2023)

\bibitem[CFL+]{CFL+} C. Cedzich, J. Fillman, L. Li, D. C. Ong, Q. Zho, Exact Mobility Edges for Almost-Periodic CMV Matrices via Gauge Symmetries,
{\it Int. Math. Res. Not.} {\bf 8},  6906-6941, (2024).

\bibitem[CJWW]{CJWW}  C. Cedzich, A. Joye, A. H. Werner, R. F. Werner, Exponential Tail Estimates for Quantum Lattice Dynamics ", 
{\it arXiv}, arXiv:2408.02108v,  (2024)

\bibitem[DFT]{DFT} Delplace, P., Fruchart, M., Tauber, C. Phase rotation symmetry and the topology of oriented scattering networks. {\it Phys. Rev. B}, {\bf 95}, 205413, (2017).

\bibitem[CGG+]{CGG+} C Cedzich, T. Geib, F. A. Grünbaum, C. Stahl, L Velázquez, A. H. Werner, The topological classification of one-dimensional symmetric quantum walks, {\it Ann. H. Poincar\'e}, {\bf 19}, 325-383, (2018)

\bibitem[D]{D} P. Delplace ,Topological chiral modes in random networks,
\textit{SciPost Phys.} 8, 081 (2020) 

\bibitem[FH1]{FH1} E. Feldman, M. Hillery, Quantum walks on graphs and quantum scattering theory, in Coding Theory and Quantum Computing, Edt: D. Evans, J. Holt, C. Jones, K. Klintworth, B. Parshall, O. Pfister,
and H. Ward, \textit{ Contemp. Math.}, {\bf 381}, 71 (2005).

\bibitem[FH2]{FH2} E. Feldman, M. Hillery, Modifying quantum walks: a scattering theory approach, {\it J. Phys. A}, {\bf 40}, 11343 (2007)

\bibitem[GZ]{GZ} C. Godsil, H. Zhan, Discrete quantum walks on graphs and digraphs, London Math. Soc., LNS 484, 2023

\bibitem[G]{G} S. Gudder. Quantum Markov chains. {\it J. Math. Phys.}, {\bf 49} 072105, (2008).

\bibitem[HJ1]{HJ} E.Hamza, A.Joye : Spectral Transition for Random Quantum Walks on Trees,
{\it Commun. Math. Phys.},  {\bf 326}, 415-439, (2014).

\bibitem[HJ2]{HJ2} E.Hamza, A.Joye, Thermalization of Fermionic Quantum Walkers, {\it J.
Stat. Phys.}, {\bf 166}, 1365–1392, (2017).

\bibitem[HJS1]{HJS1} Hamza, E., Joye, A., Stolz, G.: Localization for Random Unitary Operators, {\it Lett. Math. Phys.}, {\bf 75}, 255-272, (2006).


\bibitem[HJS2]{HJS2} Hamza, E., Joye, A., Stolz, G.:  Dynamical Localization for Unitary Anderson Models" , 
{\it Mathematical Physics, Analysis and Geometry}, {\bf 12}, (2009), 381-444.

 
 
  \bibitem[HKSS2]{HKSS2} Y. Higuchi, N. Konno, I. Sato, E. Egawa, Spectral and asymptotic properties of Grover walks on crystal lattices, \textit{J. Fun. Ana.}, 267 (2014), 4197-4235.
 
 \bibitem[HS]{HS} Y. Higuchi, E. Segawa, Quantum walks induced by Dirichlet random walks on infinite trees, \textit{J. Phys. A: Math. Theor.} {\bf 51} (2018), 075303.
 
 \bibitem[HiSi]{HiSi} P. D. Hislop, I. M. Sigal, Introduction to Pectral Theory with Applications to Schr\"odinger Operators, Applied mathematical Sciences {\b 113}, Springer 1996.

\bibitem[HSS]{HSS} Y. Higuchi, E. Segawa, A. Suzuki, Spectral mapping theorem of an abstract quantum walk, \textit{Quantum Inf Process} {\bf 18}, (2019), 333.

\bibitem[H]{H}  A. S. Holevo, Statistical Structure of Quantum Theory, Lecture Notes in Physics Monographs, 67, 2001

\bibitem[J1]{J1} Joye, A.: Density of States and Thouless Formula
for Random Unitary Band Matrices, {\it Ann. Henri Poincar\'e} {\bf 5},
347--379, (2004).

\bibitem[J2]{J2} A. Joye, Fractional Moment Estimates for Random Unitary Band Matrices, {\it Lett. Math. Phys.} {\bf 72}, 51–64 (2005)


\bibitem[J3]{J3} A.Joye: Dynamical Localization for $d$-Dimensional Random Quantum Walks,
 {\it Quantum Inf. Proc.},  Special Issue: Quantum Walks, {\bf 11}, 1251-1269,  (2012).
 
 \bibitem[J4]{J4} A. Joye, Dynamical Localization of Random Quantum Walks on the Lattice. In
{\it XVIIth International Congress on Mathematical Physics, Aalborg, Denmark, 6-11
August 2012}, A. Jensen, Edt., World Scientific (2013) 486-494.



\bibitem[JMa]{JMa} A.Joye, L.Marin : "Spectral Properties of Quantum Walks on Rooted Binary Trees", {\it J. Stat. Phys.},  {\bf 155},  1249-1270, (2014).

\bibitem[JMe]{JM} A.Joye and M.Merkli: "Dynamical Localization of Quantum Walks in Random Environments",
{\it J. Stat. Phys.}, {\bf 140}, 1025-1053, (2010).

\bibitem[KFC+]{KFC+} M. Karski, L. F\"orster, J.M. Chioi, A. Streffen, W. Alt, D. Meschede, A. Widera, Quantum Walk in Position Space with Single Optically Trapped Atoms, {\it Science}, {\bf 325}, 174-177, (2009).
 
 \bibitem[Ka]{Ka} Kato, T., Perturbation Theory for Linear Operators (Springer-Verlag Berlin Heidelberg New York 1980).
 
 \bibitem[Ke]{Ke} J. Kempe, Quantum random walks: an introductory overview, {\it Contemp. Phys.}, {\bf 44}, 307–327, (2003).
 
 \bibitem[KM1]{KM1} K\"ummerer B., Maassen H., An ergodic theorem for quantum counting processes. {\it J. Phys. A} {\bf 36},
 2155 (2003).
 
 \bibitem[KM2]{KM2} K\"ummerer B., Maassen H., A pathwise ergodic theorem for quantum trajectories. {\it J. Phys. A} {\bf 37},
 11889–11896 (2004). 
 
 \bibitem[Kr]{Kr} K. Kraus, "States, Effects and Operations: Fundamental Notions of Quantum Theory", Edt. by A. B\"ohm, J. D. Dollard, W. H. Wootters, {\it Lecture Notes in Physics} {\bf 190}, Springer,1983.
 
 \bibitem[KS]{KS} Koralov, L. B., Sinai Y., G. Theory of Probability and Random Processes, Springer, 2007.
 
  \bibitem[Ko1]{Ko1} N. Konno, {\it J. Math. Soc. Japan},  {\bf 57}, 1179-1195, (2005).

 \bibitem[Ko2]{Kon} N. Konno, Quantum walks. In {\it Quantum potential theory}, 309–452. Springer, 2008.
 
 
 \bibitem[Kos]{Ko} Koshovets, I.A.: Unitary Analog of the Anderson
Model. Purely Point Spectrum, {\it Theoret. and Math. Phys.} {\bf
89}, 1249--1270 (1992).

\bibitem[KMOR]{KMOR} Krovi, H., Magniez, F., Ozols, M., Roland, J., Quantum walks can find a marked element on any graph, {\it Algorithmica}, {\bf 74}, 851–907 (2015).

\bibitem[LS]{LS} L.J. Landau, R. F. Streater,  On Birkhoff’s Theorem for Doubly Stochastic Completely Positive Maps of Matrix Algebras, {\it Linear Algebra Appl.},  {\bf 193}, 107-127 (1993).

 \bibitem[MS-B]{MS-B} L. Marin, H. Schulz-Baldes, Scattering zippers and their spectral theory. {\it J.
Spectral Theory}, {\bf 3}, 47–82, 2013.
 
 \bibitem[MW]{MW} C.B. Mendl, M.M. Wolf, Unital Quantum Channels - Convex Structure and Revival of Birkhoff's Theorem, {\it Comm. Math. Phys.}, {\bf 289}, 1057-1086, (2009)
 
 \bibitem[N]{N} Norris, J. R.: Markov chains. Cambridge University Press, 1997.
 
 \bibitem[P]{P} Portugal, R. Quantum Walks and Search Algorithms, Springer 2013. 


\bibitem[P-GWPR]{PGWPR} D. Perez-Garcia, M. Wolf, D. Petz, M.-B. Ruskai, Contractivity of positive and trace preserving maps under $Lp$ norms, \textit{J. Math. Phys.} 47, 083506 (2006)

\bibitem[QMS]{QMS} X. Qiang, S. Ma, H. Song, Review on Quantum Walk Computing: Theory, Implementation, and Application, {\it ArXiv} arXiv:2404.04178, (2024)

\bibitem[R]{R} R. Raqu\'epas, On fermionic walkers interacting with a correlated structured environment, {\it Lett. Math. Phys.}, {\bf 110}, 121–145, (2020).

\bibitem[RST1]{RST1} S. Richard, A. Suzuki, R. Tiedra de Aldecoa, Quantum walks with an anisotropic coin I: spectral theory,
 {\it Lett. Math. Phys.} {\bf 108}, 331-357, (2018)
 
\bibitem[RST2]{RST2} S. Richard, A. Suzuki, R. Tiedra de Aldecoa, Quantum walks with an anisotropic coin II: scattering theory,
{\it Lett. Math. Phys.} {\bf 109}, 61-88, (2019) 


\bibitem[RT]{RT} S. Richard, R. Tiedra de Aldecoa, Decay Estimates for Unitary Representations with Applications to Continuous and Discrete Time Models, {\it Ann. H. Poincar\'e}, {\bf 24}, 1-36, (2022) 

\bibitem[SS-B]{SS-B} Sadel, C., Schulz-Baldes, H. Topological boundary invariants for Floquet systems and quantum walks. {\it Math. Phys. Anal. Geom.} (2017) 20–22.

\bibitem[Sa]{Sa} M. Santha, Quantum walk based search algorithms. In {\it International Conference on Theory and Applications of Models of Computation, TAMC 2008}, 31–46. Springer, 2008.

\bibitem[Sc]{Sc} R. Schrader. Perron-Frobenius theory for positive maps on trace ideals. {\it Fields Inst. Commun.}, {\bf 30}, 107–182, 2006.

\bibitem[ST]{ST} J. Shapiro and C. Tauber, Strongly Disordered Floquet Topological Systems,  {\it Ann. H. Poincar\'e},  {\bf 20}, 1837-1875, (2019)

\bibitem[Si]{S} B. Simon, Operator Theory, A Comprehensive Course in Analysis, Part 4, (American Mathematical Society 2015).

\bibitem[Sz]{Sz} Szegedy, M., Quantum speed-up of Markov chain based algorithms. In: Proc. 45th An-
nual IEEE Symp. Found. Comput. Sci., FOCS ’04, pp. 32–41. IEEE Computer Society,
Washington, DC, USA (2004)

\bibitem[T]{T} R. Tiedra de Aldecoa, Spectral and scattering properties of quantum walks on homogenous trees of odd degree,
{\it Ann. Henri Poincar\'e},  {\bf 22} , 2563-2593, (2021)

\bibitem[TMT]{TMT} M. Tamura, T. Mukaiyama, K. Toyoda, Quantum walks of a phonon in trapped ions, {\it Phys. Rev. Lett.}, {\bf 124}, 200501, (2020).

\bibitem[V-A]{V-A} Venegas-Andraca, S. E., ‘‘Quantum walks: A comprehensive review,’’ {\it Quantum Information Processing}, {\bf 11}, 1015–1106 (2012).

\bibitem[WM]{WM} J. Wang, K. Manouchehri. Physical implementation of quantum walks. Springer, 2013.

\bibitem[Wa]{Wa} J. Watrous, The Theory of Quantum Information. Cambridge University Press, 2018. 

\bibitem[Wo]{Wo} W. Woess, Denumerable Markov Chains, European Mathematical Society, 2009.

\bibitem[ZKG+]{ZKG+} F. Z\"ahringer, G. Kirchmair, R. Gerritsma, E. Solano, R. Blatt, and C. Roos, Realization of a quantum walk with one and two trapped ions, {\it Phys. Rev. Lett.}, {\bf 104}, 100503, (2010).

\end{thebibliography}
\end{document}